\renewenvironment{framed}{%
 \def\FrameCommand##1{\hskip\@totalleftmargin
 \fboxsep=\FrameSep\fbox{##1}
     \hskip-\linewidth \hskip-\@totalleftmargin \hskip\columnwidth}%
 \MakeFramed {\advance\hsize-\width
   \@totalleftmargin\z@ \linewidth\hsize
   \@setminipage}}%
 {\par\unskip\endMakeFramed}
\newtheorem{theorem}{Theorem}
\newtheorem{lemma}{Lemma}
\newtheorem{claim}{Claim}
\newtheorem{definition}{Definition}
\newcommand{\ignore}[1]{}
\newcommand{\Expect}{\operatorname{E}}
\newcommand{\Var}{\operatorname{Var}}
\newcommand{\Cov}{\operatorname{Cov}}
\newcommand{\Prob}{\operatorname{Pr}}
\newcommand{\bydef}{\stackrel{\rm def}{=}}
\newcommand{\sqbrack}[1]{\left[ #1 \right]}
\newcommand{\ceil}[1]{\left\lceil #1 \right\rceil}
\newcommand{\floor}[1]{\left\lfloor #1 \right\rfloor}
\newcommand{\f}[2]{\frac{#1}{#2}}
\newcommand{\fr}[2]{\mbox{$\frac{#1}{#2}$}}
\newcommand{\poly}{{\operatorname{poly}}}
\newcommand{\dist}{\operatorname{dist}}
\newcommand{\Color}{\operatorname{Color}^\star}
\newcommand{\degree}{\operatorname{deg}}
\newcommand{\size}{\operatorname{size}}
\newcommand{\rb}[2]{\raisebox{#1 mm}[0mm][0mm]{#2}}
\newcommand{\istrut}[2][0]{\rule[- #1 mm]{0mm}{#1 mm}\rule{0mm}{#2 mm}}
\newcommand{\hcm}[1][1]{\hspace*{#1 cm}}
\newcommand{\R}{$\mathsf{R}$}
\newcommand{\Rand}{\mathsf{Rand}}
\newcommand{\Det}{\mathsf{Det}}
\newcommand{\RandLOCAL}{\mathsf{RandLOCAL}}
\newcommand{\DetLOCAL}{\mathsf{DetLOCAL}}
\newcommand{\ID}{\operatorname{ID}}
\newcommand{\LOCAL}{\mathsf{LOCAL}}
\newcommand{\diam}{\mathsf{diameter}}
\newcommand{\parent}{\operatorname{parent}}
\newcommand{\PSLOCAL}{\mathsf{PSLOCAL}}
\newcommand{\Lovasz}{Lov\'{a}sz}
\newcommand{\rake}{{\sf Rake}}
\newcommand{\compress}{{\sf Compress}}
\newcommand{\vbl}{\operatorname{vbl}}
\newcommand{\variable}{\mathcal{V}}
\newcommand{\eventA}{\tilde{F}}
\newcommand{\eventB}{F}
\newcommand{\eventC}{H}
\newcommand{\degA}[1]{{\degree_{#1}'}}
\newcommand{\degB}{\widehat{\degree}}
\newcommand{\TLLL}{T_{\operatorname{LLL}}}
\newcommand{\mindeg}{{\Delta}_{\min}}
\newcommand{\maxdeg}{{\Delta}_{\max}}
\newcommand{\oneshot}{\textsf{One-Shot-Coloring}}
\title{Distributed Edge Coloring with Small Palettes and\\
a Special Case of the Constructive \Lovasz{} Local Lemma\thanks{A preliminary version of this paper was presented at 29th Annual ACM-SIAM Symposium on Discrete Algorithms, SODA 2018, New Orleans, LA, USA, January 7--10, 2018.
Supported by NSF grants CCF-1514383 and CCF-1637546
and ERC Grant No. 336495 (ACDC).}}
\author{Yi-Jun Chang\\ University of Michigan
\and
Qizheng He\\
IIIS, Tsinghua University
\and
Wenzheng Li\\
IIIS, Tsinghua University
\and
Seth Pettie\\
University of Michigan
\and
Jara Uitto\\
ETH Z\"{u}rich \& University of Freiburg
}
\begin{document}
\date{}
\maketitle
\thispagestyle{empty}
\setcounter{page}{0}
\begin{abstract}
The complexity of distributed edge coloring depends heavily on the
\emph{palette size} as a function of the maximum degree $\Delta$.  In this paper we explore
the complexity of edge coloring in the $\LOCAL$ model in different palette size regimes.
Our results are as follows.

\begin{description}
\item[Lower Bounds:] First, we simplify the \emph{round elimination} technique of Brandt et al.~\cite{BrandtEtal16}
and prove that $(2\Delta-2)$-edge coloring requires $\Omega(\log_\Delta \log n)$ time w.h.p.~and
$\Omega(\log_\Delta n)$ time deterministically, \emph{even on trees}.  
Second, we show that a natural approach to computing $(\Delta+1)$-edge colorings (Vizing's theorem)
via extending partial colorings by iteratively re-coloring parts of the graph 
in the worst case requires recoloring
subgraphs of diameter $\Omega(\Delta\log n)$.

\item[Upper Bounds on General Graphs:]  We give a randomized edge coloring algorithm that can use palette sizes as small as
$\Delta + \tilde{O}(\sqrt{\Delta})$, which is a natural barrier for randomized approaches.
Our algorithm employs a permissive version of the constructive \Lovasz{} local lemma as a black box. The runtime of algorithm varies for different choices of $\Delta$ and palette size. For example, our algorithm computes a $(1+\epsilon)\Delta$-edge coloring in  $O(\log n)$ time when  $\epsilon \geq (\log^3 \Delta) / \sqrt{\Delta}$, or $O(\log_{\Delta} n) + (\log \log n)^{3 + o(1)}$ time when $\epsilon = \Omega(1)$.

\item[Upper Bounds on Trees:] 
We show that the $\Omega(\log_\Delta \log n)$ lower bound can be nearly matched on trees.
To establish this result,
we develop a new distributed \Lovasz{} local lemma algorithm for \emph{tree-structured
dependency graphs}. Specifically, our $(1+\epsilon)\Delta$-edge coloring algorithm for
trees takes
$O(\log(1 / \epsilon)) \cdot \max\{\fr{\log\log n}{\log\log\log n},\, \log_{\log \Delta} \log n\}$ time when  $\epsilon \geq (\log^3 \Delta) / \sqrt{\Delta}$, or
$O\left( \max\{\fr{\log\log n}{\log\log\log n},\, \log_{\Delta} \log n\}\right)$ time  when $\epsilon = \Omega(1)$.
\end{description}
\end{abstract}

\noindent\rule{\textwidth}{0.4pt}

{\small \color{purple}
\paragraph{Erratum.}
In the previous version of the paper, Theorem~\ref{thm:tree-decomp2} claimed a strong-diameter $(1,O(\log_{\lambda/k} s + d/k),O(\lambda^2),0)$-network decomposition of $T^k$. We thank Sebastian Brandt and Ananth Narayanan for identifying errors in the proof and bringing them to our attention. We have corrected the proof, at the cost of weakening the guarantee to a weak-diameter $(1,O(\log_{\lambda/k} s + d/k),O(\lambda^2),1)$-network decomposition of $T^k$. This change does not affect any application of the theorem in the paper. In particular, all claimed bounds for edge coloring and the distributed \Lovasz{} local lemma remain unchanged.
}

\newpage

\tableofcontents
\setcounter{page}{0}
\thispagestyle{empty}
\newpage

\section{Introduction}\label{sect:intro}

In this paper, we consider the complexity of the edge coloring problem in the well-known
$\LOCAL$ model of distributed computation~\cite{Linial92,Peleg00}.
A {\em $k$-edge coloring} of a graph $G=(V,E)$
is a function $\phi : E \rightarrow \{1,\ldots,k\}$ such that edges sharing
an endpoint are colored differently; the parameter $k$ is called the \emph{palette size}.
The distributed complexity of computing a $k$-edge coloring depends heavily on the value of $k$ relative
to the maximum degree $\Delta$, 
and whether vertices can generate random bits.

\paragraph{The $\LOCAL$ Model.}
In the $\LOCAL$ model,
the input graph $G=(V,E)$ is identical to the underlying distributed network;
vertices are identified with processors and edges with bi-directional communication links;
\emph{time} is divided into synchronized rounds, and in each round each processor can perform
unlimited computation and communicate an unbounded-length message to each of its neighbors,
which is delivered before the next round.  Depending on the problem the vertices may carry
additional input labels.  The output of a $\LOCAL$ algorithm is typically a labeling of $V$ or $E$ satisfying some constraints.

For clarity, we bifurcate the $\LOCAL$ model
into $\RandLOCAL$ and $\DetLOCAL$ depending on whether random bits are
available.
In the $\RandLOCAL$ model the output labeling is correct w.h.p. (i.e., $1-1/\poly(n)$).
In the $\DetLOCAL$ model each vertex is assigned a unique $O(\log n)$-bit ID; the output labeling must always
be correct.

We assume each $v\in V$ initially knows $\deg(v)$, a port-numbering of its incident edges,
and global parameters such as $n=|V|$ and $\Delta = \max_{v \in V}  \deg(v)$, or upper bounds on them if   the exact parameters are not common knowledge. The assumption that global parameters are common knowledge can sometimes be removed; see Korman, Sereni, and Viennot~\cite{KormanSV13}.

\paragraph{Distributed Coloring.}
The two primary design objectives for distributed coloring algorithms are (i) minimizing the number of colors (palette size) and (ii) minimizing the number of rounds (time complexity). 
A modest standard for \emph{efficient algorithm} in the $\LOCAL$ model is $O(\poly \log n)$ time.  
However, there are now many examples
of locally checkable labeling problems with $O(\poly(\log\log n))$ randomized 
complexity~\cite{ChangKP19,PettieS15,GhaffariS17,GhaffariHKM18,GhaffariHK17,harris2018distributed}, 
and in some circumstances, 
$O(\log^* n)$ complexity~\cite{Linial92,SchneiderW10,ElkinPS15,ChangLP18}.

For the case of \emph{vertex coloring}, it is well-known that a $(\Delta+1)$-vertex coloring can be found in $O(\log n)$ time~\cite{Luby86,alon1986fast} in $\RandLOCAL$ or $2^{O(\sqrt{\log n})}$ time in $\DetLOCAL$~\cite{PanconesiS96}. The randomized complexity was recently improved to $O(\sqrt{\log \Delta}) + 2^{O(\sqrt{\log \log n})}$~\cite{HarrisSS16}, and then to $2^{O(\sqrt{\log \log n})}$~\cite{ChangLP18}.

These randomized upper bounds imply that a vertex coloring with palette size $\Delta+1$ can be computed efficiently. In general, the palette size of  $\Delta+1$ cannot be further reduced since there exists a graph (a complete graph with $\Delta+1$ vertices) that cannot be $\Delta$-colored.

\paragraph{Edge Coloring.} The case of  \emph{edge coloring} is more complicated.
Edge coloring can be interpreted as a \emph{vertex} coloring problem on the \emph{line graph} $L(G)$, in which
edges becomes vertices and two edges are adjacent if they share an endpoint; the line graph has maximum degree
$\hat{\Delta} = 2\Delta-2$.
Therefore, an edge coloring with palette size $\hat{\Delta}+1 = 2\Delta-1$ can be computed efficiently. The current state-of-the-art for 
$(2\Delta-1)$-edge coloring is 
$\tilde{O}(\log^3 \log n)$ time\footnote{Here $\tilde{O}(f(n)) = O(f(n)\cdot \poly(\log f(n)))$.} for all $\Delta$~\cite{harris2018distributed},
and $O(\log^* n)$ time when $\Delta > \log^{1+o(1)} n$~\cite{ElkinPS15}.
Vizing's theorem~\cite{Vizing64} guarantees the existence of a $(\Delta+1)$-edge coloring for all graphs; but it is unknown
whether such a coloring can be efficiently computed in $\LOCAL$.

The number ``$2\Delta-1$''    is the \emph{smallest} palette size with the property
that any partial edge coloring can be extended to a total coloring, by the trivial greedy algorithm.
Below the {\em greedy threshold} $2\Delta-1$, iterative coloring algorithms must be more careful in how they proceed.
In particular, at intermediate stages in the algorithm, edges must keep their available
palettes relatively large compared to the size of their uncolored neighborhood.

Using the \emph{R\"{o}dl nibble} technique, Dubhashi, Grable, and Panconesi~\cite{DubhashiGP98}
gave a $\RandLOCAL$ algorithm for $(1+\epsilon)\Delta$-edge coloring
in $O(\log n)$ time, provided that $\Delta$ is sufficiently large,
e.g., even when $\epsilon$ is constant, $\Delta > (\log n)^{1+\gamma}$.
Elkin, Pettie, and Su~\cite{ElkinPS15} gave $\RandLOCAL$ algorithms for $(1+\epsilon)\Delta$-edge coloring
that are faster when $\Delta$ is large and work for \emph{all} $\Delta$ via a reduction to the distributed
\Lovasz{} local lemma (LLL).
The $(1+\epsilon)\Delta$-edge coloring problem is solved in $O(\log^* n) \cdot \ceil{\frac{\log n}{\Delta^{1-o(1)}}}$ time.
The running time of the Dubhashi-Grable-Panconesi and Elkin-Pettie-Su algorithms depend \emph{polynomially}
on $\epsilon^{-1}$.  In both algorithms it is clear that $\epsilon$ need not be constant, but it is not self-evident
how small it can be made \emph{as a function of $\Delta$}.

The $\ceil{\frac{\log n}{\Delta^{1-o(1)}}}$-factor in the time complexity of \cite{ElkinPS15} is due to the Chung-Pettie-Su LLL
algorithm~\cite{ChungPS17}, which holds for all $\Delta$.  
The Ghaffari-Harris-Kuhn~\cite{GhaffariHK17}
and Fischer-Ghaffari~\cite{FischerG17} LLL algorithms
are faster when $\Delta = (\log n)^{o(1)}$; see Section~\ref{sect:previouswork}
and Table~\ref{table:LLL}.

\paragraph{New Results.} In this paper, we present new upper and lower bounds on the complexity of edge coloring in the regimes between palette
size $\Delta+1$ and $2\Delta-2$, i.e., strictly below the ``greedy'' threshold $2\Delta-1$.

From the lower bound side, 
we prove that  $(2\Delta-2)$-edge coloring requires $\Omega(\log_\Delta \log n)$ time w.h.p.~and
$\Omega(\log_\Delta n)$ time deterministically, \emph{even on trees}.  This result is attained via the  \emph{round elimination} technique of Brandt et al.~\cite{BrandtEtal16}.
Second, we consider a natural approach to computing $(\Delta+1)$-edge colorings (Vizing's theorem)
via extending partial colorings by iteratively re-coloring parts of the graph via ``alternating paths.''
We prove that this approach may be viable, but in the worst case requires recoloring
subgraphs of diameter $\Omega(\Delta\log n)$.
This stands in contrast to distributed algorithms for Brooks' theorem~\cite{PanconesiS95},
which exploit the existence of $O(\log_\Delta n)$-length alternating paths.

From the upper bound side,
we give an efficient randomized edge coloring algorithm that can use palette sizes as small as
$\Delta + \tilde{O}(\sqrt{\Delta})$, which is a 
natural barrier for randomized approaches.
Notice that with a palette of size $\Delta + \Theta(\sqrt{\Delta})$,
we have a constant probability
of being able to color an arbitrary edge $e$, 
given a \emph{random} feasible coloring of its neighborhood.
Edge coloring with this palette size was achieved in 1987 by Karloff and Shmoys~\cite{KarloffS87}
in the context of parallel (PRAM) algorithms, but has not been achieved in the $\LOCAL$ model before.
We also show that the $\Omega(\log_\Delta \log n)$ lower bound can be nearly matched on trees by developing a new distributed LLL algorithm for \emph{tree-structured
dependency graphs}. 

\subsection{Tools}

Randomized distributed algorithms in the $\LOCAL$ model are often composed of iterations of $O(1)$-round routines
that commit to a partial labeling~\cite{BEPS16,DubhashiGP98,ElkinPS15,PettieS15}.
A vertex may proceed to the next iteration only 
if it satisfies some property or invariant, 
which typically holds with probability $1/\poly(\Delta)$.

\paragraph{Graph Shattering.} In the  graph shattering framework~\cite{BEPS16,Beck91} of algorithm design, the \emph{bad vertices} that violate the require property are temporarily \emph{removed from consideration} in the subsequent iterations of the randomized algorithm. 
If it can be shown that at the end of the randomized algorithm, the connected components induced by the bad vertices have size at most $O(\poly(\log n))$,
one can revert to the best available {\em deterministic} algorithm and solve the problem on each component of the ``shattered'' graph in parallel. The randomized part is called the {\em pre-shattering} phase; the deterministic part is  called the {\em post-shattering} phase.

In some applications we cannot tolerate the existence of a bad vertex. 
For example, when the palette size is below the greedy threshold $2\Delta - 1$, not all partial edge coloring can be extended to a total edge coloring.
In this case, we need to resort to a distributed
\Lovasz{} local lemma (LLL) algorithm, 
which can guarantee a global success 
(i.e., there is no bad vertex) with probability
$1-1/\poly(n)$ (using a randomized LLL algorithm) or even 1 (using a deterministic LLL algorithm).\footnote{However, applying an 
LLL algorithm does not mean we have circumvented the graph 
shattering method! All known 
distributed LLL algorithms with 
a sublogarithmic dependence on $n$
(\cite{FischerG17,GhaffariHK17} and Section~\ref{sect:tree-LLL})
use graph shattering internally.  One interpretation of
Chang, Kopelowitz, and Pettie's derandomization~\cite[Theorem 3.1]{ChangKP19}
is that graph shattering is intrinsic to fast randomized algorithms
in the $\LOCAL$ model, and cannot be completely avoided.}

\paragraph{\Lovasz{} Local Lemma.} Consider a set of independent random variables $\mathcal{V}$ and a set of \emph{bad}
events $\mathcal{E}$, where each $A \in \mathcal{E}$
depends on a subset $\vbl(A)\subset \mathcal{V}$.
Define the dependency graph as $G_{\mathcal{E}} = (\mathcal{E},\, \{(A,B)~ | ~\vbl(A)\cap \vbl(B)\neq\emptyset)\})$.
Symmetric versions of the \Lovasz{} local lemma are stated in terms of
$d$, the maximum degree in $G_{\mathcal{E}}$, and $p = \max_{A\in \mathcal{E}} \Pr[A]$.
A standard version of the LLL says that if $ep(d+1) < 1$ then
$\Pr[\cap_{A \in \mathcal{E}}\overline{A}]>0$, i.e.,
it is \emph{possible} to avoid all bad events.
The constructive LLL problem is to assign values to all variables in $\mathcal{V}$ such that no event in $\mathcal{E}$ happens.

\paragraph{Distributed \Lovasz{} Local Lemma.}
In the \emph{distributed} LLL problem the communications network is identical to $G_{\mathcal{E}}$.  Every node
$A$ is identified with an event, which is aware of the distribution on the random variables $\vbl(A)\subseteq \mathcal{V}$.
The goal is to collectively assign values to all variables in $\mathcal{V}$ such that no event in $\mathcal{E}$ happens.

In distributed coloring algorithms it is typical to see $d = \poly(\Delta)$ and $p = \exp(-d^{\Omega(1)})$,
i.e., \emph{any} polynomial LLL criterion of the form $p(ed)^c < 1$ where $c=O(1)$ is good enough.
Chung, Pettie, and Su~\cite{ChungPS17} provided an $O(\log_{1/epd^2} n)$ time algorithm under the LLL criterion $epd^2 < 1$.  This remains the fastest distributed LLL algorithm under a polynomial criterion when $d$ is arbitrary.  There are faster LLL algorithms~\cite{FischerG17,GhaffariHK17} 
when $d$ is small, and slower LLL algorithms~\cite{ChungPS17,Ghaffari16} 
under the stricter
criterion $ep(d+1)<1$; see Section~\ref{sect:previouswork} and Table~\ref{table:LLL}.

\subsection{New Lower Bounds}\label{sect:newresultsLB}

\paragraph{Round Elimination.}
In Section~\ref{sect:lowerbound}, 
we show a lower bound on $(2\Delta-2)$-edge coloring
that follows the same lines as Brandt et al.'s~\cite{BrandtEtal16}
lower bound on $\Delta$-vertex coloring.  Both proofs establish
hardness for a coloring problem by reduction from \emph{sinkless orientation},\footnote{Orient the edges of the (undirected)
input graph so that no vertex is a sink.}
but one subtlety here is that we are dealing with 
\emph{two irreconcilable versions} of sinkless orientation.  
Brandt et al.~\cite{BrandtEtal16} 
prove that sinkless orientation on a graph 
\emph{that comes equipped
with a $\Delta$-edge coloring}
is reducible to $\Delta$-vertex coloring on the same graph.  
Hence, any lower bound on sinkless orientation (that is aware of the edge coloring) 
extends to $\Delta$-vertex coloring.  We show that sinkless orientation
on a bipartite graph that comes equipped with
(i) a 2-vertex coloring, and (ii) a $(2\Delta-1)$-edge coloring,
is reducible to $(2\Delta-2)$-edge coloring on the same graph.
We then prove that this version of sinkless orientation takes
$\Omega(\log_\Delta\log n)$ time in $\RandLOCAL$ and $\Omega(\log_\Delta n)$ time in $\DetLOCAL$, matching~\cite{BrandtEtal16,ChangKP19}.\footnote{
It is impossible to reconcile these two versions of sinkless orientation. 
The problem can be solved \emph{without communication}, 
given a 2-vertex coloring and a $k$-edge
coloring for any $k\in [\Delta,2\Delta-2]$.}

Roughly speaking, the idea of Brandt et al.~\cite{BrandtEtal16} is to convert any randomized $t$-round algorithm with local error probability $p$
into a $(t-1)$-round algorithm with error probability $\approx p^{1/\Delta}$.  By iterating the procedure they obtain
a 0-round algorithm with error probability $\approx p^{\Delta^{t}}$. If any 0-round algorithm must have constant probability of failure, then
$t = \Omega(\log_\Delta\log p^{-1})$.  By setting $p=1/\poly(n)$ we get $\Omega(\log_\Delta\log n)$ $\RandLOCAL$ lower bounds for some problems, e.g., sinkless orientation.

Our proof uses a simplified round elimination technique that \emph{appears} to give quantitatively worse bounds,
but which can be \emph{automatically} strengthened to match those of~\cite{BrandtEtal16}.
Rather than try to shave one round off the running time of \emph{every} processor, it is significantly simpler
to do it piecemeal, which leads us to the useful concept of an \emph{irregular} time profile.
Suppose that the graph is initially $(2\Delta-1)$-edge colored.
An algorithm has irregular time profile 
$\mathbf{t} = (t_1,\ldots,t_{2\Delta-1})$ if edges with input color
$i$ choose their output color by examining only their $t_i$-neighborhood.
In our round-elimination method, 
we show that any algorithm with time
profile $(\underbrace{t,t,\cdots,t}_{i}, \underbrace{t-1,\cdots, t-1}_{(2\Delta-1)-i})$ and error probability $p$
can be transformed into one with time profile
$(\underbrace{t,t,\cdots,t}_{i-1}, \underbrace{t-1,\cdots, t-1}_{(2\Delta-1)-i+1})$
and error probability $O(p^{1/3})$, 
\emph{only} by changing the algorithm for edges initially colored $i$.
By iterating this process we arrive at $\Omega(\Delta^{-1}\log\log p^{-1})$ lower bounds, which has a weaker
dependence on $\Delta$ than~\cite{BrandtEtal16}.
By following the proofs of Chang, Kopelowitz, and Pettie~\cite{ChangKP19}, any randomized lower bound
of this type implies $\Omega(\log_\Delta n)$ lower bounds in $\DetLOCAL$~\cite[Theorem 5]{ChangKP19}, which then implies
$\Omega(\log_\Delta \log n)$ lower bounds in 
$\RandLOCAL$~\cite[Theorem 3]{ChangKP19}.

\paragraph{Lower Bound for Distributed Vizing's Theorem.} Suppose that a distributed $(\Delta+1)$-edge coloring algorithm
begins with a partial coloring and iteratively recolors subgraphs, always increasing the subset of colored edges.
If this algorithm works correctly given \emph{any} partial coloring,
we prove In Section~\ref{sect:Vizing} that it takes $\Omega(\Delta\log n)$ time in the $\LOCAL$ model, with or without randomization.
More generally, any $(\Delta+c)$-coloring that is based on recoloring subgraphs takes $\Omega(\frac{\Delta}{c}\log n)$ time.  This establishes
a quantitative difference between the ``locality'' of Vizing's theorem and Brooks' theorem~\cite{PanconesiS95}.

\paragraph{Subsequent Work.} Subsequent to the initial publication of this work~\cite{ChangHLPU18}, Ghaffair, Kuhn, Maus, Uitto~\cite{GhaffariKMU18} showed that a $\Delta +O(\log n\cdot \log(2 + \Delta/\log n))$-edge coloring can be computed in $\DetLOCAL$ in $O(\poly (\log n, \Delta))$ rounds. Very recently, Su and Vu~\cite{SuV19} improved this bound and showed that in $O(\poly (\log n, \Delta))$ rounds, it is possible to compute a $\Delta + O(\log_\Delta n)$-edge coloring in $\DetLOCAL$ or a 
$(\Delta+2)$-edge coloring in $\RandLOCAL$, 
which is only one color away from Vizing's theorem.
All these upper bounds have time complexity of the form $O(\poly (\log n, \Delta))$. It is still an intriguing open question as to whether an edge coloring with palette size significantly smaller than 
$\Delta + \tilde{O}(\sqrt{\Delta})$ can be computed in $O(\poly \log n)$ time, regardless of $\Delta$.

\subsection{New Upper Bounds}\label{sect:newresultsUB}

\paragraph{Upper Bounds on General Graphs.} The $(1+\epsilon)\Delta$-edge coloring algorithms of~\cite{DubhashiGP98,ElkinPS15} are slow (with a polynomial dependence on $\epsilon^{-1}$)
and have limits on how small $\epsilon$ can be, as a function of $\Delta$. In Section~\ref{sect:randAlg}, we prove that the most ``natural'' randomized algorithm (\oneshot)
converges exponentially faster with $\epsilon^{-1}$ and can achieve palette sizes close to the minimum of
$\Delta + \tilde{O}(\sqrt{\Delta})$ allowed by the nibble method.
In particular, for any $\epsilon = \tilde{\Omega}(1/\sqrt{\Delta})$,
we show that $(1+\epsilon)\Delta$-edge coloring is
reducible to $O(\log\epsilon^{-1})$ instances of the \Lovasz{} local lemma with local failure
probability $p = \exp(-\epsilon^2 \Delta^{1-o(1)})$, plus one instance of $O(\Delta)$-edge coloring, which can be solved quickly using~\cite{BEPS16,ElkinPS15,GhaffariHKMSU17}.
When $\epsilon^2\Delta \gg \log n$ the  local failure
probability is already $1/\poly(n)$; otherwise we can invoke a distributed LLL algorithm~\cite{MoserT10,ChungPS17,FischerG17,GhaffariHK17}.

The running time of our algorithm varies for different choices of $\Delta$ and palette size. It can be shown that our algorithm computes a $(1+\epsilon)\Delta$-edge coloring in  at most $O(\log n)$ time when  $\epsilon \geq (\log^3 \Delta) / \sqrt{\Delta}$, or at most $O(\log_{\Delta} n) + (\log \log n)^{3 + o(1)}$ time when $\epsilon = \Omega(1)$.
These times reflect the use of Chung, Pettie, and Su's LLL algorithm~\cite{ChungPS17}.
Applying one of the Ghaffari-Harris-Kuhn LLL algorithms~\cite{GhaffariHK17}
leads to a $(1+\epsilon)\Delta$-edge coloring algorithm running
in $O(\log\epsilon^{-1}\cdot \Delta^6 + 2^{O(\sqrt{\log\log n})})$ 
time.\footnote{The running time of~\cite{GhaffariHK17} 
is (at least) quadratic in the degree $d$
of the dependency graph, and in our case $d=\Theta(\Delta^3)$.}
Our $(\Delta+\tilde{O}(\sqrt{\Delta}))$-edge coloring
algorithm is simple, but tricky to analyze, and requires a general distributed LLL algorithm to be made efficient.  
Resolving the complexity of the distributed LLL problem is a major open problem~\cite{ChangP19} 
but one that is unlikely to be completely settled any time soon,
given its connection to computing general network decompositions~\cite{FischerG17,GhaffariKM17}.

\paragraph{Upper Bounds on Trees.} 
There is still a significant gap between the upper bound of our $(1+\epsilon)\Delta$-edge coloring $\RandLOCAL$ algorithm on general graphs in Section~\ref{sect:randAlg} and our $\Omega(\log_{\Delta} \log n)$ $\RandLOCAL$ lower bound in Section~\ref{sect:lowerbound}, which applies even to trees.
We prove that this lower bound can be matched when the underlying network
is a tree, 
at least when $\epsilon=\Omega(1)$ and $\Delta < \poly(\log\log n)$.
In particular, 
our $(1+\epsilon)\Delta$-edge coloring algorithm for trees takes
$O(\log(1 / \epsilon)) \cdot 
\max\left\{\fr{\log\log n}{\log\log\log n},\, \log_{\log \Delta} \log n\right\}$ time when  $\epsilon \geq (\log^3 \Delta) / \sqrt{\Delta}$, or
$O\left( \max\left\{\fr{\log\log n}{\log\log\log n},\, \log_{\Delta} \log n\right\}\right)$ time  when $\epsilon = \Omega(1)$.

This improvement is achieved by developing a new distributed LLL algorithm
 for \emph{tree structured} dependency graphs, which appears in Section~\ref{sect:tree-LLL}.
 Specifically, if $T=(V,E)$ is a tree and $r=O(1)$, we say that $T^r = (V,\{(u,v) \;|\; \dist_T(u,v)\le r\})$ is \emph{tree-structured}.
 This type of dependency graph arises naturally when we 
 run $O(1)$-round probabilistic algorithms on trees.
 
 Our new LLL algorithm is based on the graph shattering framework. 
 We first apply a randomized algorithm that fixes the output of most of the vertices such that each connected component of the remaining part of the graph is small.  We then apply a new deterministic LLL algorithm for tree-structured instances to each component in parallel.
 
Fischer and Ghaffari~\cite{FischerG17} showed that one can obtain a $\DetLOCAL$ LLL algorithm using a \emph{network decomposition} algorithm as a black box.
Based on this idea,
we give a deterministic $O(\max\{\log_\lambda n, \log n/\log\log n\})$-time LLL algorithm for tree-structured instances under criterion $p(ed)^\lambda < 1$, $\lambda \ge 2$.
The algorithm is based on two new network decomposition algorithms
for tree-structured graphs, presented in Section~\ref{sect:tree-decomp}.

For the randomized part of the   graph shattering routine,
the goal is to design an algorithm to compute a good partial
assignment $\phi$ such that the connected components induced by the unassigned part of the dependency graph are small.
We give an algorithm for tree-structured instances that achieves this goal in
time $O(\log_\lambda \log n)$, improving the $O(d^2 + \log^* n)$-time shattering routine of~\cite{FischerG17} when $d$ is not too small.
At a high level, our approach is to consider the following process.
First, draw a total assignment $\phi$ to $\mathcal{V}$ according to the distribution of the variables.
Whenever the probability that a bad event $E(v)$ occurs under the current partial assignment $\phi$ is higher than a certain threshold, 
update $\phi$ by \emph{unsetting} all variables in $\vbl(E(v))$.
This can be viewed as a \emph{contagion dynamic} played out on the dependency graph.  Vertices  that have unset their variables are said to have been {\em infected}, and infected vertices can cause nearby neighbors to become infected.
If this contagion process were actually simulated, it would take $\Omega(\log n)$ parallel steps to reach a stable state, which is too slow.
We develop a different method to achieve a stable state 
that is exponentially faster, by avoiding a direct simulation.

By composing these results we obtain a randomized $O(\max\{\log_\lambda \log n,\, \log\log n/\log\log\log n\})$ LLL algorithm for tree-structured instances, when $\lambda$ is at least a sufficiently large constant depending on $r$.
Our upper bound essentially matches the  $\RandLOCAL$ lower bound of Brandt et al.~\cite{BrandtEtal16}, which is of the form $\Omega(\log_{\log p^{-1}} \log n)$ under the LLL criterion $p \cdot f(d) \leq 1$ for any $f(d) \leq 2^d$.

\medskip
A major open problem is to extend this \emph{contagion dynamic} idea
to general dependency graphs, and show that they, too, can be shattered
in $O(\log\log n)$ time.  In light of~\cite{ChangKP19,BrandtEtal16},
this is a necessary first step towards proving
Conjecture 1 from Chang and Pettie~\cite{ChangP19}, 
namely that the $\RandLOCAL$ complexity
of the LLL under a polynomial criterion is $O(\log\log n)$.

\paragraph{Additional Results on Trees.}
In Section~\ref{sect:upperbound} we prove some additional results on the complexity of edge coloring trees. 
We design an $O(\log_\Delta n)$-time $\DetLOCAL$ algorithm for $\Delta$-edge coloring a tree $T$ with maximum degree $\Delta\ge 3$.
A tree is said to be   \emph{oriented} if the tree is rooted and each vertex that is not the root knows its parent. 
We show that a $(\Delta+1)$-edge coloring of an oriented tree can be found in  $O(\log^* n)$ time, 
but $\Delta$-edge coloring  takes $\Omega(\log_\Delta n)$ time.

\paragraph{Remark.}
After the initial publication of this work in~\cite{ChangHLPU18}, we learned that Molloy and Reed~\cite{MolloyR00} also obtained a similar bound 
of $\Delta + O(\sqrt{\Delta} \log^4 \Delta)$ on the palette size for edge coloring.  Their algorithm was more general in that it extends to $k$-uniform hypergraphs (with palette size $\Delta + O(\Delta^{1-1/k} \log^4 \Delta)$) and applies to \emph{list} edge coloring. The main difference between our work and theirs~\cite{MolloyR00} is the analysis. 
We use a concentration bound~\cite[Equation (8.5)]{DubhashiPanconesi09} that takes into account the variance of each variable. The analysis of~\cite{MolloyR00} is based on Talagrand's concentration inequality. Our result is slightly better in terms of the polylog-factor, and it also improves the existential bound on the palette size for list edge coloring. Specifically, if each edge is given a list of $(1+\epsilon)\Delta$ with $\epsilon = \omega((\log^{2.5}\Delta) / \sqrt{\Delta})$ colors, then the graph admits a proper list edge coloring.

\subsection{Related Work}\label{sect:previouswork}
In this section, we walk though the rich history of 
distributed edge coloring and the distributed LLL.

We begin with reviewing
previous edge coloring algorithms;
see Table~\ref{table:edgecoloring} for a summary.
Edge coloring can be interpreted as a \emph{vertex} coloring problem on the \emph{line graph} $L(G)$, which has has maximum degree
$\hat{\Delta} = 2\Delta-2$.  Applied to $L(G)$, Linial's~\cite{Linial92} vertex coloring algorithm will compute an
$O(\hat{\Delta}^2)$-edge coloring in $O(\log^* n - \log^*\hat{\Delta} + 1)$ time.  Using the fastest deterministic
$(\hat{\Delta}+1)$-vertex coloring algorithms~\cite{PanconesiS96,FraigniaudHK16}, $(2\Delta-1)$-edge coloring
is solved in $\min\{2^{O(\sqrt{\log n})},\, \tilde{O}(\sqrt{\Delta}) + O(\log^* n)\}$ time.
Barenboim, Elkin, and Maimon~\cite{BarenboimEM} gave deterministic algorithms for $(2^{k}\Delta)$-edge coloring ($k\ge 2$)
in $\tilde{O}(k\Delta^{1/2k}) + O(\log^\ast n)$ time.

Barenboim, Elkin, Pettie, and Schneider~\cite{BEPS16} proved that $O(\log \Delta)$ iterations of
the natural randomized $(2\Delta-1)$-edge coloring algorithm
effectively \emph{shatters} the graph into uncolored components of $n' = \poly(\log n)$ vertices;
then we can employ a deterministic list coloring algorithm to color these components in
$2^{O(\sqrt{\log n'})} = 2^{O(\sqrt{\log \log n})}$ time~\cite{PanconesiS96}. Thus, the total time complexity is $O(\log \Delta) + 2^{O(\sqrt{\log \log n})}$.

Elkin, Pettie, and Su~\cite{ElkinPS15} proved that when $\Delta > (\log n)^{1+\gamma}$ (for some constant $\gamma$),
$(2\Delta-1)$-edge coloring can be solved in $O(\log^* n)$ time in $\RandLOCAL$.
Recently, Fischer, Ghaffari, and Kuhn~\cite{FischerGK17} proved that $(2\Delta-1)$-edge coloring can be solved in
$O(\log^7 \Delta \log n)$ time in $\DetLOCAL$. 
This bound was later improved to $O(\log^4 \Delta \log^2 n)$ by Ghaffari, Harris, and Kuhn~\cite{GhaffariHK17}, and then to $\tilde{O}(\log^2 \Delta \log n)$ by Harris~\cite{harris2018distributed}.
Together with~\cite{BEPS16} and~\cite{ElkinPS15},
this implies a $\RandLOCAL$ algorithm taking $\tilde{O}(\log^3 \log n)$
time.
Using a slightly larger palette of $(2+\epsilon)\Delta$ colors, $\epsilon > 1/\log\Delta$,
Ghaffari et al.~\cite{GhaffariHKMSU17} 
gave an $O(\epsilon^{-1}\log^2\Delta\log\log\Delta(\log\log\log\Delta)^{1.71}\log n)$-time $\DetLOCAL$ edge coloring algorithm, improving a previous work~\cite{GhaffariS17}.

We cannot hope to use fewer than $\Delta+1$ colors on general graphs.  Vizing~\cite{Vizing64} proved that $\Delta+1$
suffices for any graph, and Holyer~\cite{Holyer81} proved that it is NP-hard to tell if a graph is $\Delta$-colorable.
The best \emph{sequential} $(\Delta+1)$-edge coloring algorithms~\cite{Arjomandi82,GabowNKLT85} run in
$O(\min\{\Delta m\log n, \, m\sqrt{n\log n}\})$ time and are not suited for implementation in the $\LOCAL$ model.
When the palette size is small
a natural way to solve the coloring problem~\cite{Arjomandi82,GabowNKLT85}
is to begin with any maximal
partial coloring, and then iteratively \emph{recolor} portions of the graph (e.g., along ``alternating paths'')
so that at least one uncolored edge can be legally colored.  This approach was successfully employed
by Panconesi and Srinivasan~\cite{PanconesiS95} in their distributed algorithm for Brooks' theorem, which states that any graph with $\Delta \ge 3$ having no $(\Delta+1)$-cliques is $\Delta$-vertex colorable.
They proved that for \emph{any} partial coloring, 
there exists an alternating path with length $O(\log_\Delta n)$, and that given a $(\Delta+1)$-vertex coloring, a
$\Delta$-vertex coloring could be computed in $O(\log^2 n\log_\Delta n)$ additional time.
This bound was recently improved by Ghaffari et al.~\cite{GhaffariHKM18}, which offers some
improved $\Delta$-vertex coloring  algorithms.

\begin{table}
\small
\centering
\begin{tabular}{|l|l|l|}
\multicolumn{1}{l}{\textbf{\textsc{Palette Size}}}
& \multicolumn{1}{l}{\textbf{\textsc{Time}}\hfill (\underline{$\mathsf{R}$}$\mathsf{and}$)}
& \multicolumn{1}{l}{\textbf{\textsc{Notes \hfill References}}}\\\hline
$f(\Delta)$				&$\Omega(\log^* n)$ \hcm[2]					\hfill \R	& $\Delta = O(1)$ \hcm[2.5]\hfill \cite{Linial92,Naor91}\istrut[2]{4}\\\hline
$O(\Delta^2)$			& $O(\log^* n - \log^* \Delta + 1)$					\hfill $\star$ & Vertex coloring $L(G)$ \hfill \cite{Linial92}\istrut[2]{4}\\\hline
$\Delta^{1+\epsilon}$	& $O(\log\Delta + \log^\ast n)$		\hfill $\star$					& \hfill \cite{BarenboimE2013}\istrut[2]{4}\\\hline
$O(\Delta\log n)$			& $O(\log^4 n)$										& \hfill \cite{CzygrinowHK01}\istrut[2]{4}\\\hline
$t  (2\Delta-2)$		& $(\Delta / t)^{O(1)} \cdot  O(\log n)$								& Vertex coloring $L(G)$ \hfill \cite{BarenboimE11}\istrut[2]{4}\\\hline
$2^k\Delta$			& $\tilde{O}(k\Delta^{1/2k}) + O(\log^* n)$					\hfill $\star$ & $k\ge 2$ \hfill \cite{BarenboimEM}\istrut[2]{4}\\\hline
					& $O(\epsilon^{-3}\log^{11} n)$								& \hfill\cite{GhaffariS17}\istrut[2]{4}\\\cline{2-3}
\rb{3}{$(2+\epsilon)\Delta$}	& $O(\epsilon^{-1}\log\Delta^{2+o(1)}\log n)$			\hfill & $\epsilon > 1/\log \Delta$\hfill\cite{GhaffariHKMSU17}\istrut[2]{4}\\\hline
					& \rb{-0.5}{$2^{O(\sqrt{\log n})}$}								& Vertex coloring $L(G)$ \hfill \cite{PanconesiS96}\istrut[2]{4}\\\cline{2-3}
					& \rb{-0.5}{$\tilde{O}(\sqrt{\Delta}) + O(\log^* n)$}				\hfill $\star$ & Vertex coloring $L(G)$ \hfill \cite{FraigniaudHK16}\istrut[2]{4}\\\cline{2-3}
					& \rb{-0.5}{$O(\log\Delta) + 2^{O(\sqrt{\log\log n})}$} \hfill \R					& Vertex coloring $L(G)$ \hfill \cite{BEPS16}\istrut[2]{4}\\\cline{2-3}
$2\Delta-1$			& $O(\log^* n)$								\hfill \R$\star$   & $\Delta > (\log n)^{1+o(1)}$ \hfill \cite{ElkinPS15}\istrut[2]{4}\\ \cline{2-3}
					& \rb{-0.5}{$2^{O(\sqrt{\log\log n})}$}					\hfill \R	& \hfill \cite{ElkinPS15}\istrut[2]{4}\\\cline{2-3}
					& $O(\log^7 \Delta\log n)$						\hfill  & \hfill \cite{FischerGK17}\istrut[2]{4}\\\cline{2-3}
					& $O(\log^4 \Delta\log^2 n)$						\hfill & \hfill \cite{GhaffariHK17}\istrut[2]{4}\\\cline{2-3}
					& $\tilde{O}(\log^2 \Delta\log n)$						\hfill $\star$ & \hfill \cite{harris2018distributed}\istrut[2]{4}\\\cline{2-3}
					& $\tilde{O}((\log\log n)^3)$		\hfill  \R$\star$	& \hfill \cite{BEPS16}+\cite{ElkinPS15}+\cite{harris2018distributed}\istrut[2]{4}\\\hline
					&  $\Omega(\log_\Delta\log n)$		\hfill \R	& \hfill {\bf new}\istrut[2]{4}\\\cline{2-3}
\rb{2.5}{$2\Delta-2$}		&   $\Omega(\log_\Delta n)$					& \hfill {\bf new}\istrut[2]{4}\\\hline
$1.6\Delta$			& $O(\log n)$	\hfill \R							& $\Delta > \log^{1+o(1)} n$\hfill \cite{PanconesiS97}
\\
\hline
& $O(\epsilon^{-1}\log\epsilon^{-1} + \log n)$											\hfill \R	
& $\Delta > (\log n)^{1+\gamma(\epsilon)}$ \hfill \cite{DubhashiGP98}\istrut[2]{4}\\\cline{2-3}
\rb{-4}{$(1+\epsilon)\Delta$}	
& $O\left((\epsilon^{-2}\log\epsilon^{-1} + \log^*\Delta)\ceil{\frac{\log n}{\epsilon^2\Delta^{1-o(1)}}}\right)$			
\hfill \R	
& $\Delta > \Delta_\epsilon$ 
\hfill \cite{ElkinPS15}\istrut[2]{5}
\\
\cline{2-3}
					& $O\left(\log\epsilon^{-1}\ceil{\frac{\log n}{\epsilon^2\Delta^{1-o(1)}}} + \log^* n\right)$					\hfill \R$\star$	& $\epsilon\Delta > (\log n)^{1+o(1)}$ \hfill {\bf new}\istrut[2]{5}\\\cline{2-3}
					& $O\left(\log\epsilon^{-1}\ceil{\frac{\log n}{\epsilon^2\Delta^{1-o(1)}}} + (\log\log n)^{3+o(1)}\right)$ 		\hfill \R$\star$ 	& $\epsilon = \omega((\log^{2.5}\Delta)/\sqrt{\Delta})$ \hfill {\bf new}\istrut[2]{5}\\ \cline{2-3}
\hline
$\Delta + O(\log_{\Delta} n)$	& $O(\Delta^{6+\epsilon}\log^3 n)$		\hfill $\star$  &  
\hfill \cite{SuV19}\istrut[2]{5}\\\hline
$\Delta + 2$	& $O(\Delta^{13}\log^3 n)$		\hfill \R$\star$ &  
\hfill \cite{SuV19}\istrut[2]{5}\\\hline
$\Delta+1$			& $\diam(G)$																\hfill $\star$ & \hfill \cite{Vizing64}\istrut[2]{4}\\\hline
\end{tabular}
\caption{\label{table:edgecoloring}A history of notable edge coloring algorithms and lower bounds, in descending
order by palette size.
Some $(2\Delta-1)$-edge coloring algorithms that follow from vertex coloring~$L(G)$, such as
\cite{AwerbuchGLP89,KuhnW06,BarenboimEK14,Barenboim15}, have been omitted for brevity.
$\RandLOCAL$ algorithms are marked with \R; all others work in $\DetLOCAL$.
Those algorithms that are the ``best'' in any sense are marked with a $\star$.
}
\end{table}

\newcommand{\MIS}{\mathsf{MIS}}
\newcommand{\WeakMIS}{\mathsf{WeakMIS}}

\begin{table}
\small
\centering
\begin{tabular}{|r|l|l||}
\multicolumn{1}{l}{\textbf{\textsc{Criterion}}} &
\multicolumn{1}{l}{\textbf{\textsc{Time}} \hfill $\Rand/\Det$} &
\multicolumn{1}{l}{\textbf{\textsc{Notes}} \hfill \textbf{\textsc{Reference}}}\\\hline
				& 	$O(\MIS \cdot \log_{1/ep(d+1)} n)$								\hfill$\Rand$ & also asymmetric criterion\hfill \cite{MoserT10}\istrut[2]{4}\\\cline{2-3}
$ep(d+1) < 1$		&	$O(\WeakMIS \cdot \log_{1/ep(d+1)} n)$							\hfill$\Rand$ & also asymmetric criterion\hfill \cite{ChungPS17}\istrut[2]{4}\\\cline{2-3}
				&	$O(\log d\cdot \log_{1/ep(d+1)} n)$								\hfill$\Rand$ & also asymmetric criterion \; \hfill \cite{Ghaffari16}+\cite{ChungPS17}\istrut[2]{4}\\\hline
$epd^2 < 1$		& 	$O(\log_{1/epd^2} n)$										\hfill$\Rand$ & also asymmetric criterion\hfill \cite{ChungPS17}\istrut[2]{4}\\\hline
 $p 2^d \poly(d) < 1$	&	$O(\log n/\log\log n)$											\hfill$\Rand$ & \hfill \cite{ChungPS17}\istrut[2]{4}\\\hline
$p(ed)^\lambda < 1$	&	$O(n^{1/\lambda}\cdot 2^{O(\sqrt{\log n})})$						\hfill$\Det$	& Any $\lambda \ge 1$\hfill \cite{FischerG17}\istrut[2]{4.5}\\\hline
$p(ed)^{4\lambda} < 1$ &	$O(d^2) + (\log n)^{1/\lambda}\cdot 2^{O(\sqrt{\log\log n})}$ \hcm[.3]		\hfill$\Rand$	& Any $\lambda \ge 8$\hfill \cite{FischerG17}\istrut[2]{4.5}\\\hline
$p(ed)^{32} < 1$	& 	$2^{O(\sqrt{\log\log n})}$										\hfill$\Rand$ &  $d \leq (\log\log n)^{1/5}$\hcm[.4]\hfill \cite{FischerG17}\istrut[2]{4.5}\\\hline
$20000p d^8 < 1$	& $\exp^{(i)} \left( O\left(\log d + \sqrt{\log^{(i+1)} n}\right) \right)$				\hfill$\Rand$ &  $i\ge 1$.
\hfill \cite{GhaffariHK17}\istrut[2]{4.5}\\\hline
$p(ed)^{d^2+1} < 1$	&	$O(d^2 + \log^* n)$											\hfill$\Det$	& \hfill \cite{FischerG17}\istrut[2]{4}\\\hline\hline
\multicolumn{3}{c}{}\\
\multicolumn{3}{c}{\bf Lower Bounds (apply to tree-structured instances)}\\\hline
$p\cdot f(d) \leq 1$	 & 	$\Omega(\log^* n)$											\hfill$\Rand$ & Any $f$ \hfill \cite{ChungPS17}\istrut[2]{4}\\\hline
$p\cdot f(d) \le 1$  &	$\Omega(\log_{\log(1/p)} \log n)$								\hfill$\Rand$ & Any $f(d) \le 2^d$ \hfill \cite{BrandtEtal16}\istrut[2]{4}\\\hline
$p\cdot f(d) \le 1$  &	$\Omega(\log_{d} n)$										\hfill$\Det$  & Any $f(d) \le 2^d$ \hfill \cite{ChangKP19}\istrut[2]{4}\\\hline\hline
\multicolumn{3}{c}{}\\
\multicolumn{3}{c}{\bf LLL for Tree-Structured Instances}\\\hline
$p(ed)^2 < 1$			&	$O(\log n)$											\hfill$\Det$	& \hfill {\bf new}\\\hline
$p(ed)^\lambda < 1$		&	$O(\max\{\log_\lambda n, \, \f{\log n}{\log\log n}\})$					\hfill$\Det$	& $\lambda\ge 2$ \hfill {\bf new}\\\hline
$p(ed)^\lambda < 1$		&	$O(\max\{\log_\lambda \log n, \, \f{\log\log n}{\log\log\log n}\})$			\hfill$\Rand$	& $\lambda\ge 2(4^{r}+ 8r)$ \hfill {\bf new}\\\hline\hline
\end{tabular}
\caption{\label{table:LLL}A survey of distributed LLL algorithms (with a symmetric LLL criterion).
$\MIS = O(\min\{d + \log^* n,\, \log d + 2^{O(\sqrt{\log\log n})}\})$~\cite{BarenboimEK14,Ghaffari16}
is the complexity of computing a maximal independent set in a graph with maximum degree $d$.
$\WeakMIS = O(\log d)$~\cite{Ghaffari16} is the task of finding an independent set $I$ such that the probability
that $v$ is not in/adjacent to $I$ is $1/\poly(d)$.
If $T=(V,E)$ is a tree, 
$T^r = (V,\{(u,v) \;:\; \dist_T(u,v) \le r\})$ is \emph{tree-structured}, where $r=O(1)$.
All lower bounds apply even to tree-structured instances.
We do not try to optimize the LLL criterion 
$\lambda \geq 2(4^{r}+ 8r)$ in the last line.}
\end{table}

\paragraph{Lower Bounds}
Linial's~$\Omega(\log^* n)$ lower bound for $O(1)$-coloring the ring~\cite{Linial92,Naor91}
implies that $f(\Delta)$-edge coloring also cannot be computed in $o(\log^* n)$ time, for any function $f$.
To the best of our knowledge, none of the other published lower bounds applies directly to the edge coloring problem.
Kuhn, Moscibroda, and Wattenhofer's~$\Omega\left(\min\left\{\frac{\log\Delta}{\log\log\Delta}, \, \sqrt{\frac{\log n}{\log\log n}}\right\}\right)$
lower bounds apply to MIS and maximal matching, but not to any vertex or edge coloring problem.
Linial's $\Omega(\log_\Delta n)$ lower bound~\cite{Linial92} (see~\cite[p. 265]{PettieS15})
on $o(\Delta/\ln\Delta)$-vertex coloring trees does not imply anything for edge coloring trees.
The lower bounds of Brandt et al.~\cite{BrandtEtal16} ($\RandLOCAL$  $\Omega(\log_\Delta \log n)$)
and Chang, Kopelowitz, and Pettie~\cite{ChangKP19} ($\DetLOCAL$  $\Omega(\log_\Delta n)$)
for \emph{sinkless orientation} and \emph{$\Delta$-vertex coloring trees} do not naturally generalize to edge coloring.

\paragraph{Distributed \Lovasz{} Local Lemma.}
Table~\ref{table:LLL} summarizes distributed LLL algorithms under different symmetric criteria $p\cdot f(d) < 1$,
where $p$ is the local probability of failure and $d$ is the maximum degree in the dependency graph.
Chang and Pettie~\cite{ChangP19} conjectured that the $\RandLOCAL$ complexity of the LLL under
some polynomial LLL criterion is $O(\log\log n)$, matching the Brandt et al.~\cite{BrandtEtal16} lower bound.
If this conjecture were true, due to the necessity of graph shattering \cite[Theorem 3]{ChangKP19},
an optimal randomized LLL algorithm should be structured as follows.
It must combine an $O(\log n)$-time deterministic LLL algorithm and an
$O(\log\log n)$-time randomized \emph{graph shattering} routine to break the dependency graph into $\poly(\log n)$-size LLL instances.
Fischer and Ghaffari~\cite{FischerG17} exhibited a deterministic $n^{1/\lambda + o(1)}$-time algorithm for LLL criterion
$p(ed)^\lambda < 1$, and an $O(d^2 + \log^* n)$ routine to shatter the dependency graph into $\poly(\log n)$-size components.
More recently, Ghaffari, Harris, and Kuhn~\cite{GhaffariHK17} developed a 
generic derandomization method for the $\LOCAL$
model that implies randomized LLL algorithms running in time
$\exp^{(i)} \left( O\left(\log d + \sqrt{\log^{(i+1)} n}\right) \right)$, for any $i\ge 1$.
For example, when $d < 2^{O(\sqrt{\log\log n})}$, their LLL algorithm runs in 
$2^{O(\sqrt{\log\log n})}$ time.

\subsection{Organization}
In Section~\ref{sect:lowerbound} we give lower bounds on $(2\Delta-2)$-edge coloring.
In Section~\ref{sect:Vizing} we give lower bounds on a class of ``recoloring'' algorithms for Vizing's theorem.
In Section~\ref{sect:randAlg} we give a randomized $(1+\epsilon)\Delta$-edge coloring algorithm,
which requires a distributed LLL algorithm when $\epsilon^2\Delta$ is sufficiently small.
In Section~\ref{sect:tree-LLL} we give new LLL algorithms for tree-structured dependency graphs.
In Section~\ref{sect:tree-decomp} we present new network decomposition algorithms for trees, which are used in Section~\ref{sect:tree-LLL}.
In Section~\ref{sect:upperbound} we prove some new bounds on the complexity 
of $\Delta$- and $(\Delta+1)$-edge coloring trees, both in the oriented and unoriented cases.
Much of the analysis of the randomized edge-coloring algorithm (Section~\ref{sect:randAlg}) 
appears in Appendix~\ref{sect:concentration}.

\newcommand{\InfTree}{\mathcal{T}_\Delta}
\newcommand{\orientation}{\operatorname{Orient}}
\newcommand{\bottom}{\perp}

\newcommand{\BadZero}{\mathcal{E}_0}
\newcommand{\BadOne}{\mathcal{E}_1}
\newcommand{\DangerZero}{\mathcal{E}_0^\star}
\newcommand{\DangerOne}{\mathcal{E}_1^\star}

\section{Lower Bound for $(2\Delta-2)$-Edge Coloring}\label{sect:lowerbound}

The \emph{sinkless orientation} problem~\cite{BrandtEtal16} is to 
direct the edges such that no vertex has out-degree zero.
Since this problem becomes \emph{harder} with fewer edges, in this 
section we write $\mindeg$  and  $\maxdeg$ 
to denote the \emph{minimum} and the \emph{maximum}  degree.
We follow the method of~\cite{BrandtEtal16} and~\cite{ChangKP19},
who proved that $\Delta$-coloring graphs (even trees) requires
$\Omega(\log_\Delta \log n)$ in $\RandLOCAL$ and $\Omega(\log_\Delta n)$ in $\DetLOCAL$.
Brandt et al.~\cite{BrandtEtal16} begin by reducing the sinkless orientation
problem, \emph{in which nodes initially know a $\Delta$-edge coloring of the graph},
to $\Delta$-vertex coloring.  Having the $\Delta$-edge coloring available is essential
for making the reduction work, and intuitively it
leaks no information helpful for solving
either problem.   In Theorem~\ref{thm:so-ec-reduction} we begin with a similar reduction,
showing that sinkless orientation on bipartite graphs, 
\emph{in which nodes initially know a proper $2$-vertex coloring}, 
is reducible to $(2\Delta-2)$-edge coloring.  We then proceed to prove
lower bounds on sinkless orientation, given the aforementioned $2$-vertex coloring,
and even given a proper $(2\Delta-1)$-edge coloring.  (By Theorem~\ref{thm:so-ec-reduction},
reducing the edge-coloring palette to $2\Delta-2$ would trivialize the sinkless orientation problem.)

\begin{theorem}\label{thm:so-ec-reduction}
Suppose $\mathcal{A}_{e.c.}$ is a $t$-round $(2\Delta-2)$-edge coloring algorithm with local failure probability $p$ on graphs with maximum degree $\maxdeg \leq \Delta$.
There is a $(t+1)$-round sinkless orientation algorithm $\mathcal{A}_{s.o.}$ for
2-vertex colored bipartite graphs with minimum degree $\mindeg \geq \Delta$ whose local failure probability is $p$.
\end{theorem}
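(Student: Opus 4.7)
The plan is to exhibit a local reduction based on a simple palette-partition argument. It suffices to treat $G$ as $\Delta$-regular bipartite with a black/white 2-vertex coloring; the general min-degree-$\Delta$ case has strictly more edges than a $\Delta$-regular spanning subgraph and only makes sinkless orientation easier. First, run $\mathcal{A}_{e.c.}$ on $G$ for $t$ rounds to produce a tentative $(2\Delta-2)$-edge coloring $\phi : E \to \{1, \ldots, 2\Delta - 2\}$, and use one additional round so that both endpoints of every edge agree on $\phi(e)$; this accounts for the extra ``$+1$''.

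Next, split the palette into two halves of equal size $\Delta - 1$: $L = \{1, \ldots, \Delta - 1\}$ (``low'') and $H = \{\Delta, \ldots, 2\Delta - 2\}$ (``high''). Define the orientation rule: if $\phi(e) \in L$, orient $e$ toward its black endpoint; if $\phi(e) \in H$, orient $e$ toward its white endpoint. Both endpoints compute this identically from the given 2-vertex coloring and the shared value $\phi(e)$, so the orientation is consistent with no further communication.

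Correctness follows from a pigeonhole observation. At any vertex $v$ where $\mathcal{A}_{e.c.}$ succeeds locally, the $\Delta$ incident edges receive $\Delta$ pairwise distinct colors out of a palette of size $2\Delta - 2$; since $|L| = |H| = \Delta - 1 < \Delta$, at least one incident color lies in each half. Under the orientation rule, a black vertex then has at least one outgoing edge (an $H$-colored one) and every white vertex has at least one outgoing edge (an $L$-colored one). Consequently $v$ is a sink only if $\mathcal{A}_{e.c.}$ fails locally at $v$, so the local failure probability of $\mathcal{A}_{s.o.}$ at $v$ is at most $p$.

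I do not anticipate a serious obstacle. The only creative step is choosing the palette split: with exactly $2\Delta - 2$ colors both halves are just small enough ($\Delta - 1$) that the $\Delta$ distinct-colored incident edges at a vertex are forced to hit both halves, which is exactly what the orientation rule needs. The rest is routine bookkeeping about the extra round of communication and the standard reduction from general min-degree-$\Delta$ instances to the $\Delta$-regular case.
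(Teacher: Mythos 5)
Your orientation rule (split the $(2\Delta-2)$-palette into halves $L$ and $H$ of size $\Delta-1$ each, orient one half toward the black side and the other toward the white side) and the pigeonhole argument that a correctly-colored degree-$\Delta$ vertex must see colors from both halves are exactly the paper's. The gap is in the reduction from a minimum-degree-$\Delta$ graph to a graph on which $\mathcal{A}_{e.c.}$ and this pigeonhole can both be applied.

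You assert that it ``suffices to treat $G$ as $\Delta$-regular bipartite'' because the general case ``has strictly more edges than a $\Delta$-regular spanning subgraph.'' But such a subgraph need not exist: a bipartite graph with parts of unequal size and minimum degree $\Delta$ has no $\Delta$-regular spanning subgraph. More to the point, $\mathcal{A}_{e.c.}$ only applies to graphs of \emph{maximum} degree $\Delta$, so you cannot run it on $G$ directly; and if each vertex simply discards edges to keep $\Delta$ of them, the kept sets of adjacent vertices need not agree, so the resulting subgraph has either maximum degree up to $2\Delta$ (take the union) or leaves some vertices with fewer than $\Delta$ surviving edges (take the intersection). In the latter case the pigeonhole argument collapses: a vertex with only $\Delta - 1$ properly colored incident edges may see colors from just one half.

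The paper closes exactly this gap. Each vertex arbitrarily selects $\Delta$ incident edges and informs its neighbors --- this selection round is what the extra ``$+1$'' is for, not endpoint agreement on $\phi(e)$ (which is already implicit in any $t$-round edge-coloring algorithm). Then $\mathcal{A}_{e.c.}$ is run on the subgraph $G'$ of mutually selected edges, which has maximum degree at most $\Delta$. Crucially, any edge selected by only one endpoint is oriented \emph{out of} the endpoint that selected it, and the $\bot$ label is folded into one half. This guarantees that a vertex whose degree in $G'$ dropped below $\Delta$ automatically receives an out-edge from one of its singly-selected edges, so the pigeonhole only needs to hold at vertices of degree exactly $\Delta$ in $G'$. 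Your proposal has no mechanism for the degree-$<\Delta$ case, and that is where it fails.
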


\begin{proof}
$\mathcal{A}_{e.c.}$ produces a proper \emph{partial} $(2\Delta-2)$-edge coloring
$\phi : E \rightarrow \{1,\ldots,2\Delta-2,\bottom\}$ such that for all $v\in V$,
$\Pr[\exists (u,v) : \phi(u,v) =\: \bottom] \le p$, i.e., a vertex errs if not all of its edges are colored.
Suppose we are given a bipartite graph $G=(V,E)$ with a 2-coloring $V\rightarrow \{0,1\}$ and minimum degree $\mindeg \geq \Delta$.
In the first round of $\mathcal{A}_{s.o.}$, each vertex selects $\Delta$ of its incident edges arbitrarily and notifies the other endpoint whether it was selected.
Let $G' = (V,E')$ be the subgraph of edges selected by \emph{both} endpoints.
The algorithm $\mathcal{A}_{s.o.}$ runs $\mathcal{A}_{e.c.}$ on $G'$ for $t$ rounds
to get a partial coloring $\phi : E'\rightarrow \{1,\ldots,2\Delta-2,\bottom\}$,
and then it orients the edges  as follows.
Recall that the underlying graph $G$ is 2-vertex colored.
Let $e = \{u_0,u_1\}\in E$ be an edge with $u_j$ colored $j\in\{0,1\}$.
If both $u_0$ and $u_1$ do not select $e$, then $e$ is oriented arbitrarily.
Otherwise, $\mathcal{A}_{s.o.}$ orients $e$ as follows.
\[
\mathcal{A}_{s.o.}(\{u_0,u_1\}) = \left\{
\begin{array}{l@{\hcm[.5]}l}
\rb{-3}{$0 \to 1$} & \mbox{ if $\{u_0,u_1\} \in E'$ and $\phi(u_0,u_1) \in \{1,2,\ldots,\Delta-1,\bottom\}$,}\\
		& \mbox{\ \ \  or if only $u_0$ selected $\{u_0,u_1\}$.}\\
		&\\
\rb{-3}{$0 \gets 1$} & \mbox{ if $\{u_0,u_1\} \in E'$ and $\phi(u_0,u_1) \in \{\Delta,\ldots,2\Delta-2\}$,}\\
		& \mbox{\ \ \  or if only $u_1$ selected $\{u_0,u_1\}$.}
\end{array}\right.
\]
The only way a vertex $v$ can be a sink is when
(i) $v$ has degree exactly $\Delta$ in $G'$,
(ii) $v$ is colored $1$, and
(iii) each edge $e$ incident to $v$ has $\phi(e) \in \{1,2,\ldots,\Delta-1,\bottom\}$.
Criterion (iii) only occurs with probability at most $p$.
\end{proof}

Thus, any lower bound for sinkless orientation on 2-vertex colored bipartite graphs also applies to $(2\Delta-2)$-edge coloring.

\paragraph{Infinite $\Delta$-regular Tree  $\InfTree$.} Define $\InfTree$ to be an infinite $\Delta$-regular tree whose vertices are properly 2-colored by $\{0,1\}$ and whose
edges are assigned a proper $(2\Delta-1)$-edge coloring as follows.
Pick an edge and assign it a random color, then iteratively pick any vertex $u$ with one incident edge colored,
choose $\Delta-1$ colors at random from the ${2\Delta-2 \choose \Delta -1}$ possibilities, then assign them to
$u$'s remaining uncolored edges uniformly at random.


\paragraph{Information Stored in the Processors.}
For simplicity we suppose that the \emph{edges} host processors, 
and that two edges can communicate if they are adjacent in the line graph $L(\InfTree)$.
Define $N^t(e)$ to be all edges within distance $t$ of $e$ in the line graph;
we also use $N^t(e)$ to refer to \emph{all information} stored in the processors within $N^t(e)$; this includes 
edge coloring, vertex coloring, and the random bits.

Randomized algorithms that run on $\InfTree$ know the edge coloring
and how it was generated.  Thus, the probability of \emph{failure} depends on the random bits generated by the algorithm,
and those used to generate the edge coloring.

\paragraph{Irregular Time Profile.}
We say that an algorithm on a $k$-edge colored graph $G$  has irregular time profile $\mathbf{t} = (t_1,\ldots,t_k)$ if edges with input color
$i$ decide their output by examining only their $t_i$-neighborhood.
By definition, a time-$t$ algorithm has time profile $(t,t,t,\ldots,t)$. In the subsequent discussion, we will apply this concept to $\InfTree$. Recall that the edges in $\InfTree$ are properly $(2\Delta - 1)$-edge colored, and so an irregular time profile for an algorithm on $\InfTree$ is a $(2\Delta - 1)$-tuple.

\begin{lemma}[Round Elimination Lemma]\label{lem:roundelimination}
Suppose $\mathcal{A}_{s.o.}$ is a sinkless orientation algorithm for $\InfTree$ with local error probability $p$
and time profile 
$(\underbrace{t,t, \ldots,t}_{i},
\underbrace{t-1, \ldots, t-1}_{(2\Delta-1)-i})$, i.e., edges colored $\{1,\ldots,i\}$ halt after $t$
rounds and the others after $t-1$ rounds.  There exists a sinkless orientation algorithm $\mathcal{A}'_{s.o.}$  for $\InfTree$
with local error probability $3p^{1/3}$ and time profile $(\underbrace{t,t, \ldots,t}_{i-1}, \underbrace{t-1, \ldots, t-1}_{(2\Delta-1)-(i-1)})$.
\end{lemma}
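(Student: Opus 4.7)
The plan is to construct $\mathcal{A}'_{s.o.}$ by modifying $\mathcal{A}_{s.o.}$ \emph{only} at edges whose input color is $i$: such an edge $e$ will commit to an orientation after $t{-}1$ rounds, based on its view $V=N^{t-1}(e)$. All other edges (colors $\ne i$) behave exactly as in $\mathcal{A}_{s.o.}$, so their running times and outputs are unchanged. The key observation is that, conditioning on $V$, the ``$t$-th ring'' $N^t(e)\setminus N^{t-1}(e)$ still carries random information (random edge colors extending the partial $(2\Delta-1)$-coloring, plus random bits of processors in that ring), so we may speak of the conditional probability that $\mathcal{A}_{s.o.}$ would have oriented $e$ in either direction.

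Concretely, fix a threshold $\alpha=p^{2/3}$ and, for each edge $e$ of color $i$ with endpoints $v_0$ (the $0$-colored endpoint) and $v_1$, compute $q(V):=\Pr[\mathcal{A}_{s.o.}(e)\text{ points into }v_0\mid V]$. Set
\[
\mathcal{A}'_{s.o.}(e)=\begin{cases}\text{into }v_0 & \text{if }q(V)\ge 1-\alpha,\\ \text{into }v_1 & \text{if }q(V)\le \alpha,\\ \text{into }v_0 & \text{otherwise (default).}\end{cases}
\]
This is computable from $V$ alone, so the edge indeed halts after $t{-}1$ rounds.

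\textbf{Analysis.} Fix a vertex $x$. If $x$ has no incident edge of color $i$, then $\mathcal{A}'_{s.o.}$ agrees with $\mathcal{A}_{s.o.}$ on every edge of $x$ and $\Pr[x\text{ is a sink}]\le p$. Otherwise $x$ has exactly one edge $e$ of color $i$; let $f_1,\ldots,f_{\Delta-1}$ be the other edges at $x$ and let $A_x$ be the event that all $f_j$ point into $x$ under $\mathcal{A}_{s.o.}$ (which equals $\mathcal{A}'_{s.o.}$ on $f_j$). The crucial structural point is that in $\InfTree$ the random extensions beyond $V$ used by $\mathcal{A}_{s.o.}(e)$ (namely $e$'s $t$-th ring) and by each $\mathcal{A}_{s.o.}(f_j)$ (namely $f_j$'s $t$-th ring in its own subtree) live in disjoint subtrees. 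Hence, setting $r_x(V):=\Pr[A_x\mid V]$,
\[
p\ge\Pr[x\text{ sink in }\mathcal{A}_{s.o.}]=\E\!\left[\Pr[\mathcal{A}_{s.o.}(e)\text{ into }x\mid V]\cdot r_x(V)\right].
\]
Now split $\Pr[x\text{ sink in }\mathcal{A}'_{s.o.}]=\E[\mathbb{1}[\mathcal{A}'_{s.o.}(e)\text{ into }x]\,r_x(V)]$ into a ``decisive-into-$x$'' part ($\Pr[\mathcal{A}_{s.o.}(e)\text{ into }x\mid V]\ge 1-\alpha$), a ``decisive-away'' part (contributes $0$ to $x$'s sink probability unless $x$ is the default endpoint), and an ``ambiguous'' part. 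A Markov-type inequality bounds the decisive part by $p/(1-\alpha)\le 2p$, and the ambiguous contribution (borne by whichever endpoint the default assigns) is $\le\frac{1}{\alpha}\E[\Pr[\cdot\mid V]\cdot r_x(V)]\le p/\alpha=p^{1/3}$. Choosing a consistent default rule (e.g. into the $0$-colored endpoint) ensures the ambiguous contribution falls on at most one endpoint of each $e$, and totals to $\le 2p+p^{1/3}\le 3p^{1/3}$ for every vertex.

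\textbf{Main obstacle.} The delicate step is the conditional-independence decomposition: one must verify that, in $\InfTree$'s random uniform $(2\Delta-1)$-edge coloring (plus random bits), the ``$t$-th rings'' used by $e$ and by each $f_j$ to decide their outputs are independent given $V$, so that the product $q(V)\,r_x(V)$ genuinely equals the conditional probability of $x$ being a sink. Once that is in place, the rest is a standard two-regime Markov splitting, and the bound $3p^{1/3}$ follows by choosing $\alpha=p^{2/3}$.
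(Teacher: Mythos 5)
Your high-level template matches the paper's — modify only edges colored $i$, compute something from $V=N^{t-1}(e)$, threshold it, and finish with a Markov-style bound — but the conditional probability you threshold is the wrong one, and the independence you rely on is false. The identity
\[
\Pr\!\left[x\text{ sink in }\mathcal{A}_{s.o.}\right]=\E\!\left[\Pr[\mathcal{A}_{s.o.}(e)\text{ into }x\mid V]\cdot r_x(V)\right]
\]
requires that the event $\{\mathcal{A}_{s.o.}(e)\text{ into }x\}$ and the event $A_x$ be independent conditioned on $V$. They are not: conditioned on $V$, the output $\mathcal{A}_{s.o.}(e)$ still depends on $e$'s $t$-th ring $N^{t}(e)\setminus N^{t-1}(e)$, and that ring straddles \emph{both} subtrees of $T-e$, not just the side away from $x$. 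In particular, for any edge $g$ at line-graph distance exactly $t$ from $e$ on the $x$-side, the neighbor $f_j$ of $e$ at $x$ whose subtree contains $g$ satisfies $\dist(f_j,g)=t-1$, so $g$ lies inside the ball that determines $\mathcal{A}_{s.o.}(f_j)$ whether $f_j$ runs for $t$ or $t-1$ rounds. Thus the unrevealed randomness that drives $\mathcal{A}_{s.o.}(e)$ overlaps the unrevealed randomness that drives $A_x$; your ``disjoint subtrees'' assertion fails, and with it the product formula that underlies every bound in your analysis. For an arbitrary $\mathcal{A}_{s.o.}$ the correlation can go in the bad direction (e.g., $\mathcal{A}_{s.o.}(e)$ and the $f_j$ read the same bit in $e$'s $t$-th ring on the $x$-side), so $\E[q(V)r_x(V)]$ need not be bounded by $p$ at all.

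The paper's proof is designed precisely to avoid this trap: it never thresholds a conditional probability involving $e_0$'s \emph{own} output. It thresholds $\Pr[\BadZero\mid N^{t-1}(e_0)]$ and $\Pr[\BadOne\mid N^{t-1}(e_0)]$, where $\BadZero$ (resp.~$\BadOne$) is about the remaining edges at $u_0$ (resp.~$u_1$). Conditioned on $N^{t-1}(e_0)$, the extra randomness on which $\BadZero$ depends lies entirely in the $u_0$-subtree of $T-e_0$, and symmetrically for $\BadOne$, so they really are conditionally independent; that is what yields $\Pr[\DangerZero\cap\DangerOne]\le 2p^{1/3}$ and the final $3p^{1/3}$ bound. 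To repair your argument you would need to replace $q(V)$ by the analogue of $\Pr[\BadZero\mid V]$, at which point you essentially recover the paper's proof.
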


\begin{proof}
Only edges colored $i$ modify their algorithm; all others behave identically under $\mathcal{A}'_{s.o.}$ and $\mathcal{A}_{s.o.}$.
Let $e_0=\{u_0,u_1\}$ be an edge colored $i$ with $u_j$ colored $j\in\{0,1\}$
and let the remaining edges incident to $u_0$ and $u_1$ be $\{e_1,\ldots,e_{\Delta-1}\}$ and
$\{e_\Delta,\ldots,e_{2\Delta-2}\}$, respectively.
Consider the following two events regarding the output of $\mathcal{A}_{s.o.}$.
\begin{align*}
\BadZero 	&: \forall j\in [1,\Delta-1], \mathcal{A}_{s.o.}(e_j) = 0\gets 1	& \mbox{I.e., $u_0$ has outdegree 0 in $G-\{e_0\}$}\\
\BadOne 	&: \forall j\in [\Delta,2\Delta-2], \mathcal{A}_{s.o.}(e_j) = 0\to 1	& \mbox{I.e., $u_1$ has outdegree 0 in $G-\{e_0\}$}
\end{align*}
If both events hold, then either $u_0$ or $u_1$ must be a sink, so
\begin{equation}\label{eqn:BadZeroBadOne}
\Pr[\BadZero \cap \BadOne] \le 2p.
\end{equation}
On edge $e_0$, $\mathcal{A}_{s.o.}'$ 
gathers its $(t-1)$-neighborhood $N^{t-1}(e_0)$
and evaluates whether the following two events
$\DangerZero,\DangerOne$ occur.
Intuitively, $\DangerZero$ indicates that
$\BadZero$ is \emph{dangerously likely} to happen,
conditioned on $N^{t-1}(e_0)$, and likewise
with $\DangerOne$ and $\BadZero$.  See Figure~\ref{fig:dangerous-independent}.
\[
\DangerZero	: \sqbrack{\, \Pr[\BadZero \:|\: N^{t-1}(e_0)] \ge p^{1/3}\, },
\hcm[1.5]
\DangerOne	: \sqbrack{\, \Pr[\BadOne \:|\: N^{t-1}(e_0)] \ge p^{1/3}\, }.
\]%
Notice that if we inspect $N^{t-1}(e_0)$, and condition on the information seen in $N^{t-1}(e_0)$,
the events $\BadZero$ and $\BadOne$ become independent,
since they now depend on disjoint sets of random variables. Specifically, $\BadZero$ depends
on $\bigcup_{j\in[1,\Delta-1]} N^{t}(e_j) \backslash N^{t-1}(e_0)$ and $\BadOne$ depends on
$\bigcup_{j\in[\Delta,2\Delta-2]} N^{t}(e_j) \backslash N^{t-1}(e_0)$.\footnote{Here the analysis relies on the following fact, which is a consequence of how we generate the $(2\Delta - 1)$-edge coloring of $\InfTree$. Conditioning on the colors of the edges in $N^{t-1}(e_0)$, the  colors of the edges in $\bigcup_{j\in[1,\Delta-1]} N^{t}(e_j) \backslash N^{t-1}(e_0)$ and the  colors of the edges in $\bigcup_{j\in[\Delta,2\Delta-2]} N^{t}(e_j) \backslash N^{t-1}(e_0)$ are independent.}
Thus,
\begin{align}
\Pr[\BadZero \cap \BadOne \:|\; N^{t-1}(e_0)] &= \Pr[\BadZero \:|\: N^{t-1}(e_0)]\cdot \Pr[\BadOne \:|\: N^{t-1}(e_0)].\label{eqn:independent}\\
\intertext{Since $\DangerZero,\DangerOne$ are determined by $N^{t-1}(e_0)$, (\ref{eqn:independent}) implies
that $\Pr[\BadZero\cap\BadOne \:|\: \DangerZero\cap\DangerOne] \ge p^{2/3}$, and
with (\ref{eqn:BadZeroBadOne}) we deduce that}
\Pr[\DangerZero\cap\DangerOne] &\le 2p^{1/3}.\label{eqn:DangerZeroDangerOne}
\end{align}

\begin{figure}
    \centerline{\scalebox{.45}{\includegraphics{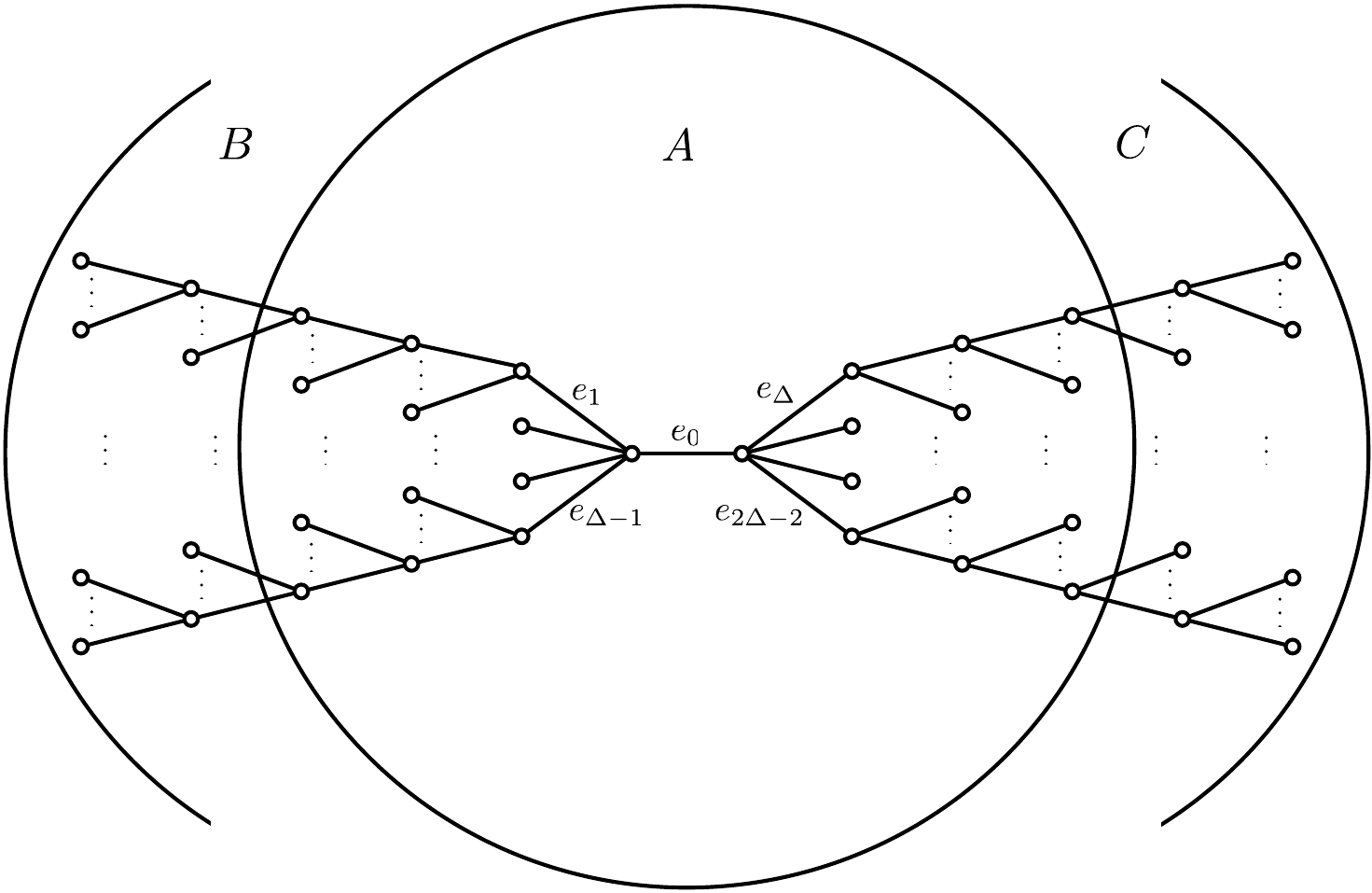}}}
    \caption{The algorithm for $e_0$ inspects its $(t-1)$-neighborhood $A=N^{t-1}(e_0)$ and estimates the probability that $\BadZero$ and $\BadOne$ occur.  Observe that $\BadZero$ is a function of $\bigcup_{j\in [1,\Delta-1]} N^t(e_j)$, which is completely contained in $A\cup B$, and $\BadOne$ is a function of $\bigcup_{j\in[\Delta,2\Delta-2]} N^t(e_j)$, which is completely
    contained in $A\cup C$.  Hence, after conditioning on $\DangerZero\cap \DangerOne$ (which depends only on $A$), $\BadZero$ and $\BadOne$ become independent since $B\cap C = \emptyset$.}
    \label{fig:dangerous-independent}
\end{figure}

The algorithm $A_{s.o.}'$ orients $e_0$ as follows.
\[
\mathcal{A}_{s.o.}'(e_0) = \left\{
\begin{array}{l@{\hcm[.5]}l}
0\to 1		& \mbox{ if $\DangerZero$ holds}\\
0\gets 1		& \mbox{ otherwise}
\end{array}\right.
\]
We now calculate the failure probabilities of $u_0$ and $u_1$.
\begin{align*}
\Pr[\mbox{$u_0$ is a sink}] &= \Pr[\overline{\DangerZero} \cap \BadZero]\\
					&\le \Pr[\BadZero \:|\: \overline{\DangerZero}] \;\le\; p^{1/3},		& \mbox{by definition of $\DangerZero$}\\
\Pr[\mbox{$u_1$ is a sink}] &= \Pr[\DangerZero \cap \BadOne]\\
					&\le \Pr[\DangerZero \cap \DangerOne] + \Pr[\BadOne \cap \overline{\DangerOne}]\\
					&\le 2p^{1/3} + p^{1/3} \;=\; 3p^{1/3},		& \mbox{by (\ref{eqn:DangerZeroDangerOne}) and the definition of $\DangerOne$.}
\end{align*}
The failure probability of the remaining vertices (those not incident to any edge colored $i$) is the same under $\mathcal{A}_{s.o.}$ and $\mathcal{A}_{s.o.}'$.
\end{proof}

\begin{lemma}\label{lem:zeroround}
Any sinkless orientation algorithm for $\InfTree$ with local error probability $p$
has time complexity $\Omega(\Delta^{-1}\log\log p^{-1})$.
\end{lemma}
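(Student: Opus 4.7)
The plan is to iteratively apply the Round Elimination Lemma (Lemma~\ref{lem:roundelimination}) to peel the time profile down from $(\underbrace{t,\ldots,t}_{k})$ to $(\underbrace{0,\ldots,0}_{k})$, where $k = 2\Delta - 1$ is the number of input edge colors, and then invoke a constant-error lower bound on $0$-round algorithms. A single ``sweep'' consists of $k$ successive applications of the lemma with $i = k, k-1, \ldots, 1$, which transforms the uniform profile $(t,\ldots,t)$ into $(t-1,\ldots,t-1)$ by pushing the error through $k$ compositions of the map $p \mapsto 3 p^{1/3}$. Telescoping the geometric series in the exponent of $3$, the resulting error after one sweep is bounded by $C \cdot p^{3^{-k}}$, where $C = 3^{\sum_{j \ge 0} 3^{-j}} = 3^{3/2}$ is an absolute constant.

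I would then iterate the sweep $t$ times, reducing the profile to $(0,\ldots,0)$. A second geometric series---this time in the exponent of $C$---collapses to an absolute constant, yielding final error at most $C^2 \cdot p^{3^{-kt}}$ for the resulting $0$-round algorithm.

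Finally, I would invoke the standard base case: any $0$-round sinkless orientation algorithm on $\InfTree$ has local error at least some universal constant $c_0 > 0$. A $0$-round algorithm is determined entirely by orientation probabilities $q_1, \ldots, q_{2\Delta-1}$ indexed by input edge color; the sink probability at a vertex of color $0$ (resp. $1$) equals $\E_{S}\bigl[\prod_{c \in S}(1-q_c)\bigr]$ (resp. $\E_{S}\bigl[\prod_{c \in S}q_c\bigr]$), where $S$ is a uniformly random $\Delta$-subset of $\{1,\ldots,2\Delta-1\}$. A short averaging argument shows that the larger of these two expectations cannot be driven below a universal constant, regardless of the choice of the $q_c$'s. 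Combining $C^2 \cdot p^{3^{-kt}} \ge c_0$ with $k = 2\Delta - 1$ and taking logarithms twice yields $t = \Omega\bigl(\Delta^{-1} \log\log p^{-1}\bigr)$, as desired.

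The bookkeeping of the iterated error blow-up is routine: two nested geometric series, both summing to absolute constants. The main conceptual hurdle is the $0$-round base case, i.e., verifying that no biased choice of the $q_c$'s can simultaneously defeat sinks of both color classes. This is the classical anchor of the round-elimination framework and reduces to a finite-dimensional min-max problem whose answer does not depend on $t$.
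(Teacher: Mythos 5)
Your overall strategy — iteratively apply the Round Elimination Lemma to strip the time profile down to $(0,\ldots,0)$, tracking the error through the map $p\mapsto 3p^{1/3}$, and then close with a lower bound on the error of any $0$-round algorithm — is the same as the paper's, and your accounting of the two nested geometric series (constant $C=3^{3/2}$ per sweep, at most $C^2$ overall) is correct. However, your stated base case is false. You claim that the larger of $\E_S\bigl[\prod_{c\in S}(1-q_c)\bigr]$ and $\E_S\bigl[\prod_{c\in S}q_c\bigr]$ is bounded below by a \emph{universal} constant $c_0>0$, independent of $\Delta$. This cannot be right: setting $q_c=1/2$ for every $c$ makes both expressions exactly $2^{-\Delta}$, which tends to $0$ as $\Delta\to\infty$. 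There is no finite-dimensional min-max anchor that is $\Delta$-free here.

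The correct base case, and the one the paper proves, is $\Delta$-dependent: any $0$-round algorithm has local error at least $2^{-O(\Delta)}$. The argument is a direct one rather than an averaging one. Since each $q_c$ satisfies $\max\{q_c,1-q_c\}\ge 1/2$, there exist (say) $\Delta$ colors, WLOG $\{1,\ldots,\Delta\}$, with $q_c\ge 1/2$. Conditioning on a vertex $v$ of color $1$ having its $\Delta$ incident edges colored exactly $\{1,\ldots,\Delta\}$ (which happens with probability $\binom{2\Delta-1}{\Delta}^{-1}$), each of those edges is oriented away from $v$ independently with probability $\ge 1/2$, so $\Pr[v\text{ is a sink}]\ge \binom{2\Delta-1}{\Delta}^{-1}2^{-\Delta}\ge 2^{-3\Delta}$. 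Fortunately, the weaker bound $c_0 = 2^{-3\Delta}$ does not break your argument: after taking logarithms twice, $\log\log c_0^{-1}=O(\log\Delta)$, which is absorbed in the final $\Omega(\Delta^{-1}\log\log p^{-1})$ bound (exactly as in the paper, where the $\log(3\Delta)$ term is dropped). So the gap is a wrong intermediate claim that happens not to change the final conclusion, but as written the proof of the base case is incorrect and needs to be replaced by the $\Delta$-dependent estimate.
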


\begin{proof}
Let $\mathcal{A}_{s.o.}$ be a $t$-round algorithm with error probability $p$, i.e.,
it has time profile $(t,t,\ldots,t)$.
Applying Lemma~\ref{lem:roundelimination} $t(2\Delta-1)$ times we get
an algorithm $\mathcal{A}_{s.o.}'$ with time profile $(0,0,\ldots,0)$ and
error probability $p_0 = O(p^{3^{-t(2\Delta-1)}})$.
We now claim that $p_0$ must also be at least $8^{-\Delta}$.  Any 0-round orientation
algorithm can be characterized by a real vector $(q_1,\ldots,q_{2\Delta-1})$,
where $q_i$ is the probability that an edge colored $i$ is oriented as $0\to 1$.
Without loss of generality, suppose that $q_1,\ldots,q_\Delta \ge 1/2$.
Fix any $v\in V(\InfTree)$ labeled 1.  The probability that $v$ is a sink is at least
the probability that its edges are initially colored $\{1,\ldots,\Delta\}$ and that they are all oriented to $v$,
hence $p_0 \ge {2\Delta-1 \choose \Delta}^{-1} \cdot 2^{-\Delta} \ge 2^{-3\Delta}$.
Combining the upper and lower bounds on $p_0$ we have
\begin{align*}
2^{3\Delta} &\ge p_0^{-1} \,=\, \Omega\left((p^{-1})^{3^{-t(2\Delta-1)}}\right),
\intertext{and taking logs twice we have}
\log(3\Delta) &\ge \log\log p^{-1} - t(2\Delta-1)\log 3 - O(1),
\end{align*}
\noindent which implies that $t = \Omega(\Delta^{-1}\log\log p^{-1})$.
\end{proof}

\begin{theorem}\label{thm:EC-lower-bound}
Even on 2-vertex colored trees or
2-vertex colored graphs of girth $\Omega(\log_\Delta n)$,
sinkless orientation and
$(2\Delta-2)$-edge coloring require $\Omega(\log_\Delta\log n)$ time in $\RandLOCAL$
and $\Omega(\log_\Delta n)$ time in $\DetLOCAL$.
\end{theorem}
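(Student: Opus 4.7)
The plan is to assemble three ingredients: Theorem~\ref{thm:so-ec-reduction} to reduce $(2\Delta-2)$-edge coloring to sinkless orientation on 2-vertex colored bipartite graphs of minimum degree $\Delta$; Lemma~\ref{lem:zeroround} to obtain a weak $\RandLOCAL$ lower bound of $\Omega(\Delta^{-1}\log\log n)$ on the infinite random tree $\InfTree$; and the randomized-to-deterministic and deterministic-to-randomized simulation theorems of~\cite{ChangKP16} to amplify the $\Delta$-dependence to the tight $\log_\Delta$ form. Since trees and 2-vertex colored bipartite graphs of girth $\Omega(\log_\Delta n)$ both fit the template of Theorem~\ref{thm:so-ec-reduction}, it suffices to prove both lower bounds for sinkless orientation on these classes.

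The first substantive step is transferring Lemma~\ref{lem:zeroround} from $\InfTree$ to finite instances. Given any $\RandLOCAL$ sinkless-orientation algorithm $\mathcal{A}$ of round complexity $t$ with global failure probability $1/\poly(n)$ on an $n$-vertex 2-vertex colored tree or high-girth bipartite graph, the $t$-neighborhood of every edge is a tree isomorphic to the corresponding neighborhood in $\InfTree$. So $\mathcal{A}$ can be simulated on $\InfTree$ by having each processor ignore the auxiliary random $(2\Delta-1)$-edge coloring that $\InfTree$ supplies and rely only on its private randomness and the 2-vertex coloring. The per-edge failure probability of the simulation is at most $1/\poly(n)$, and Lemma~\ref{lem:zeroround} then forces $t = \Omega(\Delta^{-1}\log\log n)$, a \emph{weak} $\RandLOCAL$ lower bound.

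Next I invoke~\cite[Theorem~5]{ChangKP16}, which promotes any randomized lower bound of this form into a $\DetLOCAL$ lower bound of $\Omega(\log_\Delta n)$ on the same class of instances, so long as sinkless orientation is a locally checkable problem on the class, which it plainly is. Plugging the resulting $\Omega(\log_\Delta n)$ deterministic bound into the randomized speedup theorem~\cite[Theorem~3]{ChangKP16} then yields the target $\Omega(\log_\Delta \log n)$ $\RandLOCAL$ lower bound. Theorem~\ref{thm:so-ec-reduction} transfers both conclusions to $(2\Delta-2)$-edge coloring.

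I expect the main obstacle to be bookkeeping around the hypotheses of the~\cite{ChangKP16} simulations rather than any fresh mathematical content: one must check that the graph family in question (2-vertex colored trees, or 2-vertex colored bipartite graphs of girth $\Omega(\log_\Delta n)$) is closed under the constructions used there, and that the 2-vertex coloring can be consistently supplied along the chain of simulations. A secondary concern is choosing the hidden constant in the girth hypothesis large enough that the radius examined by the deterministic algorithm produced at the intermediate stage still sees only tree-like neighborhoods, so that each step in the reduction preserves the tree-structured argument.
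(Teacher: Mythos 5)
Your proposal takes essentially the same route as the paper: apply Lemma~\ref{lem:zeroround} to get the weak $\Omega(\Delta^{-1}\log\log n)$ randomized lower bound for sinkless orientation on locally tree-like neighborhoods, amplify via~\cite[Theorem~5]{ChangKP16} to $\Omega(\log_\Delta n)$ in $\DetLOCAL$ and then via~\cite[Theorem~3]{ChangKP16} to $\Omega(\log_\Delta\log n)$ in $\RandLOCAL$, and transfer both bounds to $(2\Delta-2)$-edge coloring through Theorem~\ref{thm:so-ec-reduction}. The paper applies Lemma~\ref{lem:zeroround} to both problems at once (the reduction is invoked implicitly), while you run the amplification chain for sinkless orientation first and transfer at the end, but the content is identical.
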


\begin{proof}
Consider any sinkless orientation or $(2\Delta-2)$-edge coloring algorithm with local probability of failure $p$.
Lemma~\ref{lem:zeroround} applies to any vertex $v$ and any radius $t$ such that $N^t(v)$ is consistent with
a subgraph of $\InfTree$.  Thus, on degree-$\Delta$ trees or graphs of girth $\Omega(\log_\Delta n)$~\cite{Dahan14,Bollobas78},
we get $\Omega(\min\{\Delta^{-1}\log\log p^{-1},\, \log_\Delta n\})$ lower bounds.   Following the same proof
as~\cite[Theorem 5]{ChangKP19}, this implies an $\Omega(\log_\Delta n)$ lower bound in $\DetLOCAL$,
which, according to \cite[Theorem 3]{ChangKP19}, implies an
$\Omega(\log_\Delta\log n)$ lower bound in $\RandLOCAL$.  In other words, the weak $\RandLOCAL$ lower bound
$\Omega(\Delta^{-1}\log\log n)$ implied by Lemma~\ref{lem:zeroround}
\emph{automatically} implies a stronger lower bound.
\end{proof}

\section{Lower Bounds for Recoloring-Type Algorithms}\label{sect:Vizing}

In this section, we show that for $c\in[1,\frac{\Delta}{3}]$, any algorithm for $(\Delta+c)$-edge coloring based on 
\emph{extending partial colorings by recoloring subgraphs} 
needs $\Omega(\frac{\Delta}{c}\log\frac{cn}{\Delta})$ rounds.

\begin{theorem}\label{thm4}
Let $\Delta$ be the maximum degree and $c\in[1,\frac{\Delta}{3}]$.
For any $n$, there exists an $n$-vertex graph $G=(V,E)$
and a partial edge coloring
$\phi : E \rightarrow \{1,\ldots,\Delta+c,\bottom\}$, with exactly one uncolored edge $e_0$ ($\phi(e_0) =\; \bottom$)
satisfying the following property.
For any total edge coloring $\phi' : E\rightarrow \{1,\ldots,\Delta+c\}$ of $G$,
$\phi$ and $\phi'$ differ on a subgraph of diameter $\Omega(\fr{\Delta}{c}\log(\fr{cn}{\Delta}))$.
\end{theorem}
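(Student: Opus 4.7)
The plan is to build $G$ together with $\phi$ as a recursive binary tree of ``rigid gadgets,'' each of size $\Theta(\Delta/c)$, so that any valid total $(\Delta+c)$-coloring $\phi'$ must differ from $\phi$ along a long root-to-leaf path of the tree. Set $w = \Theta(\Delta/c)$ and $L = \Theta(\log(cn/\Delta))$, and take a rooted binary tree $\mathcal{T}$ of depth $L$ with $\Theta(cn/\Delta)$ nodes. Each node $v$ of $\mathcal{T}$ is realized as a gadget $\mathcal{G}_v$ on $\Theta(w)$ vertices of maximum degree $\Delta$, attached to its parent via a single ``bridge edge.'' The root gadget holds the uncolored edge $e_0$ as its bridge-to-parent; every other edge is precolored by $\phi$ in a specific rigid pattern. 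A direct count gives $\Theta(w \cdot (cn/\Delta)) = \Theta(n)$ total vertices, as required.

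\textbf{Gadget design and Propagation.} Each gadget is essentially a path $v_0, v_1, \ldots, v_w$ with auxiliary pendant subtrees attached at each $v_i$. I would precolor the pendants so that only two colors of the $\Delta+c$ available are left ``free'' at each $v_i$ for the two path edges at $v_i$. The central Propagation property I would establish is: if $\phi'$ disagrees with $\phi$ on the bridge-to-parent (the edge at $v_0$), then $\phi'$ disagrees with $\phi$ on \emph{every} path edge $v_0v_1, v_1v_2, \ldots, v_{w-1}v_w$ and on at least one bridge-to-child attached at $v_w$. Intuitively, the $c$ extra colors provide just enough slack for one local swap at each $v_i$, shifting a color-defect one position along the path; after $w \approx \Delta/c$ shifts, the accumulated defect must be discharged through a bridge-to-child. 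The branching factor $2$ at $v_w$ is essential: $\phi'$ may be able to absorb the defect into one child, but absorbing it in both simultaneously would require more than $c$ unused colors, so at least one child inherits the defect and the cascade survives.

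\textbf{Cascade argument and main obstacle.} Given Propagation, the theorem follows by a cascade argument: since $\phi(e_0) = \bottom$ and $\phi'(e_0)$ is some color, $\phi$ and $\phi'$ disagree on $e_0$; Propagation in the root gadget extends the disagreement through $w$ path edges to a depth-$1$ bridge $b_1$; Propagation in the corresponding child gadget extends it by another $w + 1$ edges to some $b_2$, and so on. After $L$ applications we obtain a connected disagreement path of length $\Theta(Lw) = \Omega(\frac{\Delta}{c}\log(\frac{cn}{\Delta}))$, lower-bounding the diameter of the disagreement subgraph. The hardest step will be verifying Propagation: in particular, ruling out ``global shortcuts'' in which $\phi'$ performs a long multi-gadget Kempe-chain swap to bypass the intended local forcing. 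This should require a potential-function accounting of the free-color palette along each gadget path, with the hypothesis $c \le \Delta/3$ playing a critical role in bounding the available slack so that no such shortcut can ``swallow'' the defect before it reaches a bridge.
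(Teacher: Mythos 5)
Your proposal takes a genuinely different route from the paper, but it has gaps that I do not think can be repaired as stated.

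\textbf{What the paper actually does.} The paper does not build a tree of ``rigid gadgets'' and does not establish a connected disagreement path. It builds a single layered bipartite construction: the palette is split into two equal halves $S_0,S_1$ of size $k=(\Delta+c)/2$, each layer-to-layer bipartite block $K_{k',k}$ (with $k'=\Delta-k$) is colored from one half, alternating by parity, and the number of vertices grows geometrically with ratio $k/k'=1+\Theta(c/\Delta)$ so that $\ell=\Theta((\Delta/c)\log(cn/\Delta))$ layers fit in $n$ vertices. The argument is then a clean reverse induction: if $\phi'$ agrees with $\phi$ on the last few layers, then at every vertex of layer $i$ all the upward edges are saturated with one half of the palette, which forces the upward edges at layer $i-1$ to use the other half, and so on back to $u_0$ and $v_0$ --- which makes $e_0$ uncolorable. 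Hence \emph{some} edge near the frontier must change. Together with the trivial fact that $e_0$ changes, this already gives weak diameter $\Omega(\ell)$. No propagation lemma and no connected chain of disagreements is needed.

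\textbf{Where your proposal breaks.} The core of your plan --- ``precolor the pendants so that only two colors are left free at each $v_i$ for the two path edges'' --- cannot be realized. A path vertex $v_i$ has two path edges and at most $\Delta-2$ pendant edges (since $\deg(v_i)\le\Delta$). Even if the pendants force $\Delta-2$ distinct colors to be used at $v_i$ in every valid extension, that still leaves $(\Delta+c)-(\Delta-2)=c+2\ge 3$ colors available for the two path edges; your forcing mechanism does not close. Worse, the pendants themselves are part of $G$, so $\phi'$ is free to recolor them --- you are implicitly assuming the pendants are rigid, which is exactly the kind of statement your Propagation lemma is supposed to deliver, so the argument is circular. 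You also flag the risk of ``global shortcuts'' (Kempe chains passing through several gadgets) and say a ``potential-function accounting'' should handle it; that is precisely the missing lemma, and it is where all the work lives. As written, the proof has a hole exactly at its load-bearing step.

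\textbf{Other issues worth noting.} (i) Your size accounting takes each gadget to have $\Theta(w)=\Theta(\Delta/c)$ vertices of degree $\Delta$; for $c>1$ that is fewer than $\Delta$ vertices, which is impossible in a simple graph without extra vertices hosting the pendants. (ii) The heuristic ``$c$ extra colors give slack for $\Delta/c$ shifts'' is not justified; the paper's $\Delta/c$ factor comes from the geometric-growth exponent $\log_{1+\Theta(c/\Delta)}(\cdot)$, not from counting local swaps. (iii) Your plan proves a strictly stronger statement (a connected disagreement path) than the theorem requires (weak diameter of the disagreement set), so you are taking on extra difficulty for no gain. I suggest dropping the gadget tree and revisiting the layered, palette-halving construction: the geometric layer growth is what turns the palette constraint $c$ into the $(\Delta/c)\log(\cdot)$ lower bound, and the frontier contradiction avoids the need to track a defect edge-by-edge.
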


Suppose that $G$ is a partially $(\Delta+c)$-edge colored graph, where an edge $e_0$ in uncolored. A natural approach to color $e_0$ is to find an ``alternating path'' $e_0e_1\cdots e_\ell$, and then recolor the path. That is, for $0\leq i\leq \ell-1$, let the new color of $e_i$ be the old color of $e_{i+1}$, and then color the last edge $e_\ell$ by choosing any available color (if possible).
This type of approach has successfully led
to a distributed algorithm for Brooks' theorem~\cite{PanconesiS95}. Specifically, given a $(\Delta+1)$-vertex coloring, it was shown in~\cite{PanconesiS95} that
a $\Delta$-coloring can be computed in $\poly(\log n)$ time, independent of $\Delta$.  See Ghaffari et al.~\cite{GhaffariHKM18} for several faster algorithms.
However, Theorem~\ref{thm4} implies the existence of a graph where any alternating subgraph has diameter
$\Omega(\frac{\Delta}{c}\log\frac{cn}{\Delta})$, which is expensive for large $\Delta$.
The remainder of this section is a proof of Theorem~\ref{thm4}.

\paragraph{Construction.}
Without loss of generality, assume that $\Delta+c$ is even, and let $k=\frac{\Delta+c}{2}$.
We divide the color palette $\{1,\ldots,\Delta+c\}$ into two equal-size sets
$S_0=\{1,\ldots,k\}$ and $S_1=\{k+1,\ldots,\Delta+c\}$.
(One may refer to Figure~\ref{fig:augpath} for an example, with $\Delta=5, c=1$.  In the figure
blue edges are colored from palette $S_0$ and pink edges from $S_1$.)
Let $k'=\Delta-k$.

The graph $G^*(\ell,\Delta,c)$ consists of one uncolored edge $e_0=\{u_0,v_0\}$; all other
vertices are arranged in layers $1,\ldots,\ell$ and all other edges connect two vertices in adjacent layers
or layers $i$ and $i+3$, for some $i$.   In $G^*(\ell,\Delta,c)$, $e_0$ is a bridge and the subgraphs attached to $u_0$ and $v_0$
are structurally isomorphic, but colored differently.  Thus, we focus on the half of $G^*$ attached to $u_0$.

\paragraph{Base Case.}
Layer 1 consists of $k$ vertices attached to $u_0$.  They are initially colored with distinct colors from $S_0$.

\begin{figure}
\begin{center}
\includegraphics[width=.65\linewidth]{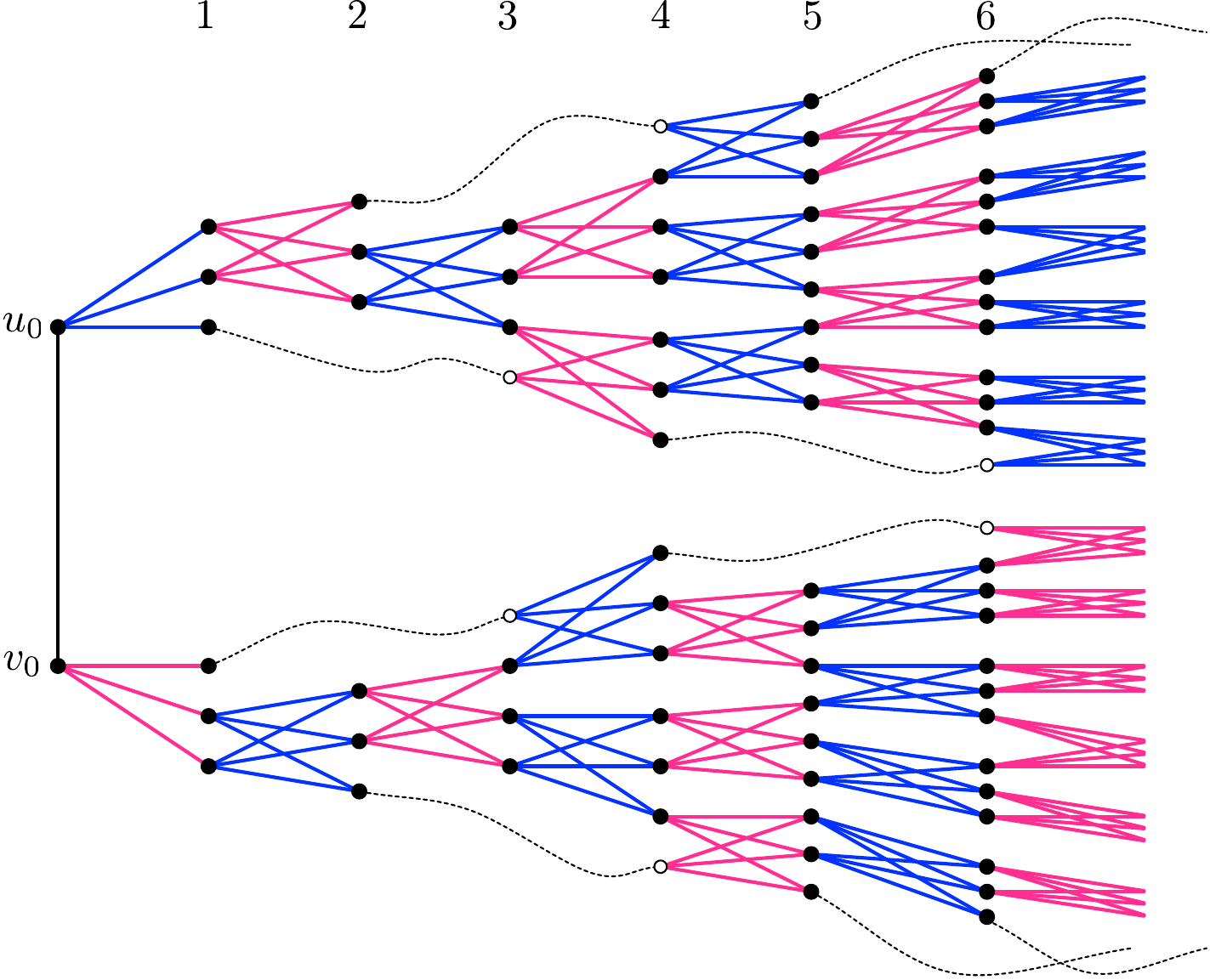}
\end{center}
\caption{An example of the construction when $\Delta=5$, $c=1$, $k=3$, $k'=2$, and $\ell \ge 7$.
Edges colored by palette $S_0=\{1,2,3\}$ are blue, and edges colored by palette $S_1=\{4,5,6\}$ are pink.
Leftover vertices in layer $i-2$ are also depicted (hollow) in layer $i$, and joined by a dashed curve.
They represent the same vertex, not two different vertices.}
\label{fig:augpath}
\end{figure}

\paragraph{Inductive Step.}
The $(i+1)$th layer is constructed as follows.  We take all the vertices at layer $i$ and the
\emph{leftover} vertices at layer $i-2$ and partition them into groups of size $k'$; any ungrouped
vertices are called \emph{leftovers} at level $i$.  (In Figure~\ref{fig:augpath} a leftover vertex in layer $i-2$ is drawn
twice, solid in layer $i-2$ and hollow when it is \emph{promoted} to layer $i$; they are connected by a
dashed line.) The grouping
is arbitrary, so long as all vertices promoted from layer $i-2$ are grouped.
Each group forms the lefthand side of a complete bipartite graph $K_{k',k}$.
Layer $i+1$ consists of the righthand side of all the (disjoint) copies of $K_{k',k}$.
All the edges in these graphs are properly colored with $S_b$ where $b=i\, \operatorname{mod} 2$.
(The subgraph attached to $v_0$ is constructed in the same way, except that we
flip the parity: the complete bipartite graphs are colored with $S_b$, $b=(i+1)\, \operatorname{mod} 2$.)

Define $n_i$ and $l_i$ as the number of layer-$i$ vertices and layer-$i$ leftover vertices.\footnote{The leftover vertices at layer $i-2$ are still considered as layer $i$ vertices, even though they have been promoted to layer $i$.}  According to the construction,
$(n_i)$ and $(l_i)$ satisfy the following recurrences.
\begin{align*}
n_1 &= k\\
l_{-1} = l_0 &= 0\\
n_{i+1} &= k\floor{\frac{n_i + l_{i-2}}{k'}} 	& \mbox{ for $i+1 \ge 2$}\\
l_{i}	&= (n_i + l_{i-2}) \, \operatorname{mod} k'			& \mbox{ for $i\ge 1$}
\end{align*}

Clearly $n_i = \Theta((k/k')^i)$.  Define $\epsilon = \frac{2c}{\Delta-c}$ so that $k/k' = \frac{\Delta+c}{\Delta-c} = 1+\frac{2c}{\Delta-c} = 1+\epsilon$.
The total number
of vertices in $G^*(\ell,\Delta,c)$ is $n = \Theta(\epsilon^{-1} n_\ell) = \Theta(\epsilon^{-1}(1+\epsilon)^\ell)$
and $\ell = \Theta(\log_{1+\epsilon}(\epsilon n)) = \Theta(\frac{\Delta}{c}\log\frac{cn}{\Delta})$.
In particular, when $c$ is constant and $\Delta < n^{1-\Omega(1)}$,
$\ell = \Omega(\Delta\log n)$.
The diameter of the graph is at least $\ell/3$ since, by construction, no edge crosses more than 3 layers.
 We remark that the purpose of the requirement $c \leq \Delta/3$ in the statement of Theorem~\ref{thm4} is to make $\epsilon \leq 1$. Our construction still applies to the case of $c > \Delta/3$, but it gives a worse bound on $\ell$ when $c$ is close to $\Delta$. 

Let $\phi$ be the initial partial edge-coloring of $G^*(\ell,\Delta,c)$, with $e_0$ left uncolored,
and $\phi'$ be any total edge-coloring.
We claim that $\phi'$ recolors at least one edge in the subgraph induced by layers $\ell-5,\ldots,\ell$.
Suppose otherwise.  Fix any vertex $v$ in layer $\ell-6$.
It has exactly $k$ neighbors in a higher layer,
either $\ell-5$ (if $v$ is not a leftover vertex) or $\ell-3$ (if $v$ is a leftover vertex); each such neighbor $u$ is adjacent to $k$ edges to a higher layer,
all of which are colored from the palette $S_1$ (without loss of generality, assume $\ell$ is even).
That means that all edges connecting $v$ to a higher layer must be colored from $S_0$.
By a reverse induction from $\ell-6$ down to $0$, it follows that all edges from $u_0$ to layer 1 must be colored with $S_0$.
A symmetric argument on $v_0$'s side shows that all edges from
$v_0$ to layer 1 must be colored with $S_1$, hence $e_0$ cannot be properly colored by $\phi'$.

\section{Randomized Edge Coloring Algorithm}\label{sect:randAlg}

Elkin, Pettie, and Su~\cite{ElkinPS15} showed that for any {\em constant} $\epsilon > 0$, 
there is a number $\Delta_\epsilon$ (depending only on $\epsilon$) such that for $\Delta > \Delta_\epsilon$,  $\Delta(1+\epsilon)$-edge coloring can be solved in
\[
O(T_{LLL}(n, \poly(\Delta),  \exp(-\epsilon^2\Delta/ \poly (\log \Delta))) + T^\ast(n, O(\Delta)))
\]
rounds in the $\RandLOCAL$ model, where
\begin{itemize}
\item $T_{LLL}(n,d,p)$ is the $\RandLOCAL$ complexity for constructive LLL with the parameters $d$ and $p$ on an $n$-vertex dependency graph.
\item $T^\ast(n,\Delta')$ is the $\RandLOCAL$ complexity for $5\Delta'$-edge coloring on an $n$-vertex graph of maximum degree $\Delta'$.
\end{itemize}
It is unclear to what extent the algorithm of~\cite{ElkinPS15} (or its predecessor~\cite{DubhashiGP98})
still works if we allow $\epsilon = o(1)$.  
For instance, is it possible to solve $(\Delta+\Delta^{0.7})$-edge coloring  in $\RandLOCAL$ in $O(\poly \log n)$ time?

\paragraph{Challenges to Reducing the Number of Colors.}
The analysis of our algorithm is substantially more involved than all previous edge coloring algorithms~\cite{PanconesiS97,DubhashiGP98,ElkinPS15}.  Here we give a short technical review of the types of
issues faced in distributed edge coloring.

Previous algorithms~\cite{ElkinPS15,DubhashiGP98} are based on the \emph{R\"{o}dl Nibble} method.
In each round, every uncolored edge nominates itself to be colored with probability $O(\epsilon)$ and remains idle otherwise;
a self-nominated edge picks a free color from its available palette and \emph{permanently} colors itself if the colors
selected by adjacent edges do not conflict with it.  The goal is to show that natural quantities (palette size, degree of vertices
in the uncolored graph, etc.) are sharply concentrated around their expectations.
The first issue is finding the right concentration bound.
Chernoff bounds are insufficient for several reasons, one of which is the need for independence (or negative dependence~\cite{DubhashiPanconesi09,DubhashiR98})
between the events of interest.
Azuma's inequality and variants fall short due to the weakness of Lipschitz properties (bounded differences).\footnote{This can be
seen by considering the problem of bounding the $c$-degree of a vertex $v$ (the number of edges incident to $v$ with color $c$ in their palettes).
This quantity potentially depends on the choices of $\Omega(\Delta^3)$ edges within distance 3 of $v$, 
and each such choice could affect $v$'s $c$-degree by 1 or more.  The sum of these Lipschitz constants completely dwarfs the expected 
$c$-degree, which makes Azuma-type inequalities inapplicable.}
The algorithm of Dubhashi, Grable, and Panconesi~\cite{DubhashiGP98}
used a specialized concentration inequality of Grable~\cite{Grable98}, whereas our algorithm and that of 
Elkin, Pettie, and Su~\cite{ElkinPS15} use one~\cite[Thm.~(8.5)]{DubhashiPanconesi09} 
that is syntactically closer to Chernoff/Hoeffding/Azuma-type inequalities.  
(It is restated as Theorem~\ref{lem:concentration} in Appendix~\ref{sect:concentration}.) 

The purpose of the ``self-nomination'' step in~\cite{DubhashiGP98,ElkinPS15} is to simplify certain aspects of the analysis.  
For example, the probability that an edge is successfully colored, \emph{conditioned on it nominating itself}, 
is a very high $1-O(\epsilon)$.  Because of this, we can afford to toss out any color $c$ from $e$'s palette if any 
\emph{self-nominated} 
edge $e'$ adjacent to $e$ selects $c$ --- regardless of whether $e'$ successfully colors itself. 
This type of subtle change generally makes things simpler.  Some events which would ordinarily be dependent 
become independent, and some variables (e.g., a vertex's $c$-degree) now depend on $\Theta(\Delta^2)$ variables
rather than $\Theta(\Delta^3)$.  The downside of this approach is that $\Omega(\epsilon^{-1})$ steps are necessary
to color a large fraction of the graph, and \emph{with each coloring step}
the quantities we are monitoring ($c$-degree, palette size, etc.) deviate further from their expectations.
When $\epsilon^{-1}$ is polynomial in $\Delta$, the accumulated deviation errors make it impossible to achieve
palette sizes as small as $\Delta + \tilde{O}(\sqrt{\Delta})$.  

\paragraph{Our Approach.}  Our algorithm is more ``natural'' than~\cite{DubhashiGP98,ElkinPS15}.  Roughly speaking, in each step each
edge chooses a color uniformly at random from its available palette and permanently colors itself if there are no local
conflicts (\oneshot).  I.e., we dispense with the low probability self-nomination step.  Let $p_i$ be a lower bound on the palette
size after $i$ such steps, and $d_i,t_i$ be upper bounds on uncolored degree and $c$-degree of any vertex, respectively.
It is straightforward to show that if everything behaves precisely according to expectation, the $(d_i)$ sequence 
shrinks by a $(1-e^{-2})$ factor in each step and both $(p_i), (t_i)$ shrink by a $(1-e^{-2})^2$ factor.  In reality
these quantities \emph{do} deviate from their expectations, and even tiny, $(1+o(1))$-factor deviations 
compound themselves and spin out of control.  One reason our analysis is more complex than~\cite{DubhashiGP98,ElkinPS15} 
is that we look at concentration up to \emph{lower order terms}.  For example, although $p_i \approx t_i$,
we bound $\beta_i = \frac{p_i}{t_i}-1$, which captures accumulated errors beyond the leading constants.

\paragraph{The Use of the Distributed LLL.} As in~\cite{ElkinPS15}, we obtain good concentration on $d_i,p_i,t_i$
with probability $1-\exp(-\epsilon^2\Delta/\log^{4+o(1)}\Delta)$,
which is $1-1/\poly(n)$ if $\Delta$ and $\epsilon$ are sufficiently large.  If not, we must invoke a distributed LLL
algorithm to make sure each random coloring experiment introduces bounded deviation errors in $d_i,p_i,t_i$.
A constant fraction of the edges are colored in each step. 
For many parameter regimes the running
time is dominated by $O(\log\epsilon^{-1})$ calls to an distributed LLL algorithm, as our algorithm needs to run \oneshot\ for this many iterations.

\subsection{Our Result}
In this section, we prove the following theorem, which improves upon the algorithms of~\cite{DubhashiGP98,ElkinPS15}. Here $T^\ast(n, \Delta')$ is the $\RandLOCAL$ complexity of the $5 \Delta'$-edge coloring problem, and $\TLLL(n,p,d)$ is the complexity of distributed LLL with parameters $p$ and $d$.

\begin{theorem}\label{thm:randAlg-main}
Let $\epsilon = \omega\left(\frac{\log^{2.5} \Delta}{\sqrt{\Delta}}\right)$
be a function of $\Delta$.  
There is a $\RandLOCAL$ algorithm for $(1+\epsilon)\Delta$-edge coloring 
in time
\[
O\left(\log(1 / \epsilon)\right) \cdot T_{LLL}\left(n, d, p\right)
\;+\; T^\ast\left(n, O(\epsilon \Delta)\right),
\]
where $p = \exp(-\epsilon^2 \Delta / \log^{4+o(1)} \Delta) = \exp(-\omega(\log \Delta))$ and $d = O(\poly(\Delta))$.
\end{theorem}

The statement of Theorem~\ref{thm:randAlg-main} guarantees that whenever $\epsilon$ and $\Delta$ satisfy the specified condition, we always have $\exp(-\epsilon^2 \Delta / \log^{4+o(1)} \Delta) = \exp(-\omega(\log \Delta))$, and so we may use a distributed LLL algorithm under any criterion $p(ed)^\lambda < 1$.
There is an inherent tradeoff between the palette size and the runtime in Theorem~\ref{thm:randAlg-main}.
Selecting smaller $\epsilon$ allows us to use fewer colors, but it leads to a higher
$p = \exp(-\epsilon^2 \Delta / \log^{4+o(1)} \Delta)$, which may increase the runtime of the LLL algorithm.

\paragraph{Runtime of $5\Delta'$-edge Coloring.}
It is known that $T^\ast(n,\Delta')$ is at most $O(\log \Delta')$ plus the $\DetLOCAL$ complexity of
 $3\Delta'$-edge coloring on $\poly(\log n)$-size graphs.
This is achieved by applying the $(\tilde{\Delta}+1)$-vertex coloring algorithm of~\cite{BEPS16} to the line graph, where $\tilde{\Delta} = 2\Delta'-2$ is the maximum degree of the line graph.

For the special case of $\Delta' = \log^{1+\Omega(1)} n$, $(2\Delta' - 1)$-edge coloring can be solved in $\RandLOCAL$ $O(\log^\ast n)$ rounds~\cite{ElkinPS15}. The state-of-the-art $\DetLOCAL$ 
algorithm~\cite{GhaffariHKMSU17} for $(2+x)\Delta'$-edge coloring has complexity
\[
O(\log^2 \Delta'\cdot x^{-1}\cdot \log \log \Delta' \cdot \log^{1.71} \log \log \Delta' \cdot \log n)
\]
for any $x > 1/\log \Delta'$.
Thus, combining~\cite{ElkinPS15,BEPS16,GhaffariHKMSU17} with $x=1$, we have
\[
T^\ast(n,\Delta') = O(\log^3 \log n \cdot \log \log \log n \cdot \log^{1.71} \log \log \log n ) = (\log\log n)^{3+o(1)}.
\]
This is achieved as follows. If $\Delta' = \Omega(\log^2 n)$, we run the $O(\log^\ast n)$-time $\RandLOCAL$ algorithm of~\cite{ElkinPS15}. Otherwise, we run the $\RandLOCAL$ graph shattering phase of~\cite{BEPS16} (using the first $2\Delta'$ colors) followed by the
$\DetLOCAL$ algorithm of~\cite{GhaffariHKMSU17} (using the remaining $3\Delta'$ colors) on each component.

\subsection{Time Complexity Analysis}
We calculate the time complexity  for Theorem~\ref{thm:randAlg-main} in different parameter regimes of $\Delta$ and $\epsilon$.

\paragraph{Running Time on General Graphs.}   Our algorithm computes a $(1+\epsilon)\Delta$-edge coloring in  $O(\log n)$ time when 
$\epsilon = \omega(\log^{2.5} \Delta) / \sqrt{\Delta})$. Observe that in this parameter regime, we have $\log(1/\epsilon) = O(\log \Delta)$. Applying the distributed LLL algorithm of Chung, Pettie, and Su~\cite{ChungPS17} under the criterion $e p d^2 < 1$, we obtain $\TLLL(n,d,p) = O(\log_{1/ep(d+1)} n) = O(\log_{\Delta} n)$, as $\log (1/p) = \omega(\log \Delta)$.  Therefore, $O\left(\log(1 / \epsilon)\right) \cdot T_{LLL}\left(n, d, p\right) = O(\log n)$.
By applying one of the Ghaffari-Harris-Kuhn LLL algorithms~\cite{FischerG17,GhaffariHK17},
the cost per LLL is $O(d^2 + 2^{O(\sqrt{\log\log n})}) = O(\Delta^6 + 2^{O(\sqrt{\log\log n})})$.

The term $T^\ast\left(n, O(\epsilon \Delta)\right) = (\log \log n)^{3 + o(1)}$ can become the 
dominant term when $\Delta$ is sufficiently large.
In particular, when $\epsilon = \Omega(1)$, we have $\log(1/\epsilon) = O(1)$, and so our algorithm is able to finish in
$O(\log_{\Delta} n) + (\log \log n)^{3 + o(1)}$ time.

\paragraph{Runtime on Trees.}
Consider running our algorithm on a tree with palette size $(1+\epsilon)\Delta$, where
$\epsilon = \Omega\left(\frac{\log^{2.5+x} \Delta}{\sqrt{\Delta}}\right)$, for some positive constant $x$.
Then the LLL parameters are $d = \poly(\Delta)$ and $p = \exp(-\epsilon^2 \Delta / \log^{4+o(1)} \Delta)$ in Theorem~\ref{thm:randAlg-main}, which satisfy the
criterion $p(ed)^\lambda < 1$ with $\lambda = \Omega(\log^x \Delta)$.
Using our randomized LLL algorithm for tree-structured dependency graphs (Section~\ref{sect:tree-LLL}), we have
\[
T_{LLL}\left(n, \poly(\Delta), \exp\left(-\epsilon^2 \Delta / \log^{4+o(1)}\right)\right) = O\left(\max\left\{\fr{\log\log n}{\log\log\log n},\, \log_{\log \Delta} \log n\right\}\right).
\]

We claim that $T^\ast(n,\Delta')=O(\log^\ast \Delta' + \log_{\Delta'} \log n)$ on trees. This is achieved as follows. First, do a $O(\log^\ast \Delta')$-time randomized procedure to partially color the graph using the first $2\Delta'$ colors so that the remaining uncolored components have size $\poly(\log n)$. This can be done using the algorithm of~\cite{ElkinPS15} without invoking any distributed LLL algorithm. Then, apply our deterministic $O(\log_{\Delta'} \tilde{n})$-time algorithm for $\Delta'$-edge coloring trees
(Section~\ref{sect:upperbound}) to each uncolored component separately, using a set of $\Delta'$ fresh colors.

To sum up, the time complexity of $(1+\epsilon)\Delta$-edge coloring trees is
\begin{align*}
&O\left(\log(1 / \epsilon) \cdot \max\left\{\fr{\log\log n}{\log\log\log n},\, \log_{\log \Delta} \log n\right\} + \log^\ast \Delta + \log_\Delta \log n  \right)\\
&= O\left(\log(1 / \epsilon) \cdot \max\left\{\fr{\log\log n}{\log\log\log n},\, \log_{\log \Delta} \log n\right\}\right).
\end{align*}
This nearly matches our $\Omega(\log_\Delta \log n)$ lower bound (Section~\ref{sect:lowerbound}).

For the case  $\epsilon = \Omega(1)$, our algorithm runs faster, as we can use $\lambda = \Delta / \poly \log \Delta$, and so the running time of the distributed LLL becomes $O\left( \max\left\{\fr{\log\log n}{\log\log\log n},\, \log_{\Delta} \log n\right\}\right)$. 
In this case, our algorithm finds a $(1+\epsilon)\Delta$-edge coloring in $O\left( \max\left\{\fr{\log\log n}{\log\log\log n},\, \log_{\Delta} \log n\right\}\right)$ time.

\subsection{The Algorithm and Its Invariants}\label{sect:alg-phase1}
Our algorithm has two phases. The goal of the first phase is to color a subset of the edges using the colors from $\mathcal{C}_1 \bydef \{1, \ldots, \Delta(1+\xi)\}$ such that the subgraph induced by the uncolored edges has degree less than $\Delta' = \frac{1}{5}(\epsilon  - \xi)\Delta = \Theta(\epsilon \Delta)$. The first phase consists of $O(\log(1 / \epsilon))$ executions of a distributed \Lovasz\ Local Lemma algorithm. The second phase colors the remaining edges using the colors from  $\mathcal{C}_2 \bydef \{\Delta(1+\xi)+1, \ldots, \Delta(1+\epsilon)\}$ using the fastest available coloring algorithm, which takes $T^\ast(n, \Delta')$ time.

\paragraph{Algorithm.} In what follows we focus on the first phase. We write $G_i$ to denote the graph induced by the set of uncolored edges at the beginning of the $i$th iteration. Each edge $e$ in $G_i$ has a palette $\Psi_i(e) \subseteq \mathcal{C}_1$. We write $\deg_i(v)$ to denote the number of edges incident to $v$ in $G_i$ and
$\deg_{c,i}(v)$ to denote the number of edges incident to $v$ that have color $c$ in their palettes.
For the base case, we set $G_1 = G$ and $\Psi_i(e)=\mathcal{C}_1$ for all edges.
In the graph $G_i$ we maintain the following invariant $\mathcal{H}_i$.

\medskip
\noindent{\bf Invariant $\mathcal{H}_i$:} For each edge $e$, vertex $v$, and color $c$,  we have:
\begin{align*}
\deg_i(v)&\leq d_i,\\
\deg_{c,i}(v)&\leq t_i,\\
|\Psi_i(e)|&\geq p_i.
\end{align*}

\paragraph{Parameters.}  Given two numbers $\eta \geq 1$ and $\xi \in (0,\epsilon)$ (which are functions of $\Delta$),
we define three sequences of numbers $\{d_i\}$, $\{t_i\}$, and $\{p_i\}$ as follows.

\medskip

Base case ($i=1$):
$$d_1 \bydef \Delta  \ \ \ \ \ \ \ \  t_1 \bydef \Delta \ \ \ \ \ \ \ \ p_1 \bydef \Delta(1 + \xi)$$

Inductive step ($i>1$):
\begin{align*}
&d_{i} \bydef (1+\delta_{i-1}) d_{i-1}^\diamond & & d_{i-1}^\diamond  \bydef d_{i-1} \cdot \left(1 - (1 - 1/p_{i-1})^{2(t_{i-1} - 1)}\right) \\
&t_{i} \bydef (1+\delta_{i-1}) t_{i-1}^\diamond & & t_{i-1}^\diamond  \bydef t_{i-1} \cdot \left(1 - \frac{t_{i-1}}{p_{i-1}}(1 - 1/p_{i-1})^{2t_{i-1}}\right)\left(1 - (1 - 1/p_{i-1})^{2t_{i-1}}\right)\\
&p_{i} \bydef (1-\delta_{i-1}) p_{i-1}^\diamond & & p_{i-1}^\diamond  \bydef p_{i-1} \cdot \left(1 - \frac{t_{i-1}}{p_{i-1}}(1 - 1/p_{i-1})^{2t_{i-1}}\right)^2
\end{align*}

Drifts (all $i$):
$$\delta_i \bydef \frac{\beta_i}{\eta}
\ \ \ \ \ \ \ \  \ \ \ \ \beta_i \bydef \frac{p_i}{t_i} - 1
\ \ \ \ \ \ \ \  \ \ \ \ \text{(Notice that $\beta_1 = \xi$)}$$

\medskip

The choice of parameters are briefly explained as follows. Consider an ideal situation where $\deg_{i-1}(v)=d_{i-1}$, $\deg_{c,i-1}(v) = t_{i-1}$, and $|\Psi_{i-1}(e)| = p_{i-1}$ for all $c$, $e$, and $v$. 
Consider a very simple experiment called \oneshot{} 
in which each uncolored edge attempts to color itself by 
selecting a color uniformly at random from its available palette. 
An edge $e$ successfully colors itself with probability $(1 - 1/p_{i-1})^{2(t_{i-1} - 1)}$, 
since there are $2(t_{i-1}-1)$ edges competing with $e$ for $c \in \Psi_{i-1}(e)$, and each of these $2(t_{i-1}-1)$ edges selects $c$ with probability $1/p_{i-1}$.
Thus, by linearity of expectation, the expected degree of $v$ after \oneshot\ is $d_{i-1}^\diamond$,
and the parameter $d_{i}$ is simply $d_{i-1}^\diamond$ with some slack.
The parameters $\{t_{i-1}^\diamond, t_i, p_{i-1}^\diamond, p_i\}$ carry analogous meanings.
The term $\beta_i$ represents the {\em second-order} error. We need control over $\{\beta_i\}$ since it influences the growth of the three sequences $\{d_i\}$, $\{t_i\}$, and $\{p_i\}$.

For the base case, it is straightforward to see that we have $\deg_1(v) = \Delta$, $\deg_{c,1}(v) = \Delta$, and $|\Psi_1(e)| = \Delta(1+\xi)$, and thus $G_1$ satisfies the invariant $\mathcal{H}_1$.
For the inductive step, given that $\mathcal{H}_i$ is met in $G_i$, we use a distributed LLL algorithm (based on \oneshot) to color a subset of edges in $G_i$ so that the next graph $G_{i+1}$ 
induced by the uncolored edges satisfies $\mathcal{H}_{i+1}$.

\paragraph{Termination of the First Phase.}
The number of iterations of our algorithm will be 
$i^\star-1 = O(\log (1/\epsilon))$ (Lemma~\ref{lem:estimate2}). We will later see that after the $(i^\star-1)$th iteration,  the degree of the vertices in the remaining uncolored part of the graph  $G_{i^\star}$ satisfies the ``terminating condition''  $d_{i^\star} \leq \frac{1}{5} (\epsilon - \xi)\Delta$. Then we proceed to the second phase.

The purpose for requiring this condition is to create a sufficiently large gap between the maximum degree $\Delta'$ (in the remaining uncolored part of the graph) and the number of available colors (the colors $\mathcal{C}_2 = \{\Delta(1+\xi)+1, \ldots, \Delta(1+\epsilon)\}$ reserved for the second phase), so that we can run a $5\Delta'$-edge coloring algorithm to color all remaining edges in the second phase of the algorithm.

\paragraph{Analysis.}
Recall that $\epsilon = \omega\left(\frac{\log^{2.5} \Delta}{\sqrt{\Delta}}\right)$.
We set $\eta$ to be any function of $\Delta$ that is $\omega(\log\Delta)$
such that $\epsilon \ge \frac{\eta^{2.5}}{\sqrt{\Delta}}$.
We set $\xi = \frac{\epsilon}{6 \eta}$. 
The following lemma shows that under certain criteria, the parameters $\{d_i\}$, $\{t_i\}$, $\{p_i\}$, and $\{\beta_i\}$ are very close to their ``ideal'' values.
The proof is deferred to Section~\ref{sect:estimate}.

\begin{lemma}\label{lem:estimate}
Consider an index $i>1$.
Suppose $\min\{d_{i-1},t_{i-1},p_{i-1}\}=\omega(\log \Delta)$,  $\beta_{i-1} = o(1/ \log \Delta)$, and $\delta_{i-1} = o(\beta_{i-1}/ \log \Delta)$.
Then the following four equations hold.
\begin{align*}
d_i &= d_{i-1} \cdot (1 \pm o(1/\log \Delta)) \cdot (1 - e^{-2}) \\
t_i &= t_{i-1} \cdot (1 \pm o(1/\log \Delta)) \cdot  (1 - e^{-2})^2 \\
p_i &= p_{i-1} \cdot (1 \pm o(1/\log \Delta)) \cdot  (1 - e^{-2})^2 \\
\beta_i &= \beta_{i-1} \cdot  (1 \pm o(1/\log \Delta))  \,/\, (1 - e^{-2})
\end{align*}
\end{lemma}

Based on Lemma~\ref{lem:estimate}, we have the following lemma.

\begin{lemma}\label{lem:estimate2}
Let $i^\star = O(\log (1/\epsilon)) = O(\log \Delta)$ be the largest index such that $\beta_{i^\star-1} \leq 1/\eta$.
Then the following four equations hold for any $1 < i \leq i^\star$.
\begin{align*}
d_i &= (1 \pm o(1/\log \Delta))^{i-1} \Delta (1 - e^{-2})^{i-1}  = (1 \pm o(1)) \Delta (1 - e^{-2})^{i-1} \\
t_i &= (1 \pm o(1/\log \Delta))^{i-1} \Delta (1 - e^{-2})^{2(i-1)} = (1 \pm o(1)) \Delta (1 - e^{-2})^{2(i-1)} \\
p_i &= (1 \pm o(1/\log \Delta))^{i-1} \Delta (1 - e^{-2})^{2(i-1)} = (1 \pm o(1)) \Delta (1 - e^{-2})^{2(i-1)} \\
\beta_i &= (1 \pm o(1/\log \Delta))^{i-1} \xi / (1 - e^{-2})^{i-1} = (1 \pm o(1)) \xi / (1 - e^{-2})^{i-1}
\end{align*}
\end{lemma}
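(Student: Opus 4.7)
}
The plan is to prove the four estimates by induction on $i$, repeatedly invoking Lemma~\ref{lem:estimate}. Lemma~\ref{lem:estimate} tells us that each of $d_i,t_i,p_i,\beta_i$ is obtained from its predecessor by multiplying by a fixed constant ($1-e^{-2}$, $(1-e^{-2})^2$, $(1-e^{-2})^2$, and $1/(1-e^{-2})$, respectively) up to a multiplicative error of $1\pm o(1/\log\Delta)$; unrolling the recurrence $i-1$ times immediately produces the claimed formulas, with the accumulated error factor $(1\pm o(1/\log\Delta))^{i-1}$. To collapse this factor to $1\pm o(1)$ I first bound $i^\star$. From the recurrence for $\beta_i$ we have (tentatively) $\beta_{i}\approx \xi/(1-e^{-2})^{i-1}$, so the defining inequality $\beta_{i^\star-1}\le 1/\eta$ forces $i^\star = O\!\left(\log\tfrac{1}{\xi\eta}\right) = O(\log(1/\epsilon)) = O(\log\Delta)$, where I use $\xi=\epsilon/(6\eta)$ and the assumption $\epsilon=\omega(\log^{2.5}\Delta/\sqrt\Delta)$. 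Then $(1\pm o(1/\log\Delta))^{O(\log\Delta)} = 1\pm o(1)$ by the standard estimate $(1+x)^k\le e^{kx}$.

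The real work is checking that the three hypotheses of Lemma~\ref{lem:estimate} hold at every index $1<i\le i^\star$. Carrying the induction hypothesis along, for each step I would verify:
\begin{itemize}
\item[(a)] $\min\{d_{i-1},t_{i-1},p_{i-1}\}=\omega(\log\Delta)$. By the inductive estimates, $d_{i-1},t_{i-1},p_{i-1}=\Theta(\Delta(1-e^{-2})^{\Theta(i-1)})$, and for $i\le i^\star$ this is at least $\Theta(\Delta\cdot(\xi\eta)^{O(1)})=\Theta(\Delta\cdot\epsilon^{O(1)})$, which is $\omega(\log\Delta)$ under the assumed lower bound on $\epsilon$.
\item[(b)] $\beta_{i-1}=o(1/\log\Delta)$. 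Because $\beta_j$ is (approximately) monotonically increasing in $j$, for $i\le i^\star$ we have $\beta_{i-1}\le \beta_{i^\star-1}\le 1/\eta$, and $1/\eta=o(1/\log\Delta)$ since $\eta=\omega(\log\Delta)$.
\item[(c)] $\delta_{i-1}=o(\beta_{i-1}/\log\Delta)$. This is immediate from the definition $\delta_{i-1}=\beta_{i-1}/\eta$ and $\eta=\omega(\log\Delta)$.
\end{itemize}

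With the hypotheses verified, Lemma~\ref{lem:estimate} applies at every step, and the induction closes: for example, $d_i = d_{i-1}(1\pm o(1/\log\Delta))(1-e^{-2})$ compounded $i-1$ times from the base $d_1=\Delta$ gives $d_i=(1\pm o(1/\log\Delta))^{i-1}\,\Delta(1-e^{-2})^{i-1}$, and similarly for the other three sequences; the second equality in each displayed line of the lemma then follows from the $(1\pm o(1/\log\Delta))^{O(\log\Delta)}=1\pm o(1)$ bound established above.

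The main obstacle in this plan is the bookkeeping around the preconditions of Lemma~\ref{lem:estimate}, and in particular pinning down $i^\star=O(\log\Delta)$ tightly enough to collapse the accumulated $(1\pm o(1/\log\Delta))^{i-1}$ error. Everything else is a mechanical unrolling of the one-step recurrences, so the heart of the argument is the interplay between the choice $\eta=\omega(\log\Delta)$ (which controls $\delta_i$ and $\beta_i$) and the assumed lower bound on $\epsilon$ (which keeps $d_i,t_i,p_i$ large enough for Lemma~\ref{lem:estimate} to be applicable throughout the first phase).
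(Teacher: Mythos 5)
Your proposal is correct and takes essentially the same approach as the paper: induct on $i$, verify the three hypotheses of Lemma~\ref{lem:estimate} at each step (with the only nontrivial check being $\min\{d_{i-1},t_{i-1},p_{i-1}\}=\omega(\log\Delta)$, obtained exactly as you do, by expressing $(1-e^{-2})^{\Theta(i-1)}$ in terms of $\beta_{i-1}\le 1/\eta$ to get a lower bound $\Omega(\Delta\xi^2\eta^2)=\Omega(\eta^5)=\omega(\log\Delta)$), and unroll the one-step recurrences to get $(1\pm o(1/\log\Delta))^{i-1}$, collapsed to $1\pm o(1)$ because $i^\star=O(\log\Delta)$. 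The only thing you are slightly more explicit about than the paper is the (approximate) monotonicity of $\beta_j$, which justifies $\beta_{i-1}\le\beta_{i^\star-1}\le 1/\eta$ for all $i\le i^\star$.
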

\begin{proof}
To prove the lemma, it suffices to show that the condition of Lemma~\ref{lem:estimate} is met for all indices $1 < i \leq i^\star$.
We prove this by an induction on $i$.
By the induction hypothesis the four equations hold at index $i-1$.
We show that the condition of Lemma~\ref{lem:estimate} is met for the index $i$, and so the four equations also hold for index $i$.
Due to $1/\eta = o(1/\log \Delta)$, we already have $\beta_{i-1} = o(1/ \log \Delta)$ and $\delta_{i-1} = o(\beta_{i-1}/ \log \Delta)$.
It remains to prove that $\min\{d_{i-1},t_{i-1},p_{i-1}\}=\omega(\log \Delta)$.
\begin{align*}
&\min\{d_{i-1},t_{i-1},p_{i-1}\} \\
&\geq (1 \pm o(1)) \Delta (1 - e^{-2})^{2(i-1)} & \text{(Induction hypothesis for $d_{i-1},t_{i-1},p_{i-1}$)}\\
&= (1 \pm o(1)) \Delta (1 - e^{-2})^{2(i-2)} (1-e^{-2})^2\\
&= (1 \pm o(1)) \Delta \cdot \left( \frac{(1 - e^{-2} \pm o(1))\xi}{\beta_{i-1}}\right)^2 & \text{(Induction hypothesis for $\beta_{i-1}$)}\\
&\geq (1 - e^{-2} \pm o(1)) \xi^2 \eta^2 \Delta 							&(\beta_{i-1} \leq 1/\eta)\\
&= \Omega(\eta^5) 												& 
\left(\xi = \Omega\left(\frac{\eta^{1.5}}{\sqrt{\Delta}}\right)\right)\\
&= \omega(\log \Delta) & \qedhere
\end{align*}
\end{proof}

It remains to show that (i) the terminating condition $d_{i^\star} \leq \frac{1}{5} (\epsilon - \xi)\Delta$ is satisfied at the end of the $(i^\star-1)$th iteration,
 and (ii)
in each iteration, in $T_{LLL}\left(n, \poly(\Delta), \exp(-\epsilon^2 \Delta / \log^{4+o(1)} \Delta)\right)$ time, invariant $\mathcal{H}_i$ can be maintained.
By Lemma~\ref{lem:estimate2}, we have:
\begin{align*}
d_{i^\star} &= (1 \pm o(1)) \Delta (1 - e^{-2})^{i^\star-1} & (\text{Lemma~\ref{lem:estimate2} for $d_{i^\star}$})\\
&= (1 \pm o(1)) \Delta \cdot \xi / \beta_{i^\star} & (\text{Lemma~\ref{lem:estimate2} for $\beta_{i^\star}$}) \\
&\leq (1 \pm o(1)) \xi \eta \Delta & (\beta_{i^\star} > 1/\eta)
\end{align*}
For our choices of $\eta$ and $\xi$, we have $d_{i^\star} \approx \xi \eta \Delta = \frac{\epsilon \Delta}{6}$.
Since $\frac{1}{5} (\epsilon - \xi)\Delta > \frac{\epsilon \Delta}{6}$, the condition $d_{i^\star} \leq \frac{1}{5} (\epsilon - \xi)\Delta$ is satisfied.

For each $1 < i \leq i^\star$, we have:
\begin{align*}
\delta_i^2 \cdot  \min\{d_{i},t_{i},p_{i}\} &= \beta_i^2 t_i / \eta^2  &(\text{Definition of $\delta_i$})\\
&= (1 \pm o(1)) \cdot \left( \xi / (1 - e^{-2})^{i-1}\right)^2 \cdot \left(\Delta (1 - e^{-2})^{2(i-1)} \right) / \eta^2
& (\text{Lemma~\ref{lem:estimate2} for $t_{i}$, $\beta_{i}$})\\
&= (1 \pm o(1)) \cdot \Delta(\xi / \eta)^2  \\
&= \Omega(\epsilon^2 \Delta / \eta^4) &\text{(Definition of $\xi$)}\\
&= \omega(\log \Delta). &\text{(Definition of $\epsilon$)}
\end{align*}
We will later see in Section~\ref{sect:oneshot} that this implies that any LLL algorithm with parameters
$d = \poly(\Delta)$ and $p = \exp(-\Omega(\Delta \epsilon^2 / \eta^4))$ suffices to maintain the invariant in each iteration.
Notice that if we select $\eta = \log^{1 + o(1)}\Delta$, then $p = \exp(-\epsilon^2 \Delta / \log^{4+o(1)} \Delta)$, as desired.

\subsection{Maintenance of the Invariant}\label{sect:oneshot}
In this section we show how to apply a distributed LLL algorithm, with parameters $d = \poly(\Delta)$ and $p = \exp(-\Omega\left(\delta_i^2 \cdot \min\{d_i,t_i,p_i\}\right)$, to achieve the following task: given a graph $G_i$ meeting the property $\mathcal{H}_i$, color a subset of edges of $G_i$ so that the graph induced by the remaining uncolored edges satisfies the property $\mathcal{H}_{i+1}$. We write $\Psi(e) = \Psi_i(e)$ for notational simplicity.

\paragraph{Achieving Uniform Progress.}
Consider the following modifications to the underlying graph $G_i$:
\begin{itemize}
\item Each edge $e$ discards some arbitrary colors from its palette to achieve uniform palette size $p_i$.
\item Each vertex $v$ locally simulates some \emph{imaginary} subtrees attached to $v$ and obeying $\mathcal{H}_i$ to achieve uniform color degree $t_i$. That is, if a color $c$ appears in the palette of some edge incident to a vertex $v$, then $c$ must appear in the palette of {\em exactly} $t_i$ edges incident to $v$.
\end{itemize}

These modifications to the underlying graph are introduced to enforce broadly \emph{uniform}
progress in every part of the 
graph.\footnote{The algorithm will likely work if it is run on the actual
graph (that is, without hallucinating imaginary subtrees), but we do not see a way to enforce the same invariants.  For example, suppose a vertex $v$ in $G_i$ has degree exactly $d_i$ but because the palettes in 
$v$'s neighborhood happen to be advantageously configured, $v$'s 
degree after one coloring step is likely to be \emph{much less} than $d_{i+1}$.  Surely this is a good outcome!  Yet, 
if more edges are colored than we expect, the remaining edges will
lose \emph{more colors from their palettes} than we expect, 
possibly violating the lower bound on $p_{i+1}$.  These concerns 
motivate us to enforce more uniform progress, 
hence the introduction of imaginary trees.}

Observe that if $\mathcal{H}_{i}$ applies to the imaginary graph it 
also applies to the true graph as well, since we are concerned with \emph{lower} bounds on palette sizes
and \emph{upper} bounds on $c$-degrees. 

To increase the $c$-degree of each vertex $v$ to $t_i$, we might need to add so many imaginary edges to $v$ such that the degree of $v$ exceeds $d_i$ if we take into account these imaginary edges. This is fine, as we will later see that  we only consider the real edges when we analyze the shrinking rate of the degree.\footnote{In particular,
if we want to increase the $c$-degree of $v$ by $k$, we can add $k$ imaginary edges $e_1, \ldots, e_k$ incident to $v$ such that the palette of each newly added edge contains $c$. Other than the color $c$, there is no overlap between the palettes 
of the newly added edges and other edges incident to $v$.}

\paragraph{One Shot Coloring.}
Our analysis focusses largely on how the
following $O(1)$-round procedure affects the imaginary graph.

\begin{framed}
\noindent \oneshot.
\begin{description}
\item (1) Each edge $e$ selects a color $\Color(e) \in \Psi(e)$ uniformly at random.
\item (2) An edge $e$ successfully colors itself  $\Color(e)$ if no neighboring edge also selects $\Color(e)$.
\end{description}
\end{framed}

We write $S(v)$ to denote the set of \underline{real} edges incident to $v$, and we write $N_c(v)$ to denote the set of \underline{real and imaginary} edges incident to $v$ that have $c$ in their palettes. Let $S^\diamond(v)$ (resp., $N_c^\diamond(v)$) be the subset of $S(v)$ (resp., $N_c^\diamond(v)$) that are still uncolored after \oneshot. Let $\Psi^\diamond(e)$ be the result of removing all colors $c$ from $\Psi(e)$ such that some edge incident to $e$ successfully colors itself by $c$.

The following concentration bound implies that $\mathcal{H}_{i+1}$ holds with high probability in the graph induced by the \underline{real}
uncolored edges after \oneshot, and thus we can apply a distributed LLL algorithm to obtain $G_{i+1}$ that meets the invariant $\mathcal{H}_{i+1}$.
See Appendix~\ref{sect:concentration} for proof.

\begin{lemma}\label{thm:concentration}
Suppose that $\mathcal{H}_i$ holds.
The following concentration bounds hold for any $\delta > 0$.
\begin{align*}
&\Prob\left[|S^\diamond(v)| > (1+\delta) d_i^\diamond \right]  =  \exp\left(-\Omega(\delta^2 d_i)\right) \\
&\Prob\left[|N_c^\diamond(v)| > (1+\delta) t_i^\diamond \ | \ N_c^\diamond(v) \neq \emptyset \right]   =  \exp\left(-\Omega(\delta^2 t_i)\right) \\
&\Prob\left[|\Psi^\diamond(e)| < (1-\delta) p_i^\diamond \ | \ e \text{ remains uncolored } \right]   =  \exp\left(-\Omega(\delta^2 p_i)\right)
\end{align*}
\end{lemma}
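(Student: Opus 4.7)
\textbf{Proof plan for Lemma~\ref{thm:concentration}.}

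The plan is to express each of the three random variables as a function of the independent color-choice variables $\{\Color(e')\}_{e'}$ ranging over edges in the distance-$2$ neighborhood of $v$ (respectively $e$), bound the expected value by $d_i^\diamond$, $t_i^\diamond$, $p_i^\diamond$, and then invoke the specialized concentration inequality of Dubhashi--Panconesi (Theorem 8.5 in their book, restated in Appendix~\ref{sect:concentration} as Theorem~\ref{lem:concentration}), which is tailored to functions of independent random variables satisfying a ``bounded variance'' / certifiable Lipschitz condition. Throughout, we work in the imaginary graph where every palette has size exactly $p_i$ and every nonempty color class $N_c(\cdot)$ has size exactly $t_i$; this is legitimate because the imaginary graph only adds edges (dummy color choices) that are independent of the real graph, and all three target quantities move monotonically in the right direction.

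First I would handle the expectations. For $|S^\diamond(v)|$, linearity of expectation over the $d_i$ edges incident to $v$ gives $\E[|S^\diamond(v)|] = d_i\bigl(1 - (1-1/p_i)^{2(t_i-1)}\bigr) = d_i^\diamond$, using that each competing edge of $e\in S(v)$ picks its color independently from a palette of size $p_i$. For $|N_c^\diamond(v)|$ and $|\Psi^\diamond(e)|$ the expectation computation is analogous but must handle the conditioning. For $|N_c^\diamond(v)|$, conditioning on $N_c^\diamond(v)\neq\emptyset$ only decreases the expectation since the conditioning event is implied by there being at least one surviving edge; alternatively I would condition instead on the stronger and cleaner event that a \emph{particular fixed edge} $e_c\in N_c(v)$ survives, and show this still yields expectation $\le t_i^\diamond$. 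For $|\Psi^\diamond(e)|$, conditioning on $e$ being uncolored is handled by first revealing $\Color(e)=c^\star$, then noting that only color choices of edges \emph{not} adjacent to $e$ at the same endpoint as a $c^\star$-competitor contribute, and by independence the surviving-color probability on each other $c\in\Psi(e)$ equals $\bigl(1-\frac{t_i}{p_i}(1-1/p_i)^{2t_i}\bigr)^2$ in the imaginary graph, giving $\E[\,|\Psi^\diamond(e)|\mid e\text{ uncolored}\,]\ge p_i^\diamond$.

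Next I would verify the hypotheses of the concentration inequality. The underlying independent variables are the $\Color(e')$'s; each lives in a domain of size $p_i=\poly(\Delta)$. Each of the three output functions is a sum of $\{0,1\}$-indicators, one per edge in the relevant local set (size $d_i$, $t_i$, or $p_i$). The crucial observation is that flipping any single variable $\Color(e')$ can flip at most $O(1)$ of the contributing indicators (each indicator corresponds to the fate of one specific edge, and $e'$ is adjacent to at most two edges whose ``uncolored/untouched color'' status it can influence via that particular color choice). Consequently the function is $O(1)$-Lipschitz per coordinate, and Theorem~\ref{lem:concentration} yields deviation probability $\exp(-\Omega(\delta^2 \mu))$ where $\mu$ is the expectation, giving exactly the three stated tail bounds with exponents $\delta^2 d_i$, $\delta^2 t_i$, $\delta^2 p_i$.

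The main obstacle I anticipate is the conditioning in the second and third bounds, not the concentration step itself. Naively, conditioning on ``$N_c^\diamond(v)\neq\emptyset$'' or ``$e$ is uncolored'' correlates all the color variables in the neighborhood and could in principle inflate variances. The cleanest way around this, which I would adopt, is a two-stage revelation: expose the randomness that witnesses the conditioning event (either one fixed surviving edge in $N_c(v)$, or the color $\Color(e)$ together with the fact that no same-colored neighbor exists), argue that the remaining color variables are still mutually independent with the same marginal distribution, and then apply Theorem~\ref{lem:concentration} to the residual randomness. Once this independence-after-conditioning is established, all three bounds follow by the same template.
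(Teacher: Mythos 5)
There is a genuine gap, and it lies precisely where you were most confident: the variance step. You set up the correct tool (Theorem~\ref{lem:concentration}, the method of bounded variances) and the correct padded/imaginary graph, and your expectation calculations are essentially the ones the paper uses. But your verification of the concentration hypotheses is wrong. You argue that each $\Color(e')$ is $O(1)$-Lipschitz and conclude that Theorem~\ref{lem:concentration} gives $\exp(-\Omega(\delta^2\mu))$ with $\mu\in\{d_i,t_i,p_i\}$. If you only establish $O(1)$-Lipschitzness and hence set $\sigma_i^2 = O(1)$, the bound from Theorem~\ref{lem:concentration} is $\exp\bigl(-\Omega(s^2/\sum_i\sigma_i^2)\bigr) = \exp\bigl(-\Omega(s^2/n_{\mathrm{vars}})\bigr)$, where $n_{\mathrm{vars}}$ is the number of color variables that the function depends on. That number is $\Theta(\Delta^2)$ for $|S^\diamond(v)|$ and $|\Psi^\diamond(e)|$, and $\Theta(\Delta^3)$ for $|N_c^\diamond(v)|$. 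With $s=\delta\cdot\Theta(\Delta)$ this gives $\exp(-\Omega(\delta^2))$ or even $\exp(-\Omega(\delta^2/\Delta))$ — far weaker than $\exp(-\Omega(\delta^2\Delta))$. The paper explicitly flags this pitfall in Section~\ref{sect:randAlg}: the sum of Lipschitz constants dwarfs the expectation, so Azuma-type reasoning (which is exactly what $O(1)$-Lipschitz reduces Theorem~\ref{lem:concentration} to) fails. The actual work of the proof is to show, for the large majority of variables --- those corresponding to edges at distance $2$ or $3$ from $v$ or $e$ --- that $\Var[D_i\mid\mathbf{X}_{i-1}]$ is dramatically smaller than the worst-case $M^2$: for example $O(w(e')/(p_it_i))$ in the degree analysis, $O(1/t_i^2)$ in the palette analysis, and bounds like $O(w(e^\star)/p_i^2)$ in the color-degree analysis. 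Intuitively, exposing a far-away color choice moves the conditional expectation only when it happens to collide (probability roughly $1/p_i$) and even then only by $O(1/p_i)$ per affected near edge, so its variance contribution is tiny. Only the $O(\Delta)$ variables in the innermost shell ($S$, $S_u\cup S_v$, $R$) get $\sigma_i^2=O(1)$. Summing these two regimes gives $\sum_i\sigma_i^2 = O(d_i)$, $O(p_i)$, $O(t_i)$ respectively, which is what the proof needs. Your proposal never performs this variance accounting, so the step ``Consequently the function is $O(1)$-Lipschitz per coordinate, and Theorem~\ref{lem:concentration} yields deviation probability $\exp(-\Omega(\delta^2\mu))$'' does not follow.

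A secondary, smaller issue: you identify the conditioning in the second and third bounds as the main obstacle and propose a two-stage revelation to handle it. In fact the paper dispatches the conditioning trivially. Both ``$N_c^\diamond(v)\neq\emptyset$'' and ``$e$ remains uncolored'' occur with probability bounded below by a constant, so $\Pr[\text{bad}\mid\text{cond}]\le \Pr[\text{bad}]/\Pr[\text{cond}]=O(\Pr[\text{bad}])$, and one can drop the conditioning entirely up to constants; the paper also re-packages the color-degree statistic as $|R\setminus R^\diamond|$ so that it coincides with $|N_c^\diamond(v)|$ on the conditioning event, sidestepping any conditional-variance subtlety. Your concern about conditioning ``correlating all the color variables'' is not where the difficulty lies; the difficulty is in the unconditional variance bound. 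If you replace the $O(1)$-Lipschitz argument with the paper's shell-by-shell variance estimates (the $w(e')$ bookkeeping in Appendix~\ref{sect:concentration}), the rest of your plan goes through.
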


We write $N^k(v)$ to denote the set of all vertices within distance $k$ of $v$.
It is straightforward to see that (i) $S^\diamond(v)$ depends only on the colors selected by the edges whose endpoints are both 
in $N^2(v)$, (ii) $N_c^\diamond(v)$ depends only on the colors selected by the edges whose endpoints are both in $N^3(v)$, and (iii) $\Psi^\diamond(e)$ depends only on the colors selected by the edges whose endpoints are both in $N^2(u) \cup N^2(v)$, where $e = \{u,v\}$.

Thus, the parameters for the LLL are $d = \poly(\Delta)$ and $p = \exp\left(-\Omega\left(\delta_i^2 \cdot \min\{d_i,t_i,p_i\}\right)\right)$ by Lemma~\ref{thm:concentration}. Recall from the calculation in Section~\ref{sect:alg-phase1} that $\delta_i^2 \cdot \min\{d_i,t_i,p_i\} = \Omega(\epsilon^2 \Delta / \eta^4)$.  We obtain the bound $p = \exp(-\epsilon^2 \Delta / \log^{4+o(1)} \Delta) = \exp(-\omega(\log \Delta))$ required in the statement of Theorem~\ref{thm:randAlg-main} by selecting $\eta = \log^{1 + o(1)}\Delta$.

\subsection{Proof of Lemma~\ref{lem:estimate}}\label{sect:estimate}
{
\allowdisplaybreaks

In this section, we prove Lemma~\ref{lem:estimate}.
We assume $\min\{d_{i-1},t_{i-1},p_{i-1}\}=\omega(\log \Delta)$,  $\beta_{i-1} = o(1/ \log \Delta)$, and $\delta_{i-1} = o(\beta_{i-1}/ \log \Delta)$.
The two terms $(1 - 1/p_{i-1})^{2t_{i-1}}$ and $\frac{t_{i-1}}{p_{i-1}}(1 - 1/p_{i-1})^{2t_{i-1}}$  show up in the definition of $d^\diamond_{i-1}$, $t^\diamond_{i-1}$, and $p^\diamond_{i-1}$. We begin by showing that these two terms are both $e^{-2} (1 + o(1/\log \Delta))$. We use the fact that $\frac{t_{i-1}}{p_{i-1}} = \frac{1}{\beta_{i-1} + 1}$ in the following calculation.

\begin{align*}
(1 - 1/p_{i-1})^{2t_{i-1}}
&= e^{-2t_{i-1}/p_{i-1}} (1 - O(t_{i-1}/p_{i-1}^2))							& \mbox{(Taylor expansion of $e^x$)}\\
&= e^{-2} \cdot e^{2(1 - t_{i-1}/p_{i-1})} (1 - O(t_{i-1}/p_{i-1}^2))\\
&= e^{-2} \cdot e^{2(1 - t_{i-1}/p_{i-1})} \left(1 - O\left(\frac{1}{(1+\beta_{i-1})p_{i-1}}\right)\right)	& \mbox{(Defn.~$\beta_{i-1}$)}\\
&= e^{-2} \cdot e^{2(1 - t_{i-1}/p_{i-1})}(1 - o(1/\log \Delta))								& \mbox{($p_{i-1} = \omega(\log\Delta)$)}\\
&= e^{-2} \cdot e^{2\beta_{i-1} / (\beta_{i-1}+1)} (1 - o(1/\log \Delta))  & \mbox{(Defn.~of $\beta_{i-1}$)}\\
&= e^{-2} \cdot (1 + O({2\beta_{i-1} / (\beta_{i-1}+1)})) (1 - o(1/\log \Delta))\\
&= e^{-2} \cdot (1 + o(1/\log \Delta)) (1 - o(1/\log \Delta))\\
&= e^{-2} (1 + o(1/\log \Delta)). & (*)\\
\displaybreak[0]\\
\frac{t_{i-1}}{p_{i-1}}(1 - 1/p_{i-1})^{2t_{i-1}}
&= e^{-2} \cdot \frac{t_{i-1}}{p_{i-1}} \cdot (1 + o(1/\log \Delta)) & \text{by (*)}\\
&= e^{-2} (1 + o(1/\log \Delta)) / (1+ \beta_{i-1})\\
&= e^{-2} (1 + o(1/\log \Delta)) / (1 + o(1/\log \Delta))\\
&= e^{-2} (1 \pm o(1/\log \Delta)).	& (**)\\
\end{align*}

We are in a position to derive the first three equations in Lemma~\ref{lem:estimate} (i.e., estimates of $d_i$, $t_i$, and $p_i$).
Recall that $\delta_{i-1} =  o(1/\log^2 \Delta)$ and $1/p_{i-1} =o(1/\log \Delta)$.
\begin{align*}
d_i &=  d_{i-1} \cdot  (1 + \delta_{i-1}) \left(1 - (1 - 1/p_{i-1})^{2(t_{i-1} - 1)}\right) \\
&= d_{i-1} \cdot (1 + o(1/\log^2 \Delta)) \left( 1 - e^{-2} (1 + o(1/\log \Delta)) / (1 - 1/p_{i-1})^2\right) & \mbox{By (*)}\\
&= d_{i-1} \cdot (1 + o(1/\log^2 \Delta)) \left( 1 - e^{-2} (1 + o(1/\log \Delta))\right) \\
&= d_{i-1} \cdot (1 \pm o(1/\log \Delta)) (1 - e^{-2}). \\
\displaybreak[0]\\
t_i &=  t_{i-1} \cdot  (1+\delta_{i-1})\left(1 - \frac{t_{i-1}}{p_{i-1}}(1 - 1/p_{i-1})^{2t_{i-1}}\right)\left(1 - (1 - 1/p_{i-1})^{2t_{i-1}}\right)\\
&= t_{i-1} \cdot (1 + o(1/\log^2 \Delta)) \left(1 - e^{-2} (1 \pm o(1/\log \Delta))\right)^2			& \mbox{By (**)}\\
&= t_{i-1} \cdot (1 \pm o(1/\log \Delta))  (1 - e^{-2})^2. \\
\displaybreak[0]\\
p_i &=  p_{i-1} \cdot  (1-\delta_{i-1}) \left(1 - \frac{t_{i-1}}{p_{i-1}}(1 - 1/p_{i-1})^{2t_{i-1}}\right)^2 \\
&= p_{i-1} \cdot (1 - o(1/\log^2 \Delta)) \left(1 - e^{-2} (1 \pm o(1/\log \Delta))\right)^2			& \mbox{By (**)}\\
&= p_{i-1} \cdot (1 \pm o(1/\log \Delta))  (1 - e^{-2})^2.
\end{align*}

Finally, we derive the last equation in Lemma~\ref{lem:estimate}: an estimate of the second-order error $\beta_i$.
\begin{align*}
\beta_i &= \frac{p_i}{t_i} - 1\\
&= \frac{(1-\delta_{i-1}) p_{i-1}^\diamond}{(1+\delta_{i-1}) t_{i-1}^\diamond} - 1\\
&= (1 - O(\delta_{i-1})) \cdot \frac{p_{i-1}}{t_{i-1}} \cdot \frac{1 - \frac{t_{i-1}}{p_{i-1}}(1 - 1/p_{i-1})^{2t_{i-1}}}{1 - (1 - 1/p_{i-1})^{2t_{i-1}}} - 1 & \text{Definition of $p_{i-1}^\diamond$ and $t_{i-1}^\diamond$}\\
&= (1 - O(\delta_{i-1})) \cdot  \frac{\frac{p_{i-1}}{t_{i-1}} - (1 - 1/p_{i-1})^{2t_{i-1}}}{1 - (1 - 1/p_{i-1})^{2t_{i-1}}} - 1\\
&=  \frac{
\left(\frac{p_{i-1}}{t_{i-1}} - 1\right) +
O(\delta_{i-1})\left(- \frac{p_{i-1}}{t_{i-1}} + (1 - 1/p_{i-1})^{2t_{i-1}}\right)}
{1 -  (1 - 1/p_{i-1})^{2t_{i-1}}} \\
&=  \frac{
\left(\frac{p_{i-1}}{t_{i-1}} - 1\right) +
O(\delta_{i-1})\left(- \frac{p_{i-1}}{t_{i-1}} + (1 - 1/p_{i-1})^{2t_{i-1}}\right)}
{1 - e^{-2}(1 + o(1/\log \Delta))} & \text{by (*)}\\
&= \frac{\beta_{i-1} - O(\delta_{i-1})}{(1 - e^{-2})(1 - o(1/\log \Delta))} &-\frac{p_{i-1}}{t_{i-1}} + (1 - 1/p_{i-1})^{2t_{i-1}} = -\Theta(1)\\
&= \frac{\beta_{i-1} (1 - o(1/\log^2 \Delta))}{(1 - e^{-2})(1 - o(1/\log \Delta))} &\delta_{i-1} =  o(1/\log^2 \Delta)\\
&=  \beta_{i-1} \cdot  (1 \pm o(1/\log \Delta))  / (1 - e^{-2}).
\end{align*}
}

\section{Distributed Lov\'{a}sz Local Lemma on Trees}\label{sect:tree-LLL}

\paragraph{Tree-structured Dependency Graphs.} In this section, we study the distributed LLL on  tree-structured dependency graphs, which we define as follows.
Let $T$ be a tree.  Each vertex $v$ holds some variables $\variable(v)$ and is associated with a bad
event $E(v)$ that depends only on variables
within distance $r/2$ of $v$; that is, $\vbl(E(v)) = \bigcup_{u\in N^{r/2}(v)} \variable(u)$.
If $S$ is a subset of the vertices, we use $\vbl(S)$ to be short for $\bigcup_{v\in S} \vbl(E(v)) = \bigcup_{v\in S}\bigcup_{u\in N^{r/2}(v)} \variable(u)$.
 We assume that $r$ is a constant, and we do not analyze the dependence on $r$ in the time complexity.

The dependency graph for the set of bad events $\mathcal{E}$  is exactly
$T^r$, which is the graph obtained by adding edges to all pairs of vertices of distance at most $r$ in $T$.
Thus, the maximum degree of the dependency graph is at most $\Delta^r$, where $\Delta$ is the maximum degree of $T$.
We fix the parameter $d = \Delta^r$.

The tree-structured dependency graphs (with parameter $r$)
arise naturally from any $r/2$-time $\RandLOCAL$ experiment that is run on a tree $T$.
Throughout this section we assume $r/2 \ge 1$ is an integer and that $\Delta \ge 3$.

\subsection{Deterministic LLL Algorithm}

\paragraph{Network Decomposition.} A {\em $(\lambda,\gamma)$-network decomposition} is a partition of the vertex set into $V_1, \ldots, V_{\lambda}$
such that each connected component induced by each $V_i$ has diameter at most $\gamma$.  Fischer and Ghaffari~\cite{FischerG17} showed
that given a $(\lambda,\gamma)$-decomposition of $G_{\mathcal{E}}^2$,
an LLL instance satisfying $p(ed)^\lambda < 1$ is solvable in $O(\lambda(\gamma+1))$ time.
We use a slight generalization of standard network decompositions.
A $(\lambda_1,\gamma_1,\lambda_2,\gamma_2)$-network decomposition is a partition of the vertices into
$V_1, \ldots, V_{\lambda_1},U_1, \ldots, U_{\lambda_2}$ such that connected components
induced by $V_i$ have diameter at most $\gamma_1$ and those induced by $U_i$ have diameter at most $\gamma_2$.

\paragraph{Strong and Weak Diameter.}
There are two standard notions of diameter in network decompositions. For a subgraph $H=(V',E')$ of $G$, its \emph{weak diameter} is $\max_{u,v\in V'} \dist_G(u,v)$, whereas its \emph{strong diameter} is $\max_{u,v\in V'} \dist_H(u,v)$. We remark that either notion is sufficient for applying Lemma~\ref{lem:LLL-aux-2}. Therefore, unless stated otherwise, we do not distinguish between strong and weak diameter in the subsequent discussion.

\begin{lemma}[Fischer and Ghaffari~\cite{FischerG17}]\label{lem:LLL-aux-2}
Suppose that a $(\lambda_1,\gamma_1,\lambda_2,\gamma_2)$-network decomposition of $G_{\mathcal{E}}^2$ is given.
Any LLL instance on $G_{\mathcal{E}}$ satisfying
$p(ed)^{\lambda_1+\lambda_2} < 1$ can be solved in $\DetLOCAL$ in
$O(\lambda_1(\gamma_1+1) + \lambda_2(\gamma_2+1))$ time.
\end{lemma}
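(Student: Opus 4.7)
My plan is to concatenate the standard Fischer--Ghaffari network decomposition LLL algorithm across the $V$-classes and the $U$-classes, treating them as a single sequence of $\lambda_1+\lambda_2$ color classes. I would process $V_1,\dots,V_{\lambda_1},U_1,\dots,U_{\lambda_2}$ in order; at each phase, every connected component $C$ of the current class in $G_{\mathcal{E}}^2$ locally gathers all event descriptions, all variables in $\vbl(C)$, and the assignment committed so far, in $O(\mathrm{diam}(C)+1)$ rounds. Each component then deterministically commits to values for the variables owned by the events in $C$ that have not yet been fixed.

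The engine behind the analysis is an LLL invariant carried through the phases: after $k$ classes have been handled, for every event $A$ whose variables are not yet fully assigned, $\Pr[A\mid\text{partial assignment}]\le p(ed)^k$ with respect to the original product distribution. The inductive step is the classical one: because the decomposition is of $G_{\mathcal{E}}^2$, distinct components of a single class are at $G_{\mathcal{E}}$-distance $\ge 2$, so no outside event $A$ depends on variables from two components simultaneously, and the decisions in different components do not interact. Within a single component, a standard first-moment argument — draw the remaining variables in $\vbl(C)$ from the current conditional product distribution and count expected ``violations'' (events inside $C$ that occur, or external events whose conditional probability would exceed $p(ed)^{k+1}$) — shows existence of an assignment that (i) resolves the events hosted in $C$, and (ii) inflates the conditional probability of each still-live external event by at most a factor of $ed$. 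Because every component has all relevant information locally, such an assignment can be found by exhaustive search in $0$ extra rounds. After $\lambda_1+\lambda_2$ phases the invariant gives $\Pr[A\mid\text{final assignment}]\le p(ed)^{\lambda_1+\lambda_2}<1$ for every event $A$; since by that point every $A$ is fully determined, this probability is either $0$ or $1$, hence $0$, and the assignment satisfies all constraints.

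Adding up the per-phase communication costs yields the promised bound: $\sum_{i=1}^{\lambda_1}O(\gamma_1+1)+\sum_{j=1}^{\lambda_2}O(\gamma_2+1)=O(\lambda_1(\gamma_1+1)+\lambda_2(\gamma_2+1))$. The only thing that really needs checking — the ``main obstacle,'' such as it is — is that the invariant advances by exactly one per phase whether the phase is $V$-type or $U$-type, so the criterion $p(ed)^{\lambda_1+\lambda_2}<1$ (rather than something asymmetric in $\lambda_1,\lambda_2$) is precisely what is required; this is automatic because the induction step is oblivious to diameter, depending only on the fact that each color class is a $G_{\mathcal{E}}^2$-independent collection of components. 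Diameter enters the analysis only through the per-phase round complexity, never through the LLL counting argument.
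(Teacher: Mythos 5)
Your overall skeleton is the correct Fischer--Ghaffari argument — process the $\lambda_1+\lambda_2$ color classes of the $G_{\mathcal{E}}^2$-decomposition in sequence, exploit that components of a single class share no variables so they can commit independently, carry the invariant $\Pr[A\mid\text{partial}]\le p(ed)^k$ after $k$ classes, and observe that each class only costs $O(\text{diam}+1)$ rounds — and the round-count and the ``symmetry between $V$-type and $U$-type phases'' observation are both fine.

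The gap is the claimed \emph{per-component existence step}. You say a ``standard first-moment argument — draw the remaining variables in $\vbl(C)$ from the conditional product distribution and count expected violations'' — proves existence of a good assignment inside a component $C$. That union bound does not close. The violations are: $A\in C$ occurring, each contributing at most $p(ed)^k$ to the expectation; and boundary events whose conditional probability overshoots $p(ed)^{k+1}$, each contributing at most $1/(ed)$ by Markov. Summing over the up-to-$|C|(d+1)$ affected events gives an expected count on the order of $|C|/e$, which is $\gg 1$ whenever $C$ is not tiny; and $C$ can have $\Theta(d^{\gamma})$ events, e.g.\ $\mathrm{poly}(n)$ many when $\gamma=\Theta(\log n)$. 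The degenerate case $\lambda_1=1$, $\gamma_1=\diam(G_{\mathcal{E}})$ makes the issue stark: there the claim becomes ``$\Pr[A]\le p$ for all $A$ and $p(ed)<1$ implies a satisfying assignment exists by a first moment,'' which would need $p<1/n$, far stronger than what the hypothesis provides. What is actually needed — and what Fischer and Ghaffari do — is to invoke the (non-constructive) Lov\'asz local lemma itself on the sub-instance inside $C$: define bad outcomes $F_A$ (``$A$ happens'' for $A\in C$; ``$\Pr[A\mid\text{new}]>(ed)\Pr[A\mid\text{old}]$'' for boundary $A$), note $\Pr[F_A]<1/(ed)$ and, crucially, that $F_A$ and $F_B$ can only be dependent if $\vbl(A)\cap\vbl(B)\cap\vbl(C)\neq\emptyset$, hence $A\sim B$ in $G_{\mathcal{E}}$, so the dependency graph of the $\{F_A\}$ inherits the degree bound $d$. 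The tight symmetric LLL threshold $(d-1)^{d-1}/d^d>1/(ed)$ then certifies existence, and the component finds such an assignment by exhaustive local search. That the inner step requires a genuine LLL rather than a union bound is exactly why the criterion degrades by a $\mathrm{poly}(d)$ factor per color class instead of by a factor that depends on component size.
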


The proof of Theorem~\ref{thm:detLLL-tree} is based on the network decompositions for trees found in
Section~\ref{sect:tree-decomp}.
A {\em distance-$d$ dominating set} of a graph $G$ is a vertex set $S$ such that for each vertex
$v$ in the graph $G$, there exists $u \in S$ such that $\dist(u,v) \leq d$.

\begin{theorem}\label{thm:detLLL-tree}
Any tree-structured LLL satisfying
$p(ed)^{\lambda} < 1$ with $\lambda\ge 2$
can be solved in $\DetLOCAL$ in
$O\left(\max\left\{\log_\lambda s,\fr{\log s}{\log\log s}\right\} + \log^\ast n\right)$ time, where $s\le n$ is the
size of any distance-$O(1)$ dominating set of the tree $T$.
\end{theorem}

\begin{proof}
Recall that the dependency graph is $T^r$ for some tree $T$ and constant $r$.
In Section~\ref{sect:tree-decomp} we show that a standard $(2,O(\log s))$-decomposition for $(T^{r})^2 = T^{2r}$
is computable in $O(\log s + \log^\ast n)$ time, and if $\lambda=\Omega(1)$ is sufficiently large, a
$(1,O(\log_\lambda s),O(\lambda^2),1)$-decomposition for $T^{2r}$
is computable in $O(\log_\lambda s+\log^\ast n)$ time, i.e., one
part of the partition has diameter $O(\log_\lambda s)$, while each
of the remaining $O(\lambda^2)$ parts induces connected components
of diameter at most~$1$ in $T^{2r}$.

If we want to use Lemma~\ref{lem:LLL-aux-2} to solve the given LLL instance satisfying
$p(ed)^{\lambda} < 1$, we need a $(\lambda_1,\gamma_1,\lambda_2,\gamma_2)$-network decomposition of $T^{2r}$ satisfying $\lambda_1+\lambda_2 \leq \lambda$, i.e., the number of parts is at most $\lambda$.

When $\lambda=O(1)$ is sufficiently small, we apply Lemma~\ref{lem:LLL-aux-2} with the first network decomposition.
Because the decomposition has two parts, this works with LLL criterion $p(ed)^\lambda < 1$ for any $\lambda \geq 2$. The resulting LLL algorithm takes time $O(\log s + \log^\ast n)$.

When $\lambda$ is sufficiently large, we compute a $(1,O(\log_{\hat{\lambda}} s), O(\hat{\lambda}^2), 1)$-decomposition in $O(\log_{\hat{\lambda}} s + \log^\ast n)$
time, where $\hat{\lambda}$ is chosen as the largest number such that $\hat{\lambda} \leq \sqrt{\fr{\log s}{\log\log s}}$ and the number of parts $\lambda_1 + \lambda_2 = O(\hat{\lambda}^2)$ in the decomposition is at most $\lambda$. We have
$\hat{\lambda} = \min\left\{ O(\sqrt{\lambda}), \, \sqrt{\fr{\log s}{\log\log s}}\right\}$.
We solve the LLL by applying Lemma~\ref{lem:LLL-aux-2},
which takes time $O(\hat{\lambda}^2+\log_{\hat{\lambda}} s + \log^\ast n) = O\left(\max\left\{\log_\lambda s, \fr{\log s}{\log\log s}\right\} + \log^\ast n\right)$.
Observe that because of the $\hat{\lambda}^2$ term, we cannot benefit from LLL instances with $\lambda \gg \fr{\log s}{\log\log s}$.
\end{proof}

Notice that the time bound for Theorem~\ref{thm:detLLL-tree} is in terms of $s$ rather than $n$.
We will apply Theorem~\ref{thm:detLLL-tree} after performing a graph shattering step, the output
of which creates many disjoint tree-structured instances with size $\Delta^{O(1)} \cdot O(\log n)$,  each of them admitting a
 distance-$O(1)$ dominating set of size at most $s = O(\log n)$.  We want the time bound to be
in terms of $s= O(\log n)$, independent of $\Delta$.

For a given LLL instance with criterion $p(ed)^{\lambda}<1$, the shattering routine of Fischer and Ghaffari~\cite{FischerG17} achieves the above requirement in time $O(d^2 + \log^* n)$ in such a way that the resulting LLL instances after the shattering routine satisfy the criterion $p(ed)^{\lambda/2}<1$.
If we combine this with
Theorem~\ref{thm:detLLL-tree}, we obtain a $O\left(d^2 + \max\left\{\log_\lambda \log n, \fr{\log\log n}{\log\log\log n}\right\}\right)$-time $\RandLOCAL$ LLL algorithm for
criterion $p(ed)^{\lambda}<1$, $\lambda \ge 4$, which is efficient only when $d$ is small. Notice that we need $\lambda/2 \geq 2$ in order to apply Theorem~\ref{thm:detLLL-tree} on LLL instances with criterion $p(ed)^{\lambda/2}<1$. 

In Section~\ref{sect:shatterLLL} we present a new method (Lemma~\ref{lem:LLL-shattering}) for computing a partial assignment to the variables that effectively
shatters a large dependency graph into many independent subproblems, each satisfying a polynomial LLL criterion
w.r.t.~the unassigned variables.

\subsection{Randomized LLL Algorithm}\label{sect:shatterLLL}

Consider a tree-structured LLL instance $T^r$ with
LLL criterion $p(ed)^\lambda < 1$. In subsequent discussion, unless otherwise stated, the underlying graph is, by default, assumed to be $T$.  Our shattering routine will work towards finding a {\em good} partial assignment.

\begin{definition}\label{def:good}
A partial assignment $\phi$ to the variables in the LLL system is {\em good} if it satisfies the following two properties.
\begin{enumerate}
\item Conditioned on the partial assignment $\phi$, the probability of any
bad event $E(v)$ is at most $p' = \sqrt{p}$.
\item Let $V'$ be the set of all vertices $v$ such that $\vbl(E(v))$
contains some unassigned variables.  Each connected component
$C$ induced by $V'$  has size at most $\Delta^{O(1)} \cdot O(\log n)$,
and $C$ contains a distance-$2r$ dominating set with size at most $O(\log n)$.
\end{enumerate}
\end{definition}

Due to Definition~\ref{def:good}(1), conditioned on a good partial assignment
$\phi$, the bad events in each connected component $C$ induced by $V'$ form an
LLL system with the LLL criterion $p'(ed)^{\lambda/2} < 1$. 
Definition~\ref{def:good}(2) guarantees that each component is of small size.
Thus, a good partial
assignment $\phi$ is able to shatter the tree $T$ into small components, each of which is an independent LLL system.  
In Sections~\ref{sect:criterion}--\ref{sect:findset} we prove the following efficient ``shattering lemma.''

\begin{lemma}\label{lem:LLL-shattering}
Suppose we are given a tree-structured LLL instance $T^r$ satisfying
LLL criterion $p(ed)^\lambda < 1$, where $\lambda \ge 2(4^{r}+ 8r)$.
There is a $\RandLOCAL$ algorithm that computes a good partial assignment $\phi$ in
$O(\log_\lambda\log n)$ time.
\end{lemma}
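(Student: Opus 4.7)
The plan is to construct $\phi$ in three conceptual layers: (i) draw a single random total assignment; (ii) define a monotone ``contagion'' that iteratively unassigns variables around vertices whose conditional bad-event probability exceeds $\sqrt{p}$; (iii) use a tree-contraction style doubling primitive on $T$ to compute the contagion's fixed point exponentially faster than a straight simulation would allow.

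Concretely, I would sample each variable independently from its marginal to get a total assignment $\phi_0$, and for every vertex $v$ and partial assignment $\phi$ define $q_v(\phi)\bydef\Pr[E(v)\mid \phi]$. Let $U_0 = \emptyset$ and $U_{i+1} = U_i \cup \{v : q_v(\phi_i)>\sqrt{p}\}$, where $\phi_i$ is $\phi_0$ with $\variable(U_i)$ unassigned; the output is $\phi_\infty$, obtained by unassigning $\variable(U_\infty)$. Properties~(1) and~(3) of Definition~\ref{def:good} are immediate from the stopping condition of this monotone sequence. The combinatorial heart is a Beck-style subtree-counting bound: $\Expect[q_v(\phi_0)]=p$ gives $\Pr[v \in U_1]\le \sqrt{p}$ by Markov's inequality, and an inductive amplification---iterating Markov through $r$-neighborhoods to follow the cascade---shows that any fixed connected $k$-vertex subtree of $T$ is entirely contained in $U_\infty$ with probability at most $p^{\Omega(k)/(4^r+8r)}$. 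Because the number of connected $k$-vertex subtrees of $T$ anchored at a given vertex is at most $(4\Delta)^k$, the hypothesis $\lambda \ge 2(4^r+8r)$ together with $p(ed)^\lambda<1$ gives $(4\Delta)^k\cdot p^{\Omega(k)/(4^r+8r)}<n^{-2}$ once $k=\Omega(\log n)$, and a union bound shows that with probability $1-1/\poly(n)$ every component of $U_\infty$ in $T$ has at most $O(\log n)$ vertices. Inflating $U_\infty$ to $V'$ by $N^{r/2}$-neighborhoods multiplies the size by $\poly(\Delta)$ and leaves $U_\infty$ itself as a distance-$2r$ dominating set of size $O(\log n)$, establishing property~(2).

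It remains to compute $U_\infty$ in $O(\log_\lambda\log n)$ rounds rather than the $\Theta(\log n)$ rounds that direct simulation would require. I would apply a rake-and-compress/doubling primitive tailored to $T$: each vertex maintains, for each of its incident tree edges, a constant-size summary describing how the contagion would propagate through the subtree across that edge under each of an $O(1)$-sized family of ``boundary conditions,'' and these summaries are composed along tree paths of doubling length. Each composition takes $O(r)$ LOCAL rounds, and combined with the component-size bound of the previous paragraph and the amplification afforded by the large LLL parameter $\lambda$, the number of doubling iterations needed to resolve $U_\infty$ is $O(\log_\lambda\log n)$. The main obstacle is the design of the per-edge summary: it must faithfully encode the conditional-probability recurrence across $r$-neighborhoods while remaining of bounded size when composed, and it is exactly the strength of the hypothesis $\lambda\ge 2(4^r+8r)$ that is used a second time to keep the boundary-state space $O(1)$-sized purely in terms of $r$, in addition to driving the subtree-counting bound in step~(ii).
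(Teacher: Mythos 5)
There is a genuine gap in the infection criterion that breaks the component-size bound. You define the contagion directly via the conditional probability, infecting $v$ as soon as $q_v(\phi_i) = \Pr[E(v)\mid\phi_i] > \sqrt{p}$. But a \emph{single} unassigned neighboring vertex can already push $q_v$ above $\sqrt{p}$: e.g., if $E(v)$ is ``$X_1 = 0 \wedge X_2 = 0$'' with $X_1,X_2$ uniform on $[m]$, then $\Pr[E(v)] = 1/m^2 = p$, yet once $X_1$ is unset and $X_2=0$ is fixed we have $q_v = 1/m = \sqrt{p}$. So the cascade can propagate \emph{deterministically} along a path: once $v$ is infected, its neighbor is infected with conditional probability $1$, not $\sqrt{p}$. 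Your ``inductive amplification by iterating Markov'' therefore does not compose --- $\Pr[\text{path of length }k \subseteq U_\infty]$ is bounded only by the probability of the initial seed, roughly $\sqrt{p}$, not by $p^{\Omega(k)/(4^r+8r)}$. Without that subtree-count bound, property~(2) of Definition~\ref{def:good} fails: components of $V'$ need not be $O(\log n)$-small. The paper avoids exactly this failure mode by \emph{not} using the conditional-probability criterion for contagion; instead it infects $v$ only when $\degB_S(v) > \mu \ge 4$ \emph{distinct} subtrees of $v$ contain nearby infected vertices, and separately proves (Lemma~\ref{lem:LLL-condition}) that this combinatorial threshold is a sound proxy for the conditional-probability one (using the auxiliary event $B(v)$ with a union bound over small resampling sets $S$). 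Requiring at least $\mu+1$ infected subtrees is what turns the forward process into a supercritical branching process with doubly-exponential decay, which is the whole reason a small-components bound is obtainable. The hypothesis $\lambda \ge 2(4^r+8r)$ is used precisely to afford $\mu \ge 4$; in your scheme that hypothesis plays no role in preventing the linear cascade.

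The computational step has a separate gap. You propose to compute the exact fixed point $U_\infty$ by composing ``constant-size per-edge summaries'' along doubling tree paths, but no such bounded-size summary is exhibited, and it is not clear one exists for a threshold contagion over $r$-neighborhoods --- the relevant state at a subtree boundary depends on unboundedly many vertices inside the subtree. The paper sidesteps this by \emph{not} computing the exact fixed point: it runs a \emph{more} aggressive forward contagion with threshold $\mu/2$ for $\tau = \Theta(\log_\mu\log n)$ rounds, then a reverse ``uninfection'' with threshold $\mu$ for another $\tau$ rounds, and proves (Lemmas~\ref{lem:UtauLtau}--\ref{lem:small}) that the resulting set $L_\tau$, while possibly a strict superset of the true fixed point, is both \emph{stable} (no vertex outside $L_\tau$ wants to become infected, w.h.p.) and \emph{small}. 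Stability --- rather than exact equality with $U_\infty$ --- is what Lemma~\ref{lem:LLL-reduction} actually needs. Your sketch should be reoriented around this relaxed target; as written, both the combinatorial bound and the algorithmic speedup rely on mechanisms that do not hold under the conditional-probability criterion.
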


The overall algorithm is obtained by composing~Lemma~\ref{lem:LLL-shattering}
and Theorem~\ref{thm:detLLL-tree}, which is summarized in Theorem~\ref{thm:randLLL-tree}.  In particular, the algorithm has the usual two-phase graph shattering structure.
\begin{description}
\item[Shattering.] Given the LLL instance with dependency graph
$T^r$, find a \emph{good} partial assignment $\phi'$ using Lemma~\ref{lem:LLL-shattering}.  
Each component induced by events having at least one unset variable 
has size $\poly(\Delta)\cdot O(\log n)$ and contains a distance-$2r$
dominating set with size $O(\log n)$.  Moreover, each such component 
is an LLL instance with parameters $d$ and $p'=\sqrt{p}$ satisfying
criterion $p'(ed)^{\lambda/2}<1$.
\item[Post-shattering.] We extend $\phi'$ to a total assignment 
by independently 
fixing the variables in each component of the shattered LLL instance.
By Theorem~\ref{thm:detLLL-tree}, this can be done in 
$O\left(\max\left\{\log_{\lambda/2} s, \frac{\log s}{\log\log s}\right\}\right)$
time, where in our case $s=O(\log n)$.
\end{description}

\begin{theorem}\label{thm:randLLL-tree}
Let $T^r$ be a tree-structured LLL instance satisfying criterion $p(ed)^\lambda < 1$ with $\lambda \ge 2(4^{r}+ 8r)$.
This LLL can be solved in $\RandLOCAL$ in
$O\left(\max\left\{\log_\lambda \log n,\, \fr{\log \log n}{\log\log\log n}\right\}\right)$ time.
\end{theorem}

We briefly overview the ideas behind the proof of  Lemma~\ref{lem:LLL-shattering}.
The goal is to design an algorithm to compute a good partial
assignment $\phi$.
Consider the following process.
First, draw a total assignment $\phi$ to $\mathcal{V}$ according to the distribution of the variables in the underlying LLL instance.
If any bad event $E(v)$ occurs under $\phi$, update $\phi$ by \emph{unsetting} all variables in $\vbl(E(v))$.
More generally, whenever $\Pr[E(v) | \phi]$ exceeds $\sqrt{p}$, update $\phi$ by unsetting all variables in $\vbl(E(v))$.
This can be viewed as a \emph{contagion dynamic} played out on the dependency graph.  Bad events that occur under
the initial total assignment are {\em infected}, and infected vertices can cause nearby neighbors to become infected. At the end of the  contagion process, we obtain a partial assignment satisfying Definition~\ref{def:good}(1).

If this contagion process were actually simulated, it would take $\Omega(\log n)$ parallel steps to reach a stable state, which is too slow.
We will provide a different method to compute a stable state (i.e., a partial assignment satisfying Definition~\ref{def:good}(1)) that is exponentially faster, by avoiding a direct simulation.

The proof of Lemma~\ref{lem:LLL-shattering} appears at the end of Section~\ref{sect:contagion}.  It uses Lemma~\ref{thm:LLL-contagion},
which concerns the problem of finding a stable state in a contagion
process, and Lemma~\ref{lem:LLL-reduction}, which connects the problem 
of shattering a dependency graph $T^r$ to a contagion played out on $T^r$.

\subsection{Criterion for Infection}\label{sect:criterion}

Let $u$ be a vertex in the undirected tree $T$.
Then $T - \{u\}$ consists of $\deg(u)$ subtrees $T_1, \ldots, T_{\deg(u)}$; we call $T_k$ the \emph{$k$th subtree} of $u$.
Define $C_u(k,[i,j])$ to be the set of vertices in the $k$th subtree of $u$ whose distance to $u$ lies in the interval $[i,j]$.
For example, $C_u(k,[1,1])$ only contains the $k$th neighbor of $u$.
For any vertex set $S$, define $\degB_S(u)$ as follows,
\[
\degB_S(u) = \left|
\left\{ k \;\,:\,\;
C_u(k,[1,r])\cap S\neq\emptyset
\right\}
\right|.
\]
In other words, it is the number of \emph{distinct} subtrees
of $u$ containing at least one $S$-vertex within distance $r$.

Let $\mu\ge 4$ and $\lambda'\ge 1$ be two integers such that $\lambda \ge 2(\mu^{r}+ \lambda')$.
The following bad events $B(S,v)$ and $B(v)$ are defined w.r.t.~the following process.
First, we fix a total assignment $\phi$ to the variables,
then progressively add vertices to the set $S$.  All variables in $\vbl(S)$ are considered \emph{unset};
for example, conditioning on ``$\vbl(E(v)) \backslash \vbl(S)$'' means keeping $\phi$'s assignment
to $\vbl(E(v)) \backslash \vbl(S)$ and \emph{resampling} $\vbl(S)$ according to their distribution in the underlying LLL instance.

\begin{align*}
B(S,v) &: \Big[ \Pr\left[E(v)~|~\vbl(E(v)) \backslash \vbl(S)\right]\geq(ed)^{-\lambda/2}\Big],\\
B(v) &: \Bigg[\bigcup_{S\subset N^{r}(v), \; |S|\leq \mu^{r}} B(S,v)\Bigg].
\end{align*}

In other words, $B(S,v)$ is the event that, if we \emph{were} to resample $\vbl(S)$,
the probability that $E(v)$ occurs is at least $(ed)^{-\lambda/2}$.  The event $B(v)$ occurs if it is \emph{possible}
to find a subset $S$ of  cardinality at most $\mu^{r}$ such that $B(S,v)$ occurs.

We can now consider the probability that these events occur, over a \emph{randomly} selected initial total assignment $\phi$.
\begin{align*}
\Pr_\phi[B(S,v)] &\leq \frac{\Pr_\phi[E(v)]}{\Pr_\phi[E(v)~|~B(S,v)]}
	\:\leq\: \frac{(ed)^{-\lambda}}{(ed)^{-\lambda/2}}
	\:=\: (ed)^{-\lambda/2}
	\:\leq\: (ed)^{-({\mu}^{r} + \lambda')}.\\
\intertext{By a union bound over the $|N^{r}(v)|^{\mu^{r}} \leq d^{\mu^{r}}$ choices of $S$ (recall that $d=\Delta^r$),}
\Pr_\phi[B(v)] &\leq \sum_S \Pr_\phi[B(S,v)] < (ed)^{-\lambda'}.
\end{align*}

Intuitively, $B(v)$ is the event that $E(v)$ is \emph{too close} to happening.
That is, relatively few variables need to be resampled to give $E(v)$ a likely probability of happening.
Lemma~\ref{lem:LLL-condition} shows that the criterion for infection
``$\degB_S(v) > \mu$'' is a good proxy for the harder-to-analyze criterion ``$E(v)$ is too close to happening''.

\begin{lemma}\label{lem:LLL-condition}
Fix a total variable assignment $\phi$.
Let $S$ be any vertex set such that, for each vertex $v$, if  $B(v)$ occurs under $\phi$ or  $\degB_{S}(v) > \mu$, then $v$ must be in $S$.
Then $\Pr[E(v)~|~\vbl(E(v)) \setminus \vbl(S)] < (ed)^{-\lambda/2}$ for each vertex $v$.
\end{lemma}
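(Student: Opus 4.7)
The plan is to split on whether $v \in S$, handle the trivial case directly, and in the main case $v\notin S$ exhibit a small surrogate $S'\subseteq S\cap N^r(v)$ with $|S'|\le\mu^r$ such that conditioning on $\vbl(E(v))\setminus\vbl(S')$ induces the same distribution as conditioning on $\vbl(E(v))\setminus\vbl(S)$. Once $S'$ is constructed, it is one of the admissible subsets in the definition of $B(v)$, so the hypothesis $\neg B(v)$ specialises to $\neg B(S',v)$, which directly yields the desired bound $\Pr[E(v)\mid\vbl(E(v))\setminus\vbl(S)]<(ed)^{-\lambda/2}$.

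If $v\in S$ then $\vbl(E(v))=\vbl(\{v\})\subseteq\vbl(S)$, so $\vbl(E(v))\setminus\vbl(S)=\emptyset$ and the conditional probability equals the unconditional $\Pr[E(v)]\le(ed)^{-\lambda}\le(ed)^{-\lambda/2}$, using the LLL criterion and $\lambda\ge 2$. If $v\notin S$ then the hypothesis provides both $\neg B(v)$ and $\degB_S(v)\le\mu$. Because any $u\in S$ contributes variables to $\vbl(E(v))$ only when $\dist(u,v)\le r$, we have $\vbl(E(v))\cap\vbl(S)=\vbl(E(v))\cap\vbl(S\cap N^r(v))$; the task therefore reduces to producing $S'\subseteq S\cap N^r(v)$ with $|S'|\le\mu^r$ and $\vbl(S')\cap\vbl(E(v))=\vbl(S\cap N^r(v))\cap\vbl(E(v))$.

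The construction of $S'$ proceeds by recursion on the tree. Because $\degB_S(v)\le\mu$, at most $\mu$ subtree directions of $v$ contain $S$-vertices within $N^r(v)$. For each such direction $k$, with root neighbor $v_k$: if $v_k\in S$, place $v_k$ in $S'$, since a single $v_k$ witnesses every unset variable of $\vbl(E(v))$ supported in $T_k$ (because each $w\in N^{r/2}(v)\cap T_k$ satisfies $\dist(w,v_k)\le r/2-1<r/2$); if $v_k\notin S$, the hypothesis applied to $v_k$ gives $\degB_S(v_k)\le\mu$, and we recurse inside $T_k$ at one-smaller residual radius. Unrolling $r$ levels of the recursion yields at most $\mu^{r-1}$ representatives per direction, and summing over the $\le\mu$ relevant directions gives $|S'|\le\mu^r$.

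The main delicate point is that an $S$-vertex's radius-$r/2$ influence ball can straddle several subtrees of $v$---for example, when $r/2\ge 2$ a distance-$1$ $S$-vertex in one direction reaches vertices in other subtrees of $v$---so the subtree-by-subtree construction must certify that cross-subtree coverage of unset variables is preserved. This is where the proof expends most of its technical effort: one must verify, carefully tracking the residual radius in the recursion, that every $w\in N^{r/2}(v)$ with $\dist(w,S)\le r/2$ retains a witness in $S'$ at distance $\le r/2$. Once that bookkeeping is complete, the required set equality and the probability bound both fall out immediately from $\neg B(S',v)$.
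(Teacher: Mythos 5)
Your plan coincides with the paper's: dispatch $v\in S$ by noting $\vbl(E(v))\setminus\vbl(S)=\emptyset$, and for $v\notin S$ exhibit $S'\subseteq S\cap N^r(v)$ with $|S'|\le\mu^r$ and $\vbl(S')\cap\vbl(E(v))=\vbl(S)\cap\vbl(E(v))$, so that $\neg B(v)$ specialises to $\neg B(S',v)$ and gives the bound. Your $S'$ --- the first $S$-vertex met on each path from $v$ out to distance $r$ --- is exactly the paper's set $H$ of ``highest'' $S$-vertices (root at $v$, no $S$-ancestor). You bound $|S'|\le\mu^r$ directly from the branching factor of the recursion (justified by $\degB_S\le\mu$ at every non-$S$ vertex visited), whereas the paper argues by contradiction along a descending path $v_0,\dots,v_r$ via pigeonhole; that difference is cosmetic and your direct count is fine.

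The genuine gap is the one you name and then leave open: you write that ``one must verify, carefully tracking the residual radius in the recursion, that every $w\in N^{r/2}(v)$ with $\dist(w,S)\le r/2$ retains a witness in $S'$ at distance $\le r/2$,'' and then assert the lemma follows once that is done. But that verification \emph{is} the set equality $\vbl(S')\cap\vbl(E(v))=\vbl(S)\cap\vbl(E(v))$, so deferring it means the lemma is not proved. It is also shorter than you suggest and needs no residual-radius bookkeeping. Take any $u\in S$ and $w\in N^{r/2}(v)\cap N^{r/2}(u)$; then $\dist(u,v)\le\dist(u,w)+\dist(w,v)\le r$, so $u\in N^r(v)$, and the first $S$-vertex $u'$ on the $v$--$u$ tree path belongs to $S'$. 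Let $x$ be the median of $\{w,v,u\}$ (the common meeting point of the three pairwise paths). If $u'$ lies on the $v$--$x$ segment then $\dist(w,u')=\dist(w,x)+\dist(x,u')\le\dist(w,x)+\dist(x,v)=\dist(w,v)\le r/2$; if $u'$ lies on the $x$--$u$ segment then $\dist(w,u')=\dist(w,x)+\dist(x,u')\le\dist(w,x)+\dist(x,u)=\dist(w,u)\le r/2$. Either way $w\in N^{r/2}(u')$, hence $\variable(w)\subseteq\vbl(E(u'))\subseteq\vbl(S')$, which closes the gap. (The paper also states this equality as a one-line observation without spelling out the median argument, but to regard the lemma as proved you still need to supply it.)
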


\begin{proof}
If $v \in S$, then the probability of seeing $E(v)$ after resampling
$\vbl(S)$ is, according to the original LLL criterion, at most
$p < (ed)^{-\lambda}$.  In what follows we assume $v\not\in S$.

To prove the lemma, it suffices to show that there exists a vertex
set $S'$ such that (i) $S'\subset N^{r}(v)$, (ii) $|S'|\leq \mu^{r}$, and (iii) $\vbl(S')\cap \vbl(E(v)) = \vbl(S) \cap \vbl(E(v))$. Notice that
(iii) implies that resampling $\vbl(S')$ is equivalent to resampling $\vbl(S)$ from $v$'s point of view.
Since $v\not\in S$, by assumption, event $B(v)$ does not occur.
Since $|S'|\leq \mu^{r}$, event $B(S', v)$ does not occur.
Hence $\Pr[E(v)~|~\vbl(E(v)) \setminus \vbl(S')] < (ed)^{-\lambda/2}$, as desired.

Root the tree at $v$.  We call a vertex $u\in S$ ``highest'' if $u$ is in $N^{r}(v)$ and no ancestor of $u$ is in $S$.
Observe that if $H$ is the set of highest vertices, then $\vbl(S) \cap \vbl(E(v)) = \vbl(H) \cap \vbl(E(v))$.
To see this, observe that if $u' \in S$ is not highest,
and is a descendant of some highest $u\in S$,
that $\vbl(E(u')) \cap \vbl(E(v))$ is contained in $\vbl(E(u))\cap \vbl(E(v))$.

Thus, we only need to bound $|H|$ by $\mu^{r}$.
Suppose, for the sake of contradiction, that $|H| \ge \mu^{r}+1$.  Define the path $(v=v_0,v_1,\ldots,v_{r})$ by selecting
$v_i$ as the child of $v_{i-1}$ that maximizes the  number of vertices in $H$ contained in the subtree rooted at $v_i$.  We prove by induction that the subtree rooted at $v_i$ contains
at least $\mu^{r-i}+1$ $H$-vertices. The base case $i=0$ holds by assumption.  If there are $\mu+1$ subtrees of $v_i$ containing
$H$-vertices, then $v_i$ would be infected. Thus, by the pigeonhole principle, the number of $H$-vertices in the subtree rooted at
$v_{i+1}$ must be at least $\ceil{(\mu^{r-i}+1)/\mu} = \mu^{r-(i+1)}+1$. Hence the subtree rooted at $v_r$ contains $\mu^0+1=2$ $H$-vertices; this is a contradiction since the only vertex in this subtree eligible to be in $H$ is $v_{r}$ itself.
\end{proof}

\subsection{Contagion Process}\label{sect:contagion}

A {\em $(q_0,r,\mu)$-contagion process} on an $n$-vertex tree $T$ is played out as follows.
Initially, each vertex is infected with probability $q_0$, and these events are independent
for vertices at distance greater than $r$.
If $S$ is the set of infected vertices at some time and $\widehat{\deg}_S(v) > \mu$, then $v$ becomes infected.
In this section our goal is, given the initially infected vertices,
to compute a superset of those vertices that is \emph{stable} and \emph{small}, defined as follows.

\begin{definition}\label{def:smallstable}
Let $S_0$ be the initially infected vertices and $S\supset S_0$.
\begin{itemize}
\item $S$ is called \emph{stable} if it causes no more infection.
\item $S$ is called \emph{small} if each connected component induced by $\bigcup_{v \in S} N^{r}(v)$
contains a distance-$2r$ dominating set of size at most $O(\log n)$.
\end{itemize}
\end{definition}

In Lemma~\ref{thm:LLL-contagion}, we show that one can efficiently compute a set $S$ that is both \emph{stable} and \emph{small}.

\begin{lemma}\label{thm:LLL-contagion}
Consider a $(q_0,r,\mu)$-contagion process played on an $n$-vertex tree $T$ with maximum degree $\Delta$.
There is a $\RandLOCAL$ algorithm that computes a {small stable set} $S$ in $O(\log_\mu\log n)$ time,
where $r$ is constant, $q_0 \leq (ed)^{-8r}$, $d=\Delta^r$, and $\mu\geq 4$.
\end{lemma}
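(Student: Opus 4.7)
The plan is to avoid a naive round-by-round simulation of the contagion---which would require $\Omega(\log n)$ rounds---by showing that with probability $1-1/\poly(n)$ the contagion stabilizes globally within $k=O(\log_\mu\log n)$ rounds. Once this depth bound is established, the algorithm is immediate: every vertex collects its $(rk)$-neighborhood (of radius $O(\log_\mu\log n)$ since $r=O(1)$), simulates the contagion locally, and outputs its final status in $S$.

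The central object is a \emph{witness tree}: if a vertex $v$ first enters $S$ in round $k$, then unrolling the infection rule $\degB_S(v)>\mu$ produces an embedded subtree of $T$ rooted at $v$ in which every internal node points to at least $\mu+1$ distinct subtree directions, each containing a round-$(k-1)$ infection at distance $\leq r$, and whose depth-$k$ leaves all lie in the initial seed set $S_0$. I would upper bound the number of such embeddings by $d^{O((\mu+1)^k)}$ (roughly $d$ children per node of the witness), and, after grouping leaves into clusters of pairwise distance $>r$ so as to invoke the partial independence of $S_0$, upper bound the probability that a fixed embedding has all leaves initially infected by $q_0^{\Omega((\mu+1)^k)}$. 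The bound $q_0\leq (ed)^{-8r}$ supplies enough slack in the exponent to absorb both the $d^{O(\cdot)}$ counting factor and the $d^{O(r)}$ factor from short-range clustering, yielding
$$\Prob[\text{$v$ needs $\geq k$ rounds to stabilize}] \;\leq\; (ed)^{-\Omega((\mu+1)^k)}.$$
This drops below $1/n^{2}$ once $k$ exceeds a constant multiple of $\log_\mu(\log n/\log d)$, and a union bound over all $n$ vertices finishes the depth claim.

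For \emph{smallness}, the same witness machinery controls how far the contagion reaches. Every $v\in S$ lies within tree-distance $rk$ of a seed in $S_0$ (a leaf of its witness), so any connected component $C$ of $\bigcup_{v\in S} N^{r}(v)$ is covered by balls of radius $2r$ centered at the seeds of $S_0$ lying inside $C^{+}$, the $r(k+1)$-neighborhood of $C$. It then suffices to show $|S_0 \cap C^{+}| = O(\log n)$ whp. I would do this via a standard Beck-style component argument: the number of rooted ``seed clusters'' of size $\ell$ (trees of seeds pairwise joinable by short witness paths) is at most $n\cdot d^{O(\ell)}$; each configuration is realized in $S_0$ with probability at most $q_0^{\Omega(\ell)}$ by the limited-range independence of the initial infections; and because $q_0(ed)^{O(1)}\leq (ed)^{-\Omega(r)}$, these two bounds combine to give geometric decay in $\ell$, so $\ell=O(\log n)$ holds whp.

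The main obstacle I anticipate is the combinatorial bookkeeping: counting witness embeddings on $T$ carefully enough to preserve the $(ed)^{-\Omega((\mu+1)^k)}$ decay, and handling the short-range dependencies among initial infections so that the effective independence statement loses at most a $d^{O(r)}$ factor. The constants $\mu\geq 4$ and the $8r$ in the bound on $q_0$ appear to be tuned precisely so that the geometric shrinkage beats the combinatorial growth with room to spare.
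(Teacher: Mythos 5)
There is a genuine gap, and it is exactly the one the paper flags just before the lemma: the premise that the original contagion ``stabilizes globally within $k=O(\log_\mu\log n)$ rounds w.h.p.'' is false, and the paper explicitly says that a direct simulation ``would take $\Omega(\log n)$ parallel steps.'' Your witness-tree bound breaks because you assert that each internal node has $\mu+1$ children ``each containing a round-$(k-1)$ infection.'' The infection rule only requires $\mu+1$ subtrees containing vertices infected \emph{by} round $k-1$; if all of them had been infected by round $k-2$, then $v$ would have entered at round $k-1$, so \emph{one} child is forced to be a round-$(k-1)$ newcomer, while the other $\mu$ can perfectly well be round-$0$ seeds. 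Thus the witness structure for the original process degenerates to a single deep path decorated with $\mu$ seeds at each step: about $\mu k$ leaves rather than $(\mu+1)^k$. The counting factor $d^{O(\mu k)}$ then cancels against $q_0^{\Omega(\mu k)}$ only when $k = \Omega(\log n/(\mu\log d))$, which is linear in $\log n$, not $\log\log n$; so neither the depth bound nor the subsequent $O(\log\log n)$ local simulation goes through.

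The paper avoids this by never trying to reach the contagion's true fixed point. Its algorithm {\sf Find-Small-Stable-Set} first runs $\tau$ rounds of an \emph{over-aggressive} contagion with threshold $\mu/2$ to get $U_0\subseteq\cdots\subseteq U_\tau$, then $\tau$ rounds of a \emph{reverse} contagion with threshold $\mu$ that peels back $L_0 = U_\tau \supseteq\cdots\supseteq L_\tau$, and outputs $L_\tau$. The object it bounds is not ``$v$ gets infected at round $k$'' but the gap event $\eventB_i(u) = (u\notin U_i)\wedge(u\in L_{i+1})$; the condition for this event forces $\degB_{L_i}(u) > \mu$ while $\degB_{U_{i-1}}(u)\le\mu/2$, so at least $\mu/2$ subtrees contain a vertex satisfying the same gap event one level down. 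That is where the genuine branching factor $\mu/2$ per level (and hence the $(\mu/2)^{\tau/(r/2)}$ double exponent in Lemma~\ref{lem:stable-aux-1}) comes from — a branching your approach cannot manufacture from the one-phase process. Your sketch of the smallness argument is closer in spirit to the paper's Lemma~\ref{lem:small-aux}, but it too would have to be applied to $U_\tau$ rather than to the nonexistent shallow stable state of the original contagion.
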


The proof of Lemma~\ref{thm:LLL-contagion} is deferred to Section~\ref{sect:findset}.
Lemma~\ref{lem:LLL-reduction} connects the contagion problem to finding a good partial assignment.

\begin{lemma}\label{lem:LLL-reduction}
Suppose there is a $\tau$-round $\RandLOCAL$ algorithm for finding a small stable set $S$ for
a $((ed)^{-\lambda'},r,\mu)$-contagion process.
Then there exists a $(\tau+O(1))$-round $\RandLOCAL$ algorithm for finding
a good partial assignment $\phi$ to a tree-structured LLL instance with criterion
$p(ed)^\lambda < 1$, where $\lambda\geq 2(\mu^{r}+\lambda')$.
\end{lemma}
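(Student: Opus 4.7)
The plan is to reduce finding a good partial assignment directly to the contagion problem of Section~\ref{sect:contagion}. The algorithm has three stages. First, each vertex samples its variables to form a total assignment $\phi_0$ of $\mathcal{V}$ ($0$ rounds). Second, in $O(r) = O(1)$ rounds each vertex $v$ gathers the restriction of $\phi_0$ to $\vbl(E(v))$ along with the topology of $N^{r}(v)$, and determines whether $B(v)$ (as defined in Section~\ref{sect:criterion}) holds under $\phi_0$; let $S_0$ be the resulting set of ``infected'' vertices. By the bound $\Pr_{\phi_0}[B(v)] < (ed)^{-\lambda'}$ derived in Section~\ref{sect:criterion} via a union bound over at most $d^{\mu^{r}}$ candidate sets $S$, each vertex enters $S_0$ with probability at most $(ed)^{-\lambda'}$; moreover, because $B(v)$ depends only on variables held by $N^{r/2}(v)$, the indicators for $B(u)$ and $B(v)$ are independent whenever $\dist(u,v) > r$. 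Hence $S_0$ is a legitimate initial state of a $((ed)^{-\lambda'}, r, \mu)$-contagion process. Third, invoke the hypothesized $\tau$-round algorithm to produce a small stable superset $S \supseteq S_0$, and output the partial assignment $\phi$ obtained from $\phi_0$ by unsetting every variable in $\vbl(S)$.

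Verification of Definition~\ref{def:good} is then essentially syntactic. For property~(3): since $S\supseteq S_0$, no vertex $v\notin S$ has $B(v)$ occurring, and since $S$ is stable under the contagion no vertex $v\notin S$ has $\degB_S(v) > \mu$. These are exactly the hypotheses of Lemma~\ref{lem:LLL-condition}, which yields $\Pr[E(v)\mid \vbl(E(v))\setminus\vbl(S)] < (ed)^{-\lambda/2}$ for every $v$; we take $p' = (ed)^{-\lambda/2}$, and property~(1) follows as the special case $\vbl(E(v))\cap\vbl(S)=\emptyset$. For property~(2): any $v\in V'$ shares at least one variable with some vertex in $S$ and therefore lies within distance $r/2$ of $S$, so every connected component of $V'$ is contained in some connected component of $\bigcup_{u\in S} N^{r}(u)$. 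By smallness of $S$, that outer component contains a distance-$2r$ dominating set of size $O(\log n)$, which both supplies the dominating set required by Definition~\ref{def:good}(2) and bounds the component size by $\Delta^{2r}\cdot O(\log n) = \poly(\Delta)\cdot O(\log n)$, treating $r$ as a constant.

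The total round complexity is $\tau + O(1)$, as claimed. The calibration $\lambda \ge 2(\mu^{r}+\lambda')$ is precisely what is needed so that the union bound in the definition of $B(v)$ drives $\Pr_{\phi_0}[B(v)]$ down to the target contagion threshold $(ed)^{-\lambda'}$, while Lemma~\ref{lem:LLL-condition} simultaneously delivers the residual bound $(ed)^{-\lambda/2}$. There is no serious obstacle; the only point deserving comment is that $B(v)$ is defined via an inner quantifier over up to $d^{\mu^{r}}$ subsets of $N^{r}(v)$, but in the $\LOCAL$ model the processor at $v$ enumerates them using its unbounded local computation once $N^{r}(v)$ has been collected, so this step still costs only $O(r) = O(1)$ communication rounds.
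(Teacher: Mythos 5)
Your proof is correct and follows essentially the same route as the paper's: sample a total assignment, declare $v$ initially infected iff $B(v)$ holds, invoke the contagion algorithm to get a small stable set $S$, unset $\vbl(S)$, and verify Definition~\ref{def:good} via Lemma~\ref{lem:LLL-condition} (for properties (1) and (3)) and the smallness of $S$ (for property (2)). Your identification of $p'$ with $(ed)^{-\lambda/2}$ is actually cleaner than the paper's phrasing, which asserts $(ed)^{-\lambda/2} < \sqrt{p}$ — that inequality runs the wrong way under $p(ed)^\lambda < 1$, but the criterion that is really needed downstream is $p'(ed)^{\lambda/2} < 1$, which your choice delivers directly.

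One small slip in your verification of property (2): you write that $v \in V'$ implies $v$ ``lies within distance $r/2$ of $S$.'' Since $\vbl(E(v)) = \bigcup_{u \in N^{r/2}(v)} \variable(u)$ and $\vbl(S) = \bigcup_{w\in S}\bigcup_{u'\in N^{r/2}(w)}\variable(u')$, the per-vertex variable sets $\variable(\cdot)$ are disjoint, so $\vbl(E(v)) \cap \vbl(S) \neq \emptyset$ forces $N^{r/2}(v) \cap N^{r/2}(w) \neq \emptyset$ for some $w\in S$, i.e.\ $\dist_T(v,w) \leq r$, not $r/2$; in fact $V' = \bigcup_{w\in S} N^r(w)$ exactly. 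This does not damage your argument — the containment $V' \subseteq \bigcup_{w\in S} N^r(w)$ is all you use, and it still holds — but the distance bound as stated is off by a factor of two.
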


\begin{proof}
Let $q_0 = (ed)^{-\lambda'}$. Consider the $(q_0,r,\mu)$-contagion process defined by choosing a random assignment $\phi'$ to the variables in the LLL system and initially infecting all vertices $v$ such that $B(v)$ occurs. The lower bound on $\lambda$ implies $\Pr[B(v)] \le q_0 = (ed)^{-\lambda'}$.
Given the small stable set $S$, we let $\phi$ be the result of unassigning all variables in $\vbl(S) = \bigcup_{v\in S} \vbl(E(v)) = \bigcup_{v\in S}\bigcup_{u\in N^{r/2}(v)} \variable(u)$.

We now verify that $\phi$ is a good partial assignment.
Since $S$ is stable, for each vertex $v$, if  $B(v)$ occurs under $\phi$ or  $\degB_{S}(v) > \mu$, then $v$ must be in $S$.
By Lemma~\ref{lem:LLL-condition}, $\Pr[E(v)~|~\vbl(E(v)) \backslash \vbl(S)] < (ed)^{-\lambda/2} < \sqrt{p}$ for each vertex $v$,
and so Definition~\ref{def:good}(1) is satisfied.
Let $V' = \bigcup_{v \in S} N^r(v)$ be the set of all vertices $v$ such that $\vbl(E(v))$ contains some unassigned variables.
Since $S$ is small, each connected component $C$ induced by $V'$ contains a distance-$2r$ dominating set with size at most $O(\log n)$. Since $2r = O(1)$, the cardinality of $C$ is at most $\poly(\Delta)\cdot O(\log n)$. Hence Definition~\ref{def:good}(2) is also satisfied.
\end{proof}

We are now in a position to prove Lemma~\ref{lem:LLL-shattering}.

\begin{proof}
Recall that the LLL criterion of in Lemma~\ref{lem:LLL-shattering} is $\lambda\geq 2(4^{r}+ 8r)$.
We pick the largest \emph{even} integer $\mu$ such that $\lambda\geq 2(\mu^{r}+ 8r)$, and we set $\lambda' = 8r$.
Notice that $\mu\geq 4$ and $\log\mu = \Theta(\log\lambda)$.
By Lemma~\ref{thm:LLL-contagion}, a small stable set $S$ for the $((ed)^{-8r},r,\mu)$-contagion process can be computed in $O(\log_\mu\log n)=O(\log_\lambda\log n)$ time.
By Lemma~\ref{lem:LLL-reduction}, this implies a $O(\log_\lambda\log n)$-time $\RandLOCAL$ algorithm to finding
a good partial assignment $\phi$ under the LLL criterion $p(ed)^\lambda < 1$.
\end{proof}

\subsection{Finding a Small Stable Set}\label{sect:findset}

We prove Lemma~\ref{thm:LLL-contagion} in  this section.
The algorithm for Lemma~\ref{thm:LLL-contagion} simulates a more virulent contagion process for $\tau$ steps using
threshold $\mu/2$ rather than $\mu$, then simulates a reverse-contagion for $\tau$ steps,
where vertices become \emph{uninfected} if they were not initially infected and they
have nearby infected vertices in at most $\mu$ subtrees.
We prove that when $\tau=\Theta(\log_\mu \log n)$, the final infected set $S = L_\tau$ is both stable and small.
This process is called {\sf Find-Small-Stable-Set}.  The sets generated by this process satisfy that
$U_0 \subseteq \cdots \subseteq U_\tau = L_0 \supseteq \cdots \supseteq L_\tau$.

\medskip

\centerline{
\framebox{\parbox{5in}{
\noindent {\sf Find-Small-Stable-Set}.
\begin{description}
\item (1) $U_0 \leftarrow \{u\in V~|~\text{$u$ is initially infected}\}$. That is, $u\in U_0$ if $B(u)$ occurs initially.
\item (2) For $1\leq i\leq \tau$, do $U_i \leftarrow U_{i-1}\cup\{u\in V~|~\degB_{U_{i-1}}(u)>\mu/2\}$.
\item (3) $L_0 \leftarrow U_{\tau}$.
\item (4) For $1\leq i\leq \tau$, do $L_i \leftarrow L_{i-1} \setminus \{u\in L_{i-1} \setminus U_0~|~\degB_{L_{i-1}}(u)\leq\mu\}$.
\item (5) Return $L_{\tau}$.
\end{description}
}}}

\medskip

We show that $S = L_\tau$ is stable in Lemma~\ref{lem:stable}.
Let $L_{\tau+1}$ be the set of all vertices $u$ such that $\degB_{L_{\tau}}(u)> \mu$.
Our goal is to show that if $u \notin  L_\tau$, then $\degB_{L_\tau}(u)\leq\mu$ (i.e., $u \notin L_{\tau+1}$) with high probability.

Root $T$ at an arbitrary vertex, and let $T'$ refer to the rooted version.
Define $T'_u$ to be the subtree of $T'$ rooted at $u$,
and define $C'_u(k,[i,j])$ as $C_u(k,[i,j])\cap T'_u$.
Given a vertex set $W$,
define $\degA{W}(u)$ as the number of different
$k$ such that $C'_u(k,[1,r])\cap W\neq\emptyset$.
Although the original contagion process is played on $T$, it is easier to analyze a similar process
played on $T'$, where only descendants can cause a vertex to become infected.

In general, if $\{X(u)\}_{u \in V}$ is an ensemble of events associated with vertices and $W$ a subset of vertices,
we write $X(W)$ to denote the event $\bigcup_{u\in W} X(u)$, i.e., there exists $u \in W$ such that $X(u)$ occurs.
We write $X$ to denote the set of vertices $\{u \in V ~|~ X(u) \mbox{ occurs}\}$. For any two events $A$ and $B$, we write $A \Rightarrow B$ to denote $A \subseteq B$, i.e., $A$ implies $B$. With respect to a vertex $u$, consider the following three sequences of events.
\begin{align*}
&(\eventB_i(u)): &
&\text{for each $0\leq i\leq \tau$, let $\eventB_i(u)$ be $(u\notin U_{i})\wedge(u\in L_{i+1})$.}\\
&(\eventC_i(u)): &
&\text{let $\eventC_0(u)$ be $(u\in U_0)$; for each $0\leq i < \tau$, let $\eventC_{i+1}(u)$ be $\eventC_0(u)\lor(\degA{\eventC_{i}}(u)\geq\mu/2)$.}\\
&(\eventA_i(u)): &
&\text{let $\eventA_0(u)$ be $~\eventC_\tau(u)$; for each $0\leq i <  \tau$, let $\eventA_{i+1}(u)$ be $\degA{\eventA_{i}}(u)\geq\mu/2$.}
\end{align*}

\begin{lemma}\label{lem:UtauLtau}
No vertex can belong to both $U_{\tau} \setminus L_\tau$ and $L_{\tau+1}$.
\end{lemma}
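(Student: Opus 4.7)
The plan is a very short monotonicity argument. First, observe that the peeling phase generates a nested sequence $L_0 \supseteq L_1 \supseteq \cdots \supseteq L_\tau$, since each $L_i$ is obtained from $L_{i-1}$ by a set of deletions. Second, observe that $\degB_S(u)$ is monotone non-decreasing in $S$: if $S\subseteq S'$, then every subtree $k$ with $C_u(k,[1,r])\cap S\neq\emptyset$ also satisfies $C_u(k,[1,r])\cap S'\neq\emptyset$, so $\degB_S(u)\le\degB_{S'}(u)$.

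Now suppose for contradiction that some $u$ belongs to both $U_\tau\setminus L_\tau$ and $L_{\tau+1}$. Since $u\in U_\tau=L_0$ but $u\notin L_\tau$, there is a (smallest) index $1\le j\le\tau$ with $u\in L_{j-1}$ and $u\notin L_j$. By the definition of step (4) of {\sf Find-Small-Stable-Set}, this deletion means that $u\notin U_0$ and $\degB_{L_{j-1}}(u)\le\mu$.

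By the nestedness $L_\tau\subseteq L_{j-1}$ together with the monotonicity above,
\[
\degB_{L_\tau}(u)\;\le\;\degB_{L_{j-1}}(u)\;\le\;\mu.
\]
But by the definition of $L_{\tau+1}$, membership $u\in L_{\tau+1}$ requires $\degB_{L_\tau}(u)>\mu$, a contradiction. Hence no vertex lies in both $U_\tau\setminus L_\tau$ and $L_{\tau+1}$, which is exactly the claim. There is no real obstacle here; the only point that needs to be made explicit is that the peeling criterion ``$\degB_{L_{j-1}}(u)\le\mu$'' at the moment of deletion is preserved under further shrinking of the set, which is immediate from monotonicity of $\degB$.
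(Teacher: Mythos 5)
Your proof is correct and relies on the same two facts the paper uses: the peeling sets are nested ($L_0 \supseteq \cdots \supseteq L_\tau$) and $\degB_S(u)$ is monotone in $S$. The paper phrases it as "the $L_\tau$-witnesses to $\degB_{L_\tau}(u) > \mu$ persist in all earlier $L_j$, so $u$ is never eligible for deletion," whereas you locate the specific deletion step $j$ and derive the contradiction there; these are mirror images of the same monotonicity argument.
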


\begin{proof}
Suppose there were such a vertex $u$.   If $u\in L_{\tau+1}$ then it must have more than $\mu$ neighbors
in $L_{\tau}$, which were also in $L_{\tau-1} \subseteq \cdots \subseteq L_0 = U_\tau$.
But if $u\in U_{\tau}$ then it would also remain in $L_0,\ldots,L_\tau$,
contradicting the assumption that $u\in U_\tau\setminus L_\tau$.
\end{proof}

By Lemma~\ref{lem:UtauLtau}, to prove that $S=L_\tau$ is stable, it suffices to prove that
\[
\Prob[\eventB_\tau(u)]= \Prob[(u\notin L_{\tau})\wedge(u\in L_{\tau+1})] = 1/\poly(n).
\]

Lemma~\ref{lem:obs} connects the true contagion process on $T$ to an imagined one played on $T'$.

\begin{lemma}\label{lem:obs}
For each vertex $u$ in $T$, and for each $0 \leq i \leq \tau$, we have  $\eventB_i(u) \Rightarrow \eventA_i(u)$.
\end{lemma}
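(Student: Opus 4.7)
The plan is to prove $\eventB_i(u) \Rightarrow \eventA_i(u)$ by induction on $i$, with the base case handled by a monotonicity observation and the inductive step by a counting argument that compares $L_i$-witnesses to $U_{i-1}$-witnesses among the downward subtrees of $u$ in $T'$.

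For the base case $i=0$, I will first establish the auxiliary claim that $U_j \subseteq \eventC_j$ and $\eventC_j \subseteq \eventC_{j+1}$ for all $j \ge 0$, by a joint induction on $j$. The only nontrivial observation is that the $U$-process uses the strict threshold $\degB > \mu/2$, which (since $\mu$ is even) forces $\degB \ge \mu/2+1$; combined with $\degA_W \ge \degB_W - 1$ (at most one subtree of $u$ in $T$ is the ``upward'' direction in $T'$) and the inductive inclusion $U_{j-1} \subseteq \eventC_{j-1}$, this gives $\degA_{\eventC_{j-1}}(u) \ge \mu/2$ whenever $u$ is newly added to $U_j$, hence $u \in \eventC_j$. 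Granted this sub-claim, $\eventB_0(u)$ implies $u \in L_1 \subseteq L_0 = U_\tau \subseteq \eventC_\tau = \eventA_0(u)$; the hypothesis $u \notin U_0$ is in fact unused at this stage.

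For the inductive step, assume the implication for $i-1$ and suppose $\eventB_i(u)$ holds, so $u \notin U_i$ and $u \in L_{i+1}$. Since $u \notin U_i \supseteq U_0$, step (4) of {\sf Find-Small-Stable-Set} forces $\degB_{L_i}(u) > \mu$, hence $\degA_{L_i}(u) \ge \mu$. Since also $u \notin U_{i-1}$, step (2) forces $\degB_{U_{i-1}}(u) \le \mu/2$, so a fortiori $\degA_{U_{i-1}}(u) \le \mu/2$. Among the downward subtrees of $u$ at least $\mu$ contain an $L_i$-vertex in $C'_u(k,[1,r])$, while at most $\mu/2$ contain a $U_{i-1}$-vertex in $C'_u(k,[1,r])$; the difference gives at least $\mu/2$ downward subtrees $T_k$ with an $L_i$-witness and no $U_{i-1}$-witness in $C'_u(k,[1,r])$. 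For each such subtree, any $L_i$-witness $v_k$ automatically satisfies $v_k \notin U_{i-1}$ (otherwise it would itself qualify as a $U_{i-1}$-witness for $T_k$), so $\eventB_{i-1}(v_k)$ holds and the induction hypothesis yields $\eventA_{i-1}(v_k)$. Therefore $\degA_{\eventA_{i-1}}(u) \ge \mu/2$, which is exactly $\eventA_i(u)$.

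The main subtlety I expect is the bookkeeping that turns ``$\ge \mu/2$ downward subtrees contain an $L_i$-witness not shadowed by a $U_{i-1}$-witness'' into ``$\ge \mu/2$ downward subtrees contain a witness in $L_i \setminus U_{i-1}$''. The slack here is tight: the $U$-process's threshold at $u$ caps the $U_{i-1}$-witness count at exactly $\mu/2$, while the $L$-process's survival condition forces the $L_i$-witness count to exceed $\mu$, and these two bounds conspire to leave precisely the $\mu/2$ witnesses needed to continue the downward $\eventA$-recursion.
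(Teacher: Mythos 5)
Your proof is correct and follows essentially the same two-step structure as the paper's: first establish $U_j \subseteq \eventC_j$ by induction (giving the base case $\eventB_0 \Rightarrow \eventA_0$ via $L_1 \subseteq U_\tau \subseteq \eventC_\tau$), then run the main induction on $i$ by comparing the $L_i$- and $U_{i-1}$-witness counts among downward subtrees. The only cosmetic difference is that the paper subtracts at the $\degB$ level ($\degB_{\eventB_{i-1}}(u) \geq \degB_{L_i}(u) - \degB_{U_{i-1}}(u) > \mu/2$, then drops one for the upward subtree) whereas you convert to $\degA$ first and subtract there; both are valid, and your explicit use of the parity of $\mu$ and of $\degA_W \geq \degB_W - 1$ makes the threshold arithmetic a bit cleaner.
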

\begin{proof}
We first show that $(u\in U_i) \Rightarrow \eventC_i(u)$, for each $0 \leq i \leq \tau$.
The base case ($i=0$) follows from the definition of $\eventC_0(u)$.
Assume by inductive hypothesis that $(u\in U_{i-1})\Rightarrow \eventC_{i-1}(u)$. We have:
\[
\left(u\in U_i \setminus U_0\right) \Rightarrow
\left(\degB_{U_{i-1}}(u)>\mu/2\right)  \Rightarrow
\left(\degA{U_{i-1}}(u)\geq\mu/2\right)  \Rightarrow
\left(\degA{\eventC_{i-1}}(u)\geq\mu/2\right).
\]
This implies $(u\in U_i)\Rightarrow \eventC_i(u)$, since  $(u\in U_0) \Rightarrow \eventC_0(u) \Rightarrow \eventC_i(u)$.

Next, we prove by induction that $\eventB_i(u) \Rightarrow \eventA_i(u)$, for each $0 \leq i \leq \tau$.
The base case $i=0$ follows from the above result:
\[
\eventB_0(u)\Rightarrow (u\in L_{1})\Rightarrow (u\in L_0 = U_\tau)\Rightarrow \eventC_\tau(u)\Rightarrow \eventA_0(u).
\]
Assume inductively that $\eventB_{i-1}(u)\Rightarrow \eventA_{i-1}(u)$.
Let $u$ be any vertex in $L_{i+1} \setminus U_i$, i.e., the event $\eventB_{i}(u)$ occurs.
Since $u \notin U_i \supseteq U_0$,
the only way {\sf Find-Small-Stable-Set} could put $u\in L_{i+1}\setminus U_i$ is if
\begin{align*}
\degB_{L_{i}}(u) &> \mu\\
\mbox{ and \ }
\degB_{U_{i-1}}(u) &\leq \mu/2,
\intertext{which implies}
\degB_{\eventB_{i-1}}(u) \,=\, \degB_{L_{i}}(u) - \degB_{U_{i-1}}(u) &> \mu / 2.
\intertext{and hence}
\degA{\eventB_{i-1}}(u) &\geq \mu/2.
\end{align*}
By inductive hypothesis, we have
\[
\left(\degA{\eventB_{i-1}}(u) \geq \mu/2\right) \Rightarrow \left(\degA{\eventA_{i-1}}(u)\geq\mu/2\right) \Rightarrow \eventA_i(u),
\]
which completes the induction.
\end{proof}

For brevity, define $p_i = \max_u\Pr[\eventA_i(u)]$ and $q_i=\max_u\Pr[\eventC_i(u)]$. We prove two auxiliary lemmas.

\begin{lemma}\label{lem:stable-aux-1}
$p_\tau\leq(\Delta^{2((r^2 / 2)+1)}p_0)^{(\frac{\mu}2)^{\tau/(r/2)}}$.
\end{lemma}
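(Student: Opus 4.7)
The plan is to prove the bound by establishing and iterating a one-block recurrence of the form
\[
p_{i + r/2} \leq f \cdot p_i^{\mu/2}
\]
with multiplicative factor $f = \Delta^{2((r^2/2)+1)}$. Once this recurrence is in hand, induction on $k = \tau/(r/2)$ (using the assumption $\mu/2 \geq 2$, which makes the geometric sum $\sum_{j=0}^{k-1}(\mu/2)^j$ bounded by $(\mu/2)^k$) gives $p_{k r/2} \leq f^{\sum_{j=0}^{k-1}(\mu/2)^j}\, p_0^{(\mu/2)^k} \leq (f \cdot p_0)^{(\mu/2)^k}$, which is the stated bound.

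To prove the one-block recurrence, I would unfold the defining recursion of $\eventA$ for $r/2$ levels. The event $\eventA_{i+r/2}(u)$ occurs precisely when there exists a ``witness tree'': a $(\mu/2)$-ary branching tree of depth $r/2$ embedded in $T'_u$ rooted at $u$, in which every internal vertex's children occupy distinct subtrees at distance in $[1,r]$ from it, and each of the $(\mu/2)^{r/2}$ leaves satisfies $\eventA_i$. Because every branching step descends by at least one edge, each leaf is at depth at least $r/2$ (and at most $r^2/2$) below $u$ in $T'$. The leaves partition naturally into $\mu/2$ groups by the first-level subtree of $u$ containing them, and, since $\eventA_i(v)$ depends only on the descendant subtree $T'_v$, data from different groups lie in disjoint regions of $T'$.

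The key step is to combine this with the radius-$r$ independence of the contagion's initial infections to argue that the $\mu/2$ ``group events''---each asserting the existence of a valid branching substructure within a given first-level subtree terminating in a leaf satisfying $\eventA_i$---are mutually independent. A union bound within a single group over the at most $\Delta^{(r^2/2)+1}$ possible witness substructures upper-bounds the per-group probability by $\Delta^{(r^2/2)+1} \cdot p_i$; the product of the $\mu/2$ such bounds, together with the combinatorial $\binom{\Delta}{\mu/2}$ factor for selecting the first-level subtrees and other lower-order overheads, is absorbed into $f = \Delta^{2((r^2/2)+1)}$ under $\mu \geq 4$.

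The main obstacle will be the boundary case for independence: leaves from two different first-level subtrees of $u$ could be at distance exactly $r$ from each other (if both lie at depth $r/2$ from $u$), which only meets rather than strictly exceeds the radius-$r$ correlation threshold. I expect to resolve this either by conditioning on the initial infections within the $(r/2)$-neighborhood of $u$ so the remaining randomness factorizes across first-level subtrees, or by slightly enlarging the spine depth of the witness tree to push the leaves past the correlation radius and absorbing the extra combinatorial cost into $f$. Either route preserves the exponent $\mu/2$ on $p_i$ while keeping the multiplicative factor at most $\Delta^{O(r^2)}$, matching the stated bound.
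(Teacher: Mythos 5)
Your approach mirrors the paper's: unfold $\eventA$ through $r/2$ levels, reduce $\eventA_i(u)$ to the existence of $\mu/2$ per-subtree events $\eventA_{i-r/2}(C'_u(k,[r/2,r^2/2]))$, argue independence across subtrees, union-bound within each subtree and over the $\binom{\Delta}{\mu/2}$ choices of subtrees, and iterate. Two comments are worth making.

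First, the one-block recurrence comes out naturally in the form $p_i \le (g\, p_{i-r/2})^{\mu/2}$ (the paper obtains $g=\Delta^{r^2/2}$), not $p_{i+r/2}\le f\, p_i^{\mu/2}$ with $f$ independent of $\mu$: the $\binom{\Delta}{\mu/2}\le\Delta^{\mu/2}$ factor and the per-branch union bound both sit under the outer $\mu/2$ power. Both forms iterate to the stated bound --- the $g$-form yields $p_\tau \le (g^2 p_0)^{(\mu/2)^{\tau/(r/2)}}$ because the accumulated exponent on $g$ is $\sum_{j=1}^{\tau/(r/2)}(\mu/2)^j \le 2(\mu/2)^{\tau/(r/2)}$ --- but be careful not to mix the two bookkeeping conventions, since the $f$ you wrote down would not actually satisfy $p_{i+r/2}\le f\,p_i^{\mu/2}$ as stated. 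Second, your independence worry is a fair reading of the argument: the paper asserts flatly that the $\mu/2$ subtree events are independent because $\eventA_j(v)$ depends only on $\vbl(N^{r/2}(v)\cup T'_v)$, without flagging that at the final unfolding step one lands on $\eventA_0=\eventC_\tau$, and a leaf $v$ at distance exactly $r/2$ from $u$ has $u\in N^{r/2}(v)$, so $\eventA_0(v)$ can depend on $\variable(u)$ through $B(v)=\eventC_0(v)$; two branches could in principle share that one variable. For levels $j\ge 1$ the operator defining $\eventA_j$ looks strictly downward into $T'$, so the dependency stays inside each first-level subtree and there is no issue --- the soft spot is confined to the base case. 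Either of your proposed repairs (condition on $\vbl(N^{r/2}(u))$, or push the leaves one more level into their subtrees) closes the gap, and the extra $\Delta^{O(r)}$ factor fits comfortably inside the $\Delta^{r^2+2}$ slack of the stated bound.
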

\begin{proof}
Suppose that $u$ is a vertex such that $\eventA_i(u)$ occurs.
Then, by definition of $\eventA_i(u)$, there exist $\mu/2$ different indices $k$ such that $\eventA_{i-1}(C'_u(k,[1,r]))$ occurs.
A consequence of this observation is that
\[
\eventA_{i-1}(C'_u(k,[1,r])) \Rightarrow \eventA_{i-2}(C'_u(k,[2,2r])) \Rightarrow \eventA_{i-3}(C'_u(k,[3,3r]))
\cdots
\Rightarrow \eventA_{i-(r/2)}(C'_u(k,[r/2,r^2/2])).
\]

Therefore, if $\eventA_i(u)$ occurs, there must exist $\mu/2$ indices $k$ such that $\eventA_{i-(r/2)}(C'_u(k,[r/2,r^2/2]))$ occurs.  The $\mu/2$ events $\{\eventA_{i-(r/2)}(C'_u(k,[r/2, r^2 / 2]))\}$ are independent, since
$\eventA_i(v)$  depends only on $\vbl(T'_v) = \bigcup_{w \in N^{r/2}(v) \cup T'_v} \variable(w)$.
This independence property is one reason why it is easier to analyze a contagion on $T'$ rather than $T$.

By a union bound over all vertices in $C'_u(k,[r/2, r^2 / 2])$, we have
\[
\Pr\left[\eventA_{i-(r/2)}(C'_u(k,[r/2,r^2 / 2]))\right]\leq \Delta^{r^2 / 2-1}p_{i-(r/2)}.
\]
Taking a union bound over at most $\binom{\Delta}{\mu/2}$ choices of $\mu/2$ distinct indices $k$, we infer that
\[
p_i\leq \Delta^{\mu/2}\left(\Delta^{r^2 / 2-1}p_{i-(r/2)}\right)^{\mu/2}
\leq
\left(\Delta^{(r^2 / 2)} p_{i-(r/2)}\right)^{\mu/2}
\]
for each $r/2 \leq i \leq \tau$.
Assume $\tau$ is a multiple of $r/2$, and recall $\mu/2 \geq 2$. We can bound $p_\tau$ as follows.
\[p_\tau
\leq p_0^{(\frac{\mu}2)^{\tau/(r/2)}} \cdot \prod_{j=1}^{\tau/(r/2)} \left(\Delta^{(r^2 / 2)}\right)^{(\frac{\mu}2)^{j}}
\leq \left(\Delta^{r^2}p_0\right)^{(\frac{\mu}2)^{\tau/(r/2)}}. \qedhere
\]
\end{proof}

\begin{lemma}\label{lem:stable-aux-2}
$p_0 = q_\tau \leq \Delta^{r/2}q_0$.
\end{lemma}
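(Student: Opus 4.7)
The equality $p_0 = q_\tau$ is immediate: by definition $\eventA_0(u)$ is $\eventC_\tau(u)$, so $p_0 = \max_u \Pr[\eventA_0(u)] = \max_u \Pr[\eventC_\tau(u)] = q_\tau$. The substance of the lemma lies in the inequality $q_\tau \leq \Delta^{r/2} q_0$.

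The plan is to prove the stronger claim that $q_i \leq \Delta^{r/2} q_0$ for every $0 \leq i \leq \tau$, by induction on $i$. The base case $i=0$ is trivial. For the inductive step, rewrite
\[
\eventC_i(u) \;=\; \eventC_0(u) \,\cup\, \{\degA_{\eventC_{i-1}}(u) \geq \mu/2\}
\]
and apply the union bound to obtain $\Pr[\eventC_i(u)] \leq q_0 + \Pr[\degA_{\eventC_{i-1}}(u) \geq \mu/2]$. The task then reduces to showing that the second term is a vanishing fraction of $q_0$ at every step.

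To bound this second probability I would enumerate the $\binom{\Delta}{\mu/2}$ choices of $\mu/2$ distinct subtree indices $k_1,\ldots,k_{\mu/2}$ of $u$ in $T'$ and, for each subtree $k_j$, the at most $\Delta^r$ choices of a witness vertex $v_j \in C'_u(k_j,[1,r])$. The key observation is that, after conditioning on the variables lying inside $N^{r/2}(u)$ (a set of at most $\Delta^{r/2}$ vertices), the residual dependence of each $\eventC_{i-1}(v_j)$ is confined to the subtree that contains $v_j$, so the $\mu/2$ witnessing events become conditionally independent across $j$. Integrating the conditioning back (with care for any residual positive correlation) yields
\[
\Pr\!\left[\degA_{\eventC_{i-1}}(u) \geq \mu/2\right] \;\leq\; \binom{\Delta}{\mu/2}(\Delta^r q_{i-1})^{\mu/2}.
\]

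Plugging in the inductive hypothesis $q_{i-1} \leq \Delta^{r/2} q_0$ and the standing assumption $q_0 \leq (ed)^{-8r} = \Delta^{-\Theta(r^2)}$, a routine calculation bounds the right-hand side by $\Delta^{-\Omega(r^2 \mu)} q_0$, which is a vanishing fraction of $q_0$ once $\mu/2 \geq 2$ and $r \geq 2$. Summing this tiny increment over the $\tau$ iterations keeps $q_\tau$ strictly below $\Delta^{r/2} q_0$ (using $\Delta^{r/2} \geq 3$ coming from $\Delta \geq 3$ and $r \geq 2$), completing the induction. The main obstacle is precisely the conditional-independence step: the events $\eventC_{i-1}(v_j)$ can share variables inside the $r/2$-ball around $u$, and one must argue carefully---either via the conditioning sketched above or via a more direct union bound that pays the slack factor $|N^{r/2}(u)| \leq \Delta^{r/2}$---that the $\mu/2$-fold branching really supplies the needed $q_{i-1}^{\mu/2}$ saving in spite of this shared dependence.
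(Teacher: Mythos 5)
Your identification of $p_0 = q_\tau$ from the definitions is correct, and the high-level plan (bound $q_i$ in terms of $q_0$ and show each step adds only a vanishing fraction of $q_0$) resembles what the paper does. However, your recursion steps back only one level (from $q_i$ to $q_{i-1}$), and this is where the argument breaks. The events $\eventC_{i-1}(v_j)$ for witnesses $v_j$ at distance $1$ from $u$ are \emph{not} independent across subtrees: each depends on variables in $N^{r/2}(v_j)\cup T'_{v_j}$, and the $N^{r/2}(v_j)$ balls heavily overlap near $u$. You propose to repair this by conditioning on the variables in $N^{r/2}(u)$. After such conditioning the residual dependence is indeed confined to disjoint subtrees, so the product factorizes \emph{conditionally}. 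But when you ``integrate the conditioning back'' you get $\Pr[\bigcap_j \eventC_{i-1}(v_j)] = \Expect_\phi\bigl[\prod_j \Pr[\eventC_{i-1}(v_j)\,|\,\phi]\bigr]$, and this does not reduce to $\prod_j q_{i-1}$. In the worst case the conditional probabilities co-vary --- e.g.\ there is a set of conditionings of measure $q_{i-1}$ under which every $\Pr[\eventC_{i-1}(v_j)\,|\,\phi]=1$ --- and then the expectation is $\Theta(q_{i-1})$ rather than $q_{i-1}^{\mu/2}$. So the bound $\Pr[\degA_{\eventC_{i-1}}(u)\ge \mu/2] \le \binom{\Delta}{\mu/2}(\Delta^r q_{i-1})^{\mu/2}$ that drives your induction is not established by this conditioning.

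What the paper does to get genuine independence is precisely the $r/2$-fold \emph{unrolling} of the recursion $\eventC_i(u) \Rightarrow \eventC_0(u) \vee \eventC_{i-1}(C'_u(k,[1,r]))$. After $r/2$ applications, $\eventC_\tau(u)$ implies either $H_0(N^{r/2-1}(u))$ occurs (an initial infection within distance $r/2-1$ of $u$), or $\mu/2$ distinct subtrees contain a witness $v$ with $\dist(u,v)\in[r/2,r^2/2]$ for which $\eventC_{\tau-r/2}(v)$ holds. These deeper witnesses are far enough from $u$ that $N^{r/2}(v)\cup T'_v$ lies (essentially) inside its own subtree, so the events are unconditionally independent across subtrees and the $q_{\tau-r/2}^{\mu/2}$ factor is legitimate. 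The ``near-$u$'' part is not squeezed through independence at all; it is paid for separately by the union bound $\Pr[H_0(N^{r/2-1}(u))] \le \Delta^{r/2-1}q_0$, and this term is exactly what survives and gives the $\Delta^{r/2}$ factor in the statement. Your ``slack factor $\Delta^{r/2}$'' intuition points in the right direction, but it needs to appear as a separate additive error term arising from the $H_0$ events near $u$ (after unrolling), not as a multiplicative correction to a conditional-independence product that does not actually hold.
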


\begin{proof}
Recall that $\eventC_i(u)$ is $(u\in H_0) \vee (\degA{\eventC_{i-1}}(u)\geq\mu/2)$.
This implies that
\[
\eventC_{i-1}(C'_u(k,[1,r]))\Rightarrow H_0(C'_u(k,[1,r])) \lor \eventC_{i-2}(C'_u(k,[2,2r])).
\]
Repeating this $(r/2) - 1$ times, $\eventC_{i-1}(C'_u(k,[1,r]))$ implies that
\[
H_0(C'_u(k,[1, r(r/2 - 1)]) \lor \eventC_{i-(r/2)}(C'_u(k,[r/2, r^2 / 2])).
\]
Since  $H_0(C'_u(k,[1, r(r/2 - 1)]) \Rightarrow \eventC_{i-(r/2)}(C'_u(k,[r/2,  r^2 / 2]))$, we conclude that
\[
\eventC_{i-1}(C'_u(k,[1,r])) \Rightarrow  H_0(C'_u(k,[1,r/2 - 1]) \vee \eventC_{i-(r/2)}(C'_u(k,[r/2,r^2/2])).
\]
Thus, if $\eventC_i(u)$ occurs, then either (i)  $H_0(N^{r/2 - 1}(u))$ occurs, or (ii) there exist $\mu/2$  different indices $k$ such that $\eventC_{i-(r/2)}(C'_u(k, [r/2, r^2 / 2]))$ occurs. The events $\eventC_{i-(r/2)}(C'_u(k,[r/2, r^2 / 2]))$ for all $k$ are independent, since $\eventC_i(v)$  depends only on
$\vbl(T'_v) = \bigcup_{w \in N^{r/2}(v) \cup T'_v} \variable(w)$.

By a union bound, $\Pr[\eventC_{i-(r/2)}(C'_u(k, [r/2,r^2 / 2]))]\leq \Delta^{r^2/2-1}q_{i-r/2}$.
Suppose that $\tau$ is a multiple of $r/2$.
Taking a union bound over at most $\binom{\Delta}{\mu/2}$ choices of $\mu/2$ distinct indices $k$, we have
\begin{align*}
q_\tau &\leq
\Prob\left[ H_0(N^{r/2 - 1}(u))\right] + \binom{\Delta}{\mu/2} \cdot \Delta^{r^2/2-1}q_{\tau-(r/2)}\\
&\leq
\Delta^{r/2 - 1}q_0+\Delta^{\mu/2}\left(\Delta^{r^2/2-1}q_{\tau-(r/2)}\right)^{\mu/2}\\
&\leq
\Delta^{r/2 - 1}q_0+\left(\Delta^{r^2/2}q_{\tau-(r/2)}\right)^{\mu/2}\\
&\leq
\Delta^{r/2 - 1}q_0+ q_0^{(\frac{\mu}2)^{\tau/(r/2)}} \cdot \prod_{j=1}^{\tau/(r/2)} \left(\Delta^{r^2/2}\right)^{(\frac{\mu}2)^{j}} \\
&\leq
\Delta^{r/2 - 1}q_0 + \left(\Delta^{2(r^2/2)}q_0\right)^{(\frac{\mu}2)^{\tau/(r/2)}}
	&\text{($\mu/2 \geq 2$)}\\
&\leq
\Delta^{r/2 - 1}q_0 + \left(\Delta^{2(r^2/2)}q_0\right)^2
	&\text{($({\mu} / 2)^{\tau/(r/2)} \geq 2$)}\\
&\leq
\Delta^{r/2 - 1}q_0 + \Delta^{4(r^2/2) - 8r^2} q_0
	&\text{($q_0 \leq (ed)^{-8r}$ and $d=\Delta^r$)}\\
&\leq \Delta^{r/2}q_0. && \qedhere
\end{align*}
\end{proof}

We are now ready to prove that $S = L_\tau$ is stable.

\begin{lemma}\label{lem:stable}
For each vertex $u \notin  L_\tau$, $\degB_{L_\tau}(u)\leq\mu$ with high probability, and so $L_\tau$ is stable.
\end{lemma}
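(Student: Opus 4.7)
The plan is to chain the already-established auxiliary lemmas so that, for each vertex $u$, the bad event ``$u \notin L_\tau$ and $\degB_{L_\tau}(u) > \mu$'' is dominated by $\Pr[\eventA_\tau(u)] \le p_\tau$, which I then drive down to $1/\poly(n)$ by choosing $\tau = \Theta(\log_\mu \log n)$. A union bound over the $n$ vertices then finishes the argument.

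\medskip

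First I would observe that if $u \notin L_\tau$ but $u$ is flagged for infection by $L_\tau$, i.e., $u \in L_{\tau+1}$, then by Lemma~\ref{lem:UtauLtau} we must have $u \notin U_\tau$. Consequently the event in question is contained in $\eventB_\tau(u) = (u\notin U_\tau)\wedge(u\in L_{\tau+1})$. Next I apply Lemma~\ref{lem:obs}, which gives $\eventB_\tau(u) \Rightarrow \eventA_\tau(u)$; hence $\Pr[u \notin L_\tau \text{ and } \degB_{L_\tau}(u) > \mu] \le p_\tau$.

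\medskip

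The second step is to show $p_\tau$ is polynomially small in $n^{-1}$. Chaining Lemmas~\ref{lem:stable-aux-1} and \ref{lem:stable-aux-2} yields a bound of the form
\[
p_\tau \;\le\; \bigl(\Delta^{O(r^2)}\,q_0\bigr)^{(\mu/2)^{\tau/(r/2)}}.
\]
Because $r=O(1)$, $d=\Delta^r$, and $q_0 \le (ed)^{-8r}$, the base inside the outer exponent is at most some absolute constant $A < 1$ (in fact $A \le (ed)^{-\Omega(r)}$). Taking logs twice, the condition $p_\tau \le n^{-c-1}$ reduces to $(\mu/2)^{\tau/(r/2)} \ge \Theta(\log n)$, which with $\mu \ge 4$ and $r=O(1)$ is satisfied as soon as $\tau = \Theta(\log_\mu \log n)$. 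This matches the time bound claimed in Lemma~\ref{thm:LLL-contagion}.

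\medskip

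Finally, I take a union bound over all $n$ vertices: the probability that some $u \notin L_\tau$ has $\degB_{L_\tau}(u) > \mu$ is at most $n\cdot p_\tau \le n^{-c}$. Equivalently, with high probability every $u \notin L_\tau$ satisfies $\degB_{L_\tau}(u) \le \mu$, so no further vertex becomes infected from $L_\tau$, which is precisely the definition of stability.

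\medskip

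The main obstacle is really just bookkeeping: making sure that the polynomial factors $\Delta^{O(r^2)}$ accumulated by Lemmas~\ref{lem:stable-aux-1} and \ref{lem:stable-aux-2} are absorbed by the gap in $q_0 \le (ed)^{-8r}$ so that the base is genuinely less than $1$, and checking that the outer exponent $(\mu/2)^{\tau/(r/2)}$ reaches $\log n$ with $\tau = O(\log_\mu \log n)$. Both amount to constant-factor calculations given the hypotheses of Lemma~\ref{thm:LLL-contagion}; there is no deeper probabilistic obstacle since all the independence arguments have been discharged inside the auxiliary lemmas via the passage from $T$ to the rooted tree $T'$.
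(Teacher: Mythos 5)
Your proposal is correct and follows essentially the same route as the paper: reduce the bad event to $\eventB_\tau(u)$ via Lemma~\ref{lem:UtauLtau}, pass to $\eventA_\tau(u)$ via Lemma~\ref{lem:obs}, chain Lemmas~\ref{lem:stable-aux-1} and~\ref{lem:stable-aux-2} to bound $p_\tau$ by a doubly-exponentially decaying quantity, and observe that $\tau = \Theta(\log_\mu \log n)$ drives this below $1/\poly(n)$. The only cosmetic difference is that you spell out the final union bound over vertices, which the paper leaves implicit.
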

\begin{proof}
It suffices to show that $\Prob[\eventB_\tau(u)] = 1/\poly(n)$.
By Lemma~\ref{lem:obs}, $\Prob[\eventB_\tau(u)] \leq \Prob[\eventA_\tau(u)] = p_\tau$. We show that $p_\tau = 1/\poly(n)$.
\begin{align*}
p_\tau 
&\leq \left(\Delta^{r^2}p_0\right)^{(\frac{\mu}{2})^{\tau/(r/2)}} &\text{(Lemma~\ref{lem:stable-aux-1})}\\
&\leq \left(\Delta^{r^2  + r/2}q_0\right)^{(\frac{\mu}{2})^{\tau/(r/2)}} &\text{(Lemma~\ref{lem:stable-aux-2})}\\
&\leq \left(\Delta^{r^2  + r/2 - 8r^2}\right)^{(\frac{\mu}{2})^{\tau/(r/2)}} &\text{($q_0 \leq (ed)^{-8r}$ and $d=\Delta^r$)}\\
&\leq \left(\Delta^{-27}\right)^{(\frac{\mu}{2})^{\tau/(r/2)}} 			&\text{($r\ge 2$)}\\
&\leq \left(\Delta^{-27}\right)^{\Theta(\log n)} &\text{($\tau = \Theta(\log_\mu \log n)$ and $r=O(1)$)}\\
&\leq 1/\poly(n). & &\qedhere 
\end{align*}
\end{proof}

In Lemma~\ref{lem:small} we prove that $U_\tau$ is small, which implies that $S = L_\tau \subseteq U_\tau$ is also small.
We write $T^{[a,b]}$ to denote the graph defined by the vertex set $V(T)$ and the edge set $\{\{u,v\} \ | \ \dist_T(u,v) \in [a,b]\}$.
We first prove an auxiliary lemma.

\begin{lemma}\label{lem:small-aux}
Fix a $c \geq 1$. With probability $1 - n^{-\Omega(c)}$, the graph $H=T^{[r+1,4r]}$ has no connected subgraph $D$ such that (i) $|D|\geq c\log n$, and (ii) there is a subset $D' \subseteq D \cap U_0$ containing at least half of the
vertices in $D$, and $\dist_T(u,v) > r$ for distinct $u, v \in D'$.
\end{lemma}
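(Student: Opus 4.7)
The plan is a first-moment union bound over all configurations $(D,D')$ meeting conditions (i) and (ii). The key fact I will exploit is that, by definition of a $(q_0,r,\mu)$-contagion process, the initial infection events $\{v\in U_0\}$ are mutually independent across any vertex set whose pairwise tree-distances exceed $r$; condition (ii) guarantees exactly this for $D'$, so $\Pr[D'\subseteq U_0]\le q_0^{|D'|}\le q_0^{|D|/2}$.

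Next I would count candidates. The maximum degree of $H=T^{[r+1,4r]}$ is at most $\Delta^{4r}$, since each vertex has at most $\Delta^{4r}$ other vertices within tree-distance $4r$. By the standard counting bound for connected subgraphs of a bounded-degree graph, the number of connected subgraphs of $H$ of size $k$ is then at most $n\cdot(e\Delta^{4r})^k$, and for each such $D$ there are at most $2^{|D|}$ subsets $D'\subseteq D$. Combining with the previous paragraph and summing over $k\ge c\log n$ gives
\begin{align*}
\Pr[\text{some bad }(D,D')\text{ exists}]
&\le \sum_{k\ge c\log n}n\cdot(e\Delta^{4r})^k\cdot 2^k\cdot q_0^{k/2}\\
&\le \sum_{k\ge c\log n}n\cdot\bigl(2e\Delta^{4r}\cdot(e\Delta^{r})^{-4r}\bigr)^k\\
&=\sum_{k\ge c\log n}n\cdot\bigl(2e^{1-4r}\Delta^{-4r(r-1)}\bigr)^k,
\end{align*}
where the second line uses $q_0^{1/2}\le(e\Delta^{r})^{-4r}$, which follows from the hypothesis $q_0\le(ed)^{-8r}$ with $d=\Delta^{r}$. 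Since $r\ge 2$ is constant and $\Delta\ge 3$, the base $\alpha:=2e^{1-4r}\Delta^{-4r(r-1)}$ is a constant much smaller than $1/e^2$, so the geometric sum is dominated by its first term $n\cdot\alpha^{c\log n}=n^{1+c\log\alpha}=n^{-\Omega(c)}$, as claimed.

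I expect the one subtle point to be correctly packaging the \emph{mutual} (rather than merely pairwise) independence of the initial infection events on $D'$; this is, however, directly built into the definition of the contagion process, because each $B(v)$ depends only on $\vbl(E(v))$, and these variable sets are pairwise disjoint whenever $\dist_T(u,v)>r$. The rest of the proof is bookkeeping: verifying that the exponent $8r$ in the hypothesis $q_0\le(ed)^{-8r}$ precisely overwhelms the $\Delta^{4r}$-degree blow-up of $H$ so that the per-configuration probability decays fast enough to absorb both the subtree-counting factor $(e\Delta^{4r})^k$ and the subset-choice factor $2^k$.
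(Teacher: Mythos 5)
Your proof is correct and takes essentially the same approach as the paper: a first-moment union bound over connected subgraphs $D$ of $H=T^{[r+1,4r]}$, with the count bounded via the bounded degree of $H$ (the paper enumerates rooted unlabeled tree shapes, $4^k$ of them, each embeddable in $\le n\Delta^{4r(k-1)}$ ways, while you cite the equivalent standard bound $n\cdot(e\Delta^{4r})^k$), a factor $2^{|D|}$ for the choice of $D'$, and $q_0^{|D'|}\le q_0^{|D|/2}$ from the mutual independence of the events $\{v\in U_0\}$ across $D'$, which you correctly attribute to the disjointness of the variable sets $\vbl(E(v))$ when $\dist_T(u,v)>r$. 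Your explicit sum over all $k\ge c\log n$ is a slightly cleaner way to discharge the hypothesis ``$|D|\ge c\log n$'' than the paper's argument, which implicitly restricts attention to $|D|=c\log n$.
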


\begin{proof}
The proof is similar to that of~\cite[Lemma~3.3]{BEPS16}.
Suppose that such $D$ exists, and consider a tree $\hat{T}$ in $H$ spanning $D$.
There are at most $4^{c\log n}$ different rooted unlabeled $c\log n$-node trees;
and each of them can be embedded into $H$ in less that $n\cdot \Delta^{4r(c\log n-1)}$ ways.
Moreover, there are at most $2^{c\log n}$ ways of selecting a subset $D' \subseteq D$.
Since $|D'| \geq c\log n / 2$ and $\dist_T(u,v) > r$ for distinct $u, v \in D'$,
the probability that such $\hat{T}$ exists is at most $q_0^{c\log n/2}$.

Recall that $q_0 \leq (ed)^{-8r}$, $d=\Delta^r$, $r \geq 2$, and $\Delta \geq 3$.
A union bound over all possibilities of $\hat{T}$ implies that such $D$ exists with probability at most
\begin{align*}
p' &= 4^{c\log n}\cdot n\cdot \Delta^{4r(c\log n-1)}\cdot 2^{c\log n}\cdot q_0^{c\log n/2}\\
&\leq n^{3c+1} \Delta^{-4c(r^2 - r) \log n}  e^{-4cr \log n}\\
&\leq n^{(4 - 4(r^2-r) \log \Delta - 4 \log e)c}\\
&\leq n^{-14c}. \qedhere
\end{align*}
\end{proof}

Recall from Definition~\ref{def:smallstable} that 
 $U_i$ is small if  each connected component  induced by $\bigcup_{v \in U_i} N^{r}(v)$
contains a distance-$2r$ dominating set of size at most $O(\log n)$. 

\begin{lemma}\label{lem:small}
With high probability, each connected component  induced by $\bigcup_{v \in U_\tau} N^{r}(v)$
contains a distance-$2r$ dominating set of size at most $O(\log n)$, and so $U_\tau$ is small.
\end{lemma}

\begin{proof}
Let $C$ be any connected component induced by $\bigcup_{v \in U_\tau} N^{r}(v)$.
We pick a distance-$2r$ dominating set $D$ of $C$ greedily, preferring vertices
in $U_0$ over $U_1$, and $U_1$ over $U_2$, etc.
Each time a vertex $v$ is picked we remove from consideration all vertices in $N^r(v)$.
Recall that $U_0 \subseteq \cdots \subseteq U_\tau$.
The set $D$ is obviously a distance-$r$ dominating set of
$U_\tau \cap C$. Since $U_\tau \cap C$ is itself a
distance-$r$ dominating set of $C$, the set $D$ is a
distance-$2r$ dominating set of $C$.

We write $u_i$ to denote the $i$th vertex added to $D$, and define
$D_i = \{u_1, \ldots, u_i\}$.
Let $m_i$ denote the number of connected components induced by $D_i$ in the graph $T^{[r+1,2r]}$ (rather than $T$).
We claim that if $u_i \notin U_0$, then $m_i < m_{i-1}$. This implies that at least half of the vertices in $D$ belong to $U_0$.
Observe that the set $D$ is connected in $H=T^{[r+1,4r]}$ (since $D$ is a distance-$2r$ dominating set of $C$), and so by Lemma~\ref{lem:small-aux}, $|D| = O(\log n)$ with high probability.

We prove the above claim in the remainder of the proof.
Consider the moment some $u_i \notin U_0$ is added to $D$.
We will show that the connected component of $D_i$ {in the graph $T^{[r+1,2r]}$} that contains $u_i$ is formed by merging $u_i$ with at least two connected components of $D_{i-1}$ {in the graph $T^{[r+1,2r]}$}.

The algorithm {\sf Find-Small-Stable-Set} added $u_i$ to $U_j$ because $u_i$ had at least $\mu/2\ge 2$ subtrees containing $U_{j-1}$-vertices that are within $N^r(u_i)$.
Let $T_1$ and $T_2$ be any two such subtrees. For each $k=1,2$, let $v_k$ be a $U_{j-1}$-vertex contained in both $T_k$ and $N^r(u_i)$.
Then there must be a vertex $w_k \in N^{r}(v_k)$ such that $w_k$ has been already added to $D$,
since otherwise the greedy algorithm should prefer $v_k$ over $u_i$.
Observe that $w_1$ and $w_2$ belong to separate connected components of $D_{i-1}$ {in the graph $T^{[r+1,2r]}$}, since $u_i \notin N^r(w_1) \cup N^r(w_2)$; but $w_1$, $w_2$, and $u_i$ are in the same component of $D_{i}$ {in the graph $T^{[r+1,2r]}$}, since $w_k \in N^{r}(v_k) \subseteq N^{2r}(u_i)$, for both $k=1,2$.
\end{proof}

We have proven (Lemmas~\ref{lem:stable} and \ref{lem:small})
that the algorithm {\sf Find-Small-Stable-Set} computes a set $S=L_\tau$
that is \emph{stable} and \emph{small}, in $O(\log_\mu\log n)$ time.
Lemma~\ref{lem:LLL-reduction} shows that any such algorithm can
be used to find a \emph{good} partial assignment to the variables in
any tree-structured LLL instance with $p(ed)^\lambda < 1$ and $\lambda \ge 2(4^r+8r)$.\footnote{It is possible to replace  $2(4^r+8r)$ with $2(4^r+cr)$ for some smaller $c$, but not too small. We do not attempt to optimize this coefficient.}
The \emph{stability} criterion is
used to show that the derived LLL instances satisfy $p'(ed)^{\lambda/2} < 1$ and $p' = \sqrt{p}$.
The \emph{smallness} criterion implies that the instances have size $\poly(\Delta)\log n$
and $\log n$-size, distance-$O(1)$ dominating sets.
Because $\log\mu = \Theta(\log\lambda)$,
the time to find the good partial assignment is $O(\log_\lambda\log n)$.\\

\section{Network Decomposition of Trees}\label{sect:tree-decomp}

Our interest in network decompositions stems from Lemma~\ref{lem:LLL-aux-2} due to~\cite{FischerG17}, 
which shows that they imply non-trivial \emph{deterministic} LLL algorithms.  
Most work on network decompositions~\cite{PanconesiS96}
has focussed on arbitrary graphs. 

Recall that a $(\lambda,\gamma)$-network decomposition is a 
partition of the vertices into $\lambda$ parts $V_1, \ldots, V_{\lambda}$ 
such that each $V_i$ induces connected components with diameter at most $\gamma$; and
 a $(\lambda_1,\gamma_1,\lambda_2,\gamma_2)$-network decomposition is a partition 
of the vertices into $\lambda_1+\lambda_2$ parts $V_1, \ldots, V_{\lambda_1},U_1, \ldots, U_{\lambda_2}$ 
such that each $V_i$ (resp.~$U_i$) induces connected components with diameter $\gamma_1$ (resp.~$\gamma_2$).

In this section we present two network decomposition algorithms for $T^k$ where $T=(V,E)$ is an $n$-vertex tree that contains a distance-$d$ dominating set
$S$ of size $s$.  In our application $d$ and $k$ are constants.
We assume all vertices agree on the numbers $(d,k,s)$.
We \emph{do not} need a specific dominating set $S$ to be given 
as input.

We emphasize that the network decomposition that we would like to compute is with respect to $T^k$, but the communication network is $T$.  All diameter parameters in network decompositions are measured with respect to $T^k$.

\subsection{A Simple Network Decomposition}\label{sect:tree-decomp1}

We first design a simple decomposition that partitions any tree-structured graph $T^k$ into 2 parts.

\begin{theorem}\label{thm:tree-decomp}
Let $T$ be a tree containing a distance-$d$ dominating set of size $s$.
There is a $\DetLOCAL$ algorithm $\mathcal{A}$ that computes a \underline{strong-diameter}
$(2, O(\log s + d/k))$-network decomposition of $T^k$ in $O(k\log s+d+k\log^\ast n)$ time,
i.e., $O(\log s + \log^\ast n)$ time when $d = O(1)$ and $k=O(1)$.
\end{theorem}

In what follows we prove Theorem~\ref{thm:tree-decomp}.
We assume the underlying communications network is $T$ rather than  $T^k$.
Consider the following two tree operations. They are similar to the ones described in~\cite{ChangP19}, which are inspired by Miller and Reif~\cite{MillerR89}. The second operation is parameterized by an integer $\ell \ge 2$. In our application we set $\ell = \Theta(k)$.

\medskip

\begin{description}
\item[\rake:] Remove all leaves and isolated vertices.
\item[\compress:] Remove all vertices that belong to some path $P$ such that (i) all vertices in $P$ have degree at most $2$, and (ii) the number of vertices in $P$ is at least $\ell$.
\end{description}

\medskip

Let $\mathcal{A}'$ be the algorithm on the tree $T$ defined as follows. 
(1) Do $3d+1$ \rake{} operations; 
(2) repeat the following sequence $\log s$ times: perform one \compress{} and then $\ell - 1$ \rake{} operations.

\begin{lemma}
Algorithm $\mathcal{A}'$ removes all vertices in $T$.
\end{lemma}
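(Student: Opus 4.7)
The plan is to split the argument into two stages: first, show that the $3d+1$ initial Rakes reduce $T$ to a subgraph of the minimal subtree of $T$ spanning $S$; second, show that the subsequent $\log s$ Rake-Compress rounds finish the job via a Miller--Reif style contraction argument.

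For the first stage, define $T_S$ to be the minimal subtree of $T$ containing $S$. Since $S$ is a distance-$d$ dominating set, every $v \in T$ lies within distance $d$ of $T_S$: if $v \notin T_S$, the path from $v$ to its nearest $S$-vertex enters $T_S$ through a unique attachment vertex $v'$ with $\dist(v,v') \le d$. I would then prove by induction on $i$ the standard characterization: $v$ survives $i$ Rakes (i.e., $v \in T_i$) iff $T$ contains two vertex-disjoint paths of length $i$ starting at $v$. Combining these, if $v \notin T_S$, then every path from $v$ of length $> d$ must traverse $v'$, so no two vertex-disjoint paths of length $d+1$ exist; hence $v$ dies by the $(d{+}1)$-st Rake and $T_{3d+1} \subseteq T_S$. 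The role of the extra $2d$ Rakes is to further prune the ``shrubby'' leaf branches of $T_S$ so that the surviving forest $T^\ast = T_{3d+1}$ has controlled structure. In particular, because $T^\ast \subseteq T_S$, every critical vertex (degree $\ge 3$) of $T^\ast$ is a critical vertex of $T_S$; and since $T_S$ has at most $s$ leaves (each one in $S$, by minimality of the Steiner subtree), $T_S$ and therefore $T^\ast$ has at most $s-2$ critical vertices.

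For the second stage, I would carry out a Miller--Reif style analysis of the Rake-Compress process on $T^\ast$. Collapsing maximal degree-$\le 2$ paths, the ``skeleton'' of $T^\ast$ is a tree with $O(s)$ critical vertices and branches. I would show that one Rake-Compress round (one Compress followed by $\ell-1$ Rakes) reduces the number of critical vertices by at least a constant factor: the Compress destroys any degree-$\le 2$ path of length $\ge \ell$, and the subsequent $\ell-1$ Rakes destroy any residual degree-$\le 2$ path of length $< \ell$ (since a path of length $m < \ell$ vanishes after $\lceil m/2 \rceil \le \ell-1$ Rakes). A critical vertex therefore either loses at least two of its incident branches and becomes a leaf (and is consumed), or is isolated by Compress and then raked. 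After $\log s$ rounds at least half of the remaining critical vertices are eliminated each time, so none survive, and the residual forest (a disjoint union of short paths) has already been cleared by the final Rakes.

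The main obstacle is the detailed bookkeeping in the second stage: one must carefully argue that each round indeed halves the critical-vertex count, handling short branches (length $< \ell$) that escape Compress but are consumed by the $\ell-1$ subsequent Rakes, and tracking which critical vertices lose sufficiently many branches to be demoted to leaves. This is essentially the classical parallel tree-contraction analysis adapted to the specific deterministic Compress rule used here, and the choice of $\ell \ge 2$ is precisely what makes one Compress plus $\ell-1$ Rakes sufficient for a constant-factor reduction.
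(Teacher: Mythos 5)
The proposal takes a genuinely different route from the paper: the paper roots $T$ arbitrarily, defines $\size(v)$ as the number of $S$-vertices in $v$'s subtree, and proves by induction that any $v$ with $\size(v)\le 2^i$ is gone after the $i$-th round of Step (2), because once all smaller-size vertices are removed, the survivors in $v$'s subtree with $\size>2^{i-1}$ must form a path (else $\size(v)>2^i$), and one $\compress$ plus $\ell-1$ $\rake$s clears any path. You instead count degree-$\ge 3$ (``critical'') vertices and argue the count halves per round.

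There is a genuine gap in your Stage~2. The claim that ``the subsequent $\ell-1$ Rakes destroy any residual degree-$\le 2$ path of length $<\ell$'' is false for \emph{internal} degree-$2$ paths joining two surviving degree-$\ge 3$ vertices: no vertex on such a path is ever a leaf, so $\rake$ never touches it. Consequently the stated dichotomy -- a critical vertex ``either loses at least two of its incident branches and becomes a leaf, or is isolated'' -- fails for any critical vertex whose branches all reach other critical vertices within distance $<\ell$; such a vertex loses nothing and stays critical. The halving of the critical-vertex count is in fact provable, but by a different argument: in the skeleton tree $\hat K$ on critical vertices, every $\hat K$-leaf and every $\hat K$-degree-$2$ vertex loses all its pendant branches (those \emph{are} removed, by $\compress$ if length $\ge\ell$ or by the $\ell-1$ $\rake$s if shorter since they are pendant) and hence drops to degree $\le 2$, while the remaining $\hat K$-degree-$\ge 3$ vertices number fewer than half the total by the tree handshake identity. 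A second, smaller issue: after all critical vertices disappear, the surviving forest need not be ``short paths already cleared'' -- it can contain arbitrarily long former internal paths, so one more round is required (a bookkeeping off-by-one, not fatal). Your Stage~1 bound ``length $>d$ must traverse $v'$'' should read roughly $>2d$, but this does not affect the $3d+1$ budget. The paper's $\size$-potential avoids all of this reasoning about which degree-$2$ paths survive, which is why it is much simpler to carry through.
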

\begin{proof}
Let $S$ be any size-$s$ distance-$d$ dominating set of $T$.
Root $T$ at an arbitrary vertex and let $\size(v)$ be the number of vertices in the subtree $T_v$ rooted at $v$ that belong to $S$.
For any vertex $v \in V$,
we prove by induction that (i) if $\size(v) \leq 1$, then $v$ is removed in Step~(1) of $\mathcal{A}'$, and (ii) if $1 < \size(v) \leq 2^i$, then
$v$ is removed on or before the $i$th iteration of Step~(2) of $\mathcal{A}'$.

For the case $\size(v) \leq 1$, observe that the height of the subtree $T_v$ rooted at $v$ is at most $3d$.
Suppose the height of $T_v$ is at least $3d+1$, then there is a path $P$ connecting $v$ and a leaf that has at least $3d+2$ vertices. We claim that for any distance-$d$ dominating set $S$ of $T$, we need to have  $|S \cap T_v| \geq 2$. For each  $u \in S \cap T_v$,  $u$ can dominate at most $2d+1$ vertices in $P$, and so there must be at least one vertex $x$ in $P$ that is not dominated by $u$ and its distance to $v$ is at least $d$. To dominate $x$, we need another vertex in $S \cap T_v$, and so  $|S \cap T_v| \geq 2$, contradicting the assumption $\size(v) \leq 1$. 
Therefore, the entire subtree $T_v$ (including $v$) must be removed after the initial $3d+1$ \rake{} operations.

Consider the case $2^{i-1} < \size(v) \leq 2^i$. By the inductive hypothesis, all vertices $u$ with 
$\size(u) \leq  2^{i-1}$ have been removed before the $i$th iteration of Step~(2).
With respect to the vertex $v$, define $V'$ to be the set of all vertices $u$ such that (i) $\size(u) > 2^{i-1}$, and (ii) $u$ is in the subtree $T_v$ rooted at $v$. The set $V'$ induces a path with one endpoint at $v$, since otherwise $\size(v) > 2 \cdot 2^{i-1}=2^i$. Let $C$ be a connected component induced by vertices in $V'$ that are not removed yet. If $|C| \geq \ell$, then all vertices in $C$ are removed after 1 \compress. Otherwise, all vertices in $C$ are removed after $\ell - 1$ \rake{} operations.
\end{proof}

In the following discussion, the notions of connected components and degrees are  with respect to $T$.
To compute a $(2, O(\log s + d/k))$-network decomposition of $T^k$, it suffices to compute a partition $V = V_1 \cup V_2$ meeting the following two conditions. 
\begin{description}
\item (C1) For both labels $c \in \{1,2\}$, any two vertices $u$ and $v$ in two distinct connected components of $V_c$  must have $\dist_T(u,v) > k$. This guarantees that the set of connected components of $V_c$ remains unaltered if we change the underlying graph from $T$ to $T^k$.
\item (C2) For both labels $c \in \{1,2\}$, each connected component of $V_c$ has diameter at most $O(k \log s + d)$.  This implies the diameter upper bound of $O(\log s + d/k)$ when the underlying graph is $T^k$.
\end{description}

Recall that $\mathcal{A}'$ performs $L_r = (3d+1) + (\ell-1) \log s$ \rake\ and $L_c = \log s$ \compress{} operations;
let $L = L_r + L_c = (3d+1) + \ell \log s$. 
We write $U_i$ to denote the set of all vertices that are removed during the $i$th operation.
We are now in a position to present the algorithm $\mathcal{A}$.
The algorithm $\mathcal{A}$ begins by computing the decomposition $V = \bigcup_{i=1}^L U_i$ using $\mathcal{A}'$.
Then, for $i = L$ down to $1$, label all vertices $v \in U_i$ by $\{1,2\}$ as follows.

\paragraph{Case 1.}
If the $i$th operation is \rake, then label $U_i$ as follows. Let $v \in U_i$. For the case that $v$ is of degree-1 in the subgraph induced by $\bigcup_{j=i}^L U_j$, let $u$ be the unique neighbor of $v$ in $\bigcup_{j=i}^L U_j$. If $u \notin U_i$, then $v$ adopts the same label as $u$. Otherwise, $u \in U_i$ must also be of degree-1 in $\bigcup_{j=i}^L U_j$; we give both $u$ and $v$ 
the same label $c \in \{1,2\}$. 
For the case that $v$ is an isolated vertex 
of $\bigcup_{j=i}^L U_j$, we label $v$ by any $c \in \{1,2\}$.

\paragraph{Case 2.}
If the  $i$th operation is \compress, then label $U_i$ as follows. Let $P$ be a path that is a connected component of $U_i$. The number of vertices in $P$ is at least $\ell = \Theta(k)$. Compute a labeling of the vertices in $P$ meeting the following conditions: (i) each connected component induced by  vertices of the same label has size within $[k, 7k]$, (ii) if $v$ is an endpoint of $P$ that is adjacent to a vertex $u \in \bigcup_{j=i+1}^L U_j$, then the label of $v$ is the same as the label of $u$.

Such a labeling of $P$ can be computed in $O(k)$ time if we are given an independent set $I$ of $P$ such that each connected component of $P \setminus I$ has size within $[3k, 6k]$.
Suppose that we already have such a set $I$. For each $v \in I$, we find an arbitrary subpath $P_v \subseteq P$ that contains $v$ and has exactly $k$ vertices. All vertices in $\bigcup_{v\in I} P_v$ are labeled 1, and the remaining vertices in $P$ are labeled 2. At this moment,  each connected component induced by  vertices of label 1 has size $k$, and  each connected component induced by  vertices of label 2 has size within $[3k - 2(k-1), 6k] = [k+2, 6k]$. If there is a component $C$ violating Condition~(ii) of the previous paragraph, 
we flip the label of all vertices in $C$ (i.e., from 1 to 2 or from 2 to 1). 
If $\ell \geq c k$ for some large enough universal constant $c$, 
then we obtain a labeling satisfying both Condition~(i) and Condition~(ii).

The computation of the independent set $I$ can be done in $O(k \log^\ast n)$ time, as we explain below.
Suppose that we have an independent set $I'$ of $P$ such that each connected component of $P \setminus I$ has size within $[\alpha, 2\alpha]$.
We show that in $O(\alpha \log^\ast n)$ time we can compute an independent set $I''$ of $P$ such that  each connected component of $P \setminus I$ has size within $[\beta, 2 \beta]$, for any prescribed number $\beta \leq 2\alpha+1$. Let $\tilde{P}$ be the ``imaginary path'' formed by contracting all vertices in  $P \setminus I$. A maximal independent set $\tilde{I}$ of  $\tilde{P}$ can be computed in $O(\alpha \log^\ast n)$ time. At this point,  each connected component $C$ of $P \setminus \tilde{I}$ has size within $[2\alpha+1, 4\alpha + 2]$. The component size constraint $[\beta, 2 \beta]$ can be met by adding new vertices to $\tilde{I}$ to subdivide the oversized components.
The desired independent set $I$ can be computed by $\log k$ iterated applications of the above procedure, and the runtime is $\sum_{i=1}^{\log k} O(2^i \log^\ast n) = O(k \log^\ast n)$.

\paragraph{Time Complexity.}
The total running time of $\mathcal{A}$ is $ O(L_r + k L_c) + O(k\log^\ast n) = O(k\log s + d + k\log^\ast n)$, since the independent set computation of paths removed by the \compress\ operation can be computed in $O(k\log^\ast n)$ time {\em in parallel}.

\paragraph{Validity of Labeling.}
We now verify that the labeling resulting from  $\mathcal{A}$ satisfies the two conditions (C1) and (C2).
Consider two distinct connected components $C$ and $C'$ induced by $V_1$. In view of  Case~2 of algorithm  $\mathcal{A}$, any path $P'$ connecting a vertex in $C$ and a vertex in $C'$ in $T$ must contain a subpath $P''$ consisting of $k$ vertices in $V_2$. The same is true if we swap $V_1$ and $V_2$, and so (C1) holds. Consider a connected component $C$ by $V_1$ or $V_2$. Let $i^\star$ be the largest index $i$ such that $U_i \cap C \neq \emptyset$, and let $v^\star$ be any vertex in $C \cap U_{i^\star}$. We show that for any vertex $u \in C$, the unique path $P$ connecting $u$ and $v^\star$ in $T$ contains $O(L_r + k L_c) = O(k\log s+d)$ vertices, and so (C2) holds. Consider any index $i \in [1, i^\star]$. If the $i$th operation is \rake, then we have $|P \cap U_i| \leq 2$ (in view of Case 1). If the $i$th operation is \compress, then we have $|P \cap U_i| \leq 7k$ (in view of Case 2). Thus, indeed $|P| = O(L_r + k L_c)$.

\subsection{A Mixed-Diameter Network Decomposition}

In this section, we compute a mixed-diameter network decomposition of $T^k$ consisting of one part with weak diameter $O(\log_{\lambda/k}s+d/k)$ and $O(\lambda^2)$ additional parts whose connected components have strong diameter at most~$1$.

\begin{theorem}\label{thm:tree-decomp2}
Let $T$ be a tree containing a distance-$d$ dominating set of size $s$.
There is a $\DetLOCAL$ algorithm $\mathcal{A}$ that computes a \underline{weak-diameter}
$(1,O(\log_{\lambda/k} s + d/k),O(\lambda^2),1)$-network decomposition of $T^k$ in $O(k\log_{\lambda/k} s+d+k\log^* n)$ time,
where $\lambda=\Omega(k)$ is sufficiently large, i.e., $\lambda \geq ck$ for some universal constant $c$.
When $k=O(1)$ and $d=O(1)$, the time bound is $O(\log_\lambda s+\log^* n)$.
\end{theorem}

In what follows we prove Theorem~\ref{thm:tree-decomp2}.
For each tree operation, let $T_i$ denote the set of vertices that remain immediately before the $i$th operation. Consider the following two operations applied to $T_i$.

\medskip

\begin{description}
\item[\rake:] Remove all leaves and isolated vertices.
\item[\compress:] Remove all vertices $v$ such that $|N^{4k}(v)\cap T_i|\leq\lambda$.
\end{description}

\medskip

Set 
$m=\left\lfloor\frac{\lambda}{4k}\right\rfloor-1$.
By choosing the universal constant $c$ sufficiently large, we may assume that $m\geq 2$ and $m=\Theta(\lambda/k)$.
Let $\mathcal{A}^*$ be the following algorithm on $T$:
(1) perform $3d+1$ \rake{} operations; and
(2) repeat $\lceil\log_m s\rceil$ times the sequence consisting of one \compress{} operation followed by $4k$ \rake{} operations.

\begin{lemma}\label{lem-aux-decomp2}
Algorithm $\mathcal{A}^*$ removes all vertices in $T$.
\end{lemma}

\begin{proof}
Let $S$ be any distance-$d$ dominating set of $T$ of size $s$.
Root $T$ at an arbitrary vertex, and let $\size(v)$ be the number of vertices of $S$ in the subtree rooted at $v$.
We prove by induction that (i) if $\size(v)\leq 1$, then $v$ is removed in Step~(1) of $\mathcal{A}^*$, and (ii) if $1<\size(v)\leq m^i$, then $v$ is removed within the first $i$ iterations of Step~(2).

If $\size(v)\leq 1$, then the subtree rooted at $v$ has height at most $3d$, and hence the entire subtree, including $v$, is removed by the first $3d+1$ \rake{} operations.

Now suppose that $m^{i-1}<\size(v)\leq m^i$, and assume inductively that every vertex $u$ with $\size(u)\leq m^{i-1}$ has already been removed during the first $i-1$ iterations of Step~(2).
Let $V'$ be the set of descendants $u$ of $v$ satisfying $\size(u)>m^{i-1}$.
All descendants of $v$ outside $V'$ have already been removed, and $V'$ induces a subtree rooted at $v$ with at most $m-1$ leaves.

Let $T'$ be the remaining tree immediately before the \compress{} operation in the $i$th iteration of Step~(2), and consider a vertex $u\in V'$ with $\dist_T(u,v)\geq 4k$.
The subgraph induced by the vertices of $V'$ within distance $4k$ of $u$ can be viewed as a tree rooted at $u$, of height at most $4k$, with at most $(m-1)+1=m$ leaves. It therefore contains at most $4km+1\leq\lambda$ vertices. The assumption $\dist_T(u,v)\geq4k$ guarantees that $N^{4k}(u)\cap T'$ contains no ancestor of $v$. Consequently,
\[
|N^{4k}(u)\cap T'|\leq 4km+1\leq\lambda,
\]
and $u$ is removed by the next \compress{} operation. The vertices of $V'$ that remain afterward are all within distance less than $4k$ of $v$, and hence are removed by the following $4k$ \rake{} operations.
\end{proof}

We now describe the network decomposition algorithm $\mathcal{A}$.
First run $\mathcal{A}^*$, and let $C$ be the set of vertices removed by \compress{} operations; we call the vertices in $C$ \emph{centers}. For every center $v$, we mark all vertices in $N^{k/2}(v)$. The set of marked vertices is
\[
\mathcal{M}=\{x\in V:\dist_T(x,C)\leq k/2\}.
\]
Each marked vertex $x\in\mathcal{M}$ is assigned to an arbitrary center $c(x)\in C$ satisfying $\dist_T(x,c(x))\leq k/2$, with $c(v)=v$ for every center $v\in C$. For each center $v\in C$, let
\[
K_v=\{x\in\mathcal{M}:c(x)=v\}
\]
be the cluster centered at $v$.

Define the center graph $H=(C,E_H)$ by setting
\[
\{u,v\}\in E_H
\quad\text{if and only if}\quad
\dist_T(u,v)\leq 2k.
\]
We compute a proper $O(\lambda^2)$-coloring of $H$ and assign every vertex in $K_v$ the color of its center $v$. All unmarked vertices receive color $0$.

We prove that (i) $\Delta(H)\leq\lambda-1$; (ii) every connected component induced by a nonzero color has strong diameter at most $1$ in $T^k$; and (iii) every connected component induced by color $0$ has weak diameter $O(\log_{\lambda/k}s+d/k)$ in $T^k$.

\paragraph{Proof of (i).}
Fix a center $v\in C$, and let
\[
X=\{x\in C:\dist_T(x,v)\leq 2k\}.
\]
Choose a vertex $u\in X$ that is removed by the earliest \compress{} operation among the vertices of $X$; ties within the same operation are broken arbitrarily. Let $T'$ be the remaining tree immediately before that \compress{} operation.
Every vertex $x\in X$ belongs to $T'$: it cannot have been raked earlier because it is eventually removed by \compress{}, and it cannot have been compressed earlier by the choice of $u$. Moreover,
\[
\dist_T(u,x)\leq\dist_T(u,v)+\dist_T(v,x)\leq4k.
\]
Hence
\[
X\subseteq N^{4k}(u)\cap T'.
\]
Since $u$ is removed by \compress{}, $|N^{4k}(u)\cap T'|\leq\lambda$, and so $|X|\leq\lambda$. Thus, $\deg_H(v)\leq\lambda-1$, so $\Delta(H)\leq\lambda-1$.

\paragraph{Proof of (ii).}
For every center $v\in C$ and every two vertices $x,y\in K_v$,
\[
\dist_T(x,y)\leq\dist_T(x,v)+\dist_T(v,y)\leq k.
\]
Thus $K_v$ is a clique in $T^k$, and hence has strong diameter at most $1$.

It remains to show that two distinct clusters of the same color are not adjacent in $T^k$.
Suppose that $x\in K_u$ and $y\in K_v$, where $u\neq v$, and that $\dist_T(x,y)\leq k$. Then
\[
\dist_T(u,v)
\leq\dist_T(u,x)+\dist_T(x,y)+\dist_T(y,v)
\leq k/2+k+k/2
=2k.
\]
Therefore $u$ and $v$ are adjacent in $H$ and receive distinct colors. Hence each monochromatic connected component of a nonzero color is exactly one cluster $K_v$ and has strong diameter at most $1$ in $T^k$.

\paragraph{Proof of (iii).}
Let $R$ be a connected component of $T-C$. Every vertex of $R$ is
removed by a \rake{} operation. Moreover, along any path in $R$, at
most two vertices are removed by each \rake{} operation. Since the
total number of \rake{} operations is
\[
3d+1+4k\lceil\log_m s\rceil
    = O(k\log_{\lambda/k}s+d),
\]
each connected component of $T-C$ has strong diameter
$O(k\log_{\lambda/k}s+d)$ in $T$.

We claim that every connected component induced by the unmarked
vertices in $T^k$ is contained in a single connected component of
$T-C$. Indeed, suppose that two unmarked vertices $x$ and $y$ belong
to distinct connected components of $T-C$. The unique $x$--$y$ path
in $T$ contains a center $w\in C$. Since $x$ and $y$ are unmarked,
\[
\dist_T(x,w)>k/2
\qquad\text{and}\qquad
\dist_T(y,w)>k/2.
\]
Consequently,
\[
\dist_T(x,y)
 = \dist_T(x,w)+\dist_T(w,y)
 > k,
\]
and hence $x$ and $y$ are not adjacent in $T^k$. Therefore, no edge
of $T^k$ between unmarked vertices joins distinct connected
components of $T-C$. It follows that every connected component induced by color $0$ has
weak diameter
\(
O\left(\log_{\lambda/k}s+\frac{d}{k}\right)
\)
in $T^k$.

\paragraph{Summary.}
Combining (i)--(iii), $\mathcal{A}$ computes a weak-diameter
$(1,O(\log_{\lambda/k}s+d/k),O(\lambda^2),1)$-network decomposition of $T^k$ in
$O(k\log_{\lambda/k}s+d+k\log^*n)$ time.  

The decomposition algorithm  $\mathcal{A}^*$ takes $O(k\log_{\lambda/k}s+d)$ time.
The $O(\lambda^2)$-coloring of $H$ can be computed using Linial's algorithm~\cite{Linial92}. One communication round in $H$ can be simulated in $O(k)$ rounds in $T$, so the coloring takes $O(k\log^*n)$ time.

\section{Deterministic Algorithms for Edge Coloring Trees}\label{sect:upperbound}

Let $T=(V,E)$ be a tree with $n$ vertices and $N^+(v) = N(v) \cup \{v\}$ be the inclusive neighborhood of $v$.
We decompose $T$ using another variation on 
Miller and Reif's~\cite{MillerR89} rake and compress operations,
the second of which is parameterized by an integer $k\ge 2$.

\medskip
\begin{description}
\item {\rake}: Remove all leaves and isolated vertices from $T$.
\item {\compress}: Remove the set $\{v\in V \;|\; \mbox{for every $u\in N^+(v)$, $\deg_T(u) \le k$}\}$
from $T$.
\end{description}

\begin{theorem}
Alternately applying \compress\ and \rake\ $1+ \log_k n$ times removes all vertices from any $n$-vertex tree $T$.
\end{theorem}

\begin{proof}
Root $T$ at an arbitrary vertex and let $\size(v)$ be the number of vertices in the subtree rooted at $v$.
We prove by induction that if $\size(v) \le k^i$, $v$ will be removed after the first $i+1$ rounds of \compress\ and \rake.  
The claim is trivially true when $i=0$.
Assume the claim is true for $i-1$.  Let $v$ be any vertex with $\size(v) \in (k^{i-1},k^i]$,
and define $V'$ to be the set of all vertices $u$ such that (i) $\size(u) \in(k^{i-1},k^i]$ and (ii) $u$ is in the subtree rooted at $v$.
Notice that each vertex $u \in V'$ has $\deg_{V'}(u) \leq k$, since otherwise $\size(u) > k^i$.
By the inductive hypothesis, all descendants of $v$ that are not in $V'$ have been removed after $i$ rounds of
\compress\ and \rake.  The $(i+1)$th \compress\ will remove all remaining vertices in $V' - \{v\}$.
Hence all descendants of $v$ have been removed after the  $(i+1)$th \compress. 
However, the degree of the parent
of $v$ is unbounded, so $v$ may not be removed.
If $v$ still remains, the $(i+1)$th \rake\ will remove it.
\end{proof}

\begin{theorem}
There is an $O(\log_\Delta n)$-time $\DetLOCAL$ algorithm for $\Delta$-edge coloring a tree $T$ with maximum degree $\Delta\ge 3$.
\end{theorem}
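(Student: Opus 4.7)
My plan is to use the tree decomposition theorem just proved, with $k=\Delta-1$, which peels $T$ in $m=\log_k n+1=O(\log_\Delta n)$ alternating \compress{}--\rake{} iterations and produces removal batches $R_1,\ldots,R_m$. Let $G_i$ denote the subgraph induced by $R_{\ge i}:=\bigcup_{j\ge i}R_j$. The coloring will be built by reverse peeling: I process the batches in order $R_m, R_{m-1},\ldots,R_1$ and maintain the invariant that before step $i$, every edge of $T$ with both endpoints in $R_{>i}$ has already been properly colored with a color in $[\Delta]$. At step $i$ I extend the coloring to $H_i$, the set of edges having both endpoints in $R_{\ge i}$ and at least one endpoint in $R_i$. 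If each extension step takes $O(1)$ rounds, the total running time is $O(\log_\Delta n)$, as required.

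For a \rake{} batch, each $v\in R_i$ was a leaf or isolated in $G_i$, so $v$ contributes at most one edge to $H_i$. A single round will suffice: every vertex $u$ incident to one or more such $R_i$-leaves locally assigns distinct colors from $[\Delta]\setminus\mathrm{used}(u)$, where $\mathrm{used}(u)$ is the set of colors already used on $u$'s previously colored edges; since $\deg_T(u)\le\Delta$, enough colors remain. For a \compress{} batch, the crucial structural observation is that every vertex appearing in $H_i$ has $H_i$-degree at most $k=\Delta-1$. Indeed, for $v\in R_i$ this is immediate from the \compress{} criterion, and for any $u\in R_{>i}$ that is a $G_i$-neighbor of some $v\in R_i$, the \compress{} criterion applied to $v$ forces $\deg_{G_i}(u)\le k$. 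Thus $H_i$ is a forest of maximum degree at most $\Delta-1$, and at every vertex at least one color of slack remains with respect to the already committed coloring.

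The main obstacle I foresee is implementing the \compress{}-batch extension in $O(1)$ rounds, which amounts to a list edge coloring of $H_i$ with list $[\Delta]\setminus\mathrm{used}(u)$ at each edge incident to $u\in R_{>i}$ and list $[\Delta]$ otherwise, where each vertex's list has size at least $\deg_{H_i}(\cdot)+1$. My plan is to have each $R_{>i}$-endpoint $u$ greedily and in parallel commit colors to its pending $H_i$-edges, and then to patch conflicts at shared $R_i$-endpoints in an additional constant number of rounds by exploiting the per-vertex slack color: because $H_i$ is a bipartite forest and because each $R_i$-vertex has at least one unused color, any local conflict can be resolved by exchanging colors along a constant-length augmenting fragment within the batch. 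This completes the $O(1)$-round extension for \compress{} batches, yielding an overall $O(\log_\Delta n)$-round $\DetLOCAL$ algorithm and matching the lower bound for coloring problems on trees.
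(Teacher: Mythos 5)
The structural facts you establish are right: every vertex touching a \compress{} batch $H_i$ has $H_i$-degree at most $k$ and at least one color of slack, so the list-edge-coloring instance is well posed. The gap is the claim that it can be completed in $O(1)$ rounds. With $k=\Delta-1$, a single \compress{} can remove almost the entire tree: take a spider with a center $c$ of degree $\Delta$ and $\Delta$ legs each of length $\Theta(n/\Delta)$; only $c$ and $N(c)$ survive the first \compress{}, so the batch $H_i$ contains $\Delta$ paths of length $\Theta(n/\Delta)$. List-edge-coloring a path with lists of size $\deg+1=3$ is an $\Omega(\log^* n)$ task by Linial's lower bound, so there is no greedy-then-patch scheme that finishes in $O(1)$ rounds; conflicts cascade arbitrarily far, which is exactly the symmetry-breaking obstacle that per-vertex slack alone cannot overcome. (Your picture of $H_i$ as bipartite between $R_i$ and $R_{>i}$ is also off, since two adjacent \compress{}-removed vertices produce an $R_i$--$R_i$ edge, though this is secondary.)

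The paper's proof gets around this by coupling two choices that your construction rules out. It takes $k=\max\{2,\lfloor(\Delta/\beta)^{1/3}\rfloor\}$ rather than $\Delta-1$, which is exactly what makes it possible to partition the palette $\{1,\dots,\Delta\}$ into $\beta k^2$ blocks $P_1,\dots,P_{\beta k^2}$ each of size at least $k$. It then precomputes, once and in parallel over all \compress{} batches, a proper $\beta k^2$-edge coloring $\phi$ of the batch subgraphs via Linial's algorithm; this is where the unavoidable $O(\log^* n - \log^* k)$ symmetry breaking is paid, and it is paid only once. After that, each batch edge $\{v,u\}$ simply grabs any still-free color from the block $P_{\phi(\{v,u\})}$; properness holds because $\phi$ is a proper edge coloring and $\deg_{T_i}(u)\le k\le |P_{\phi(\{v,u\})}|$, so this step really is $O(1)$ per batch and the total is $O(\log_k n + \log^* n - \log^* k)=O(\log_\Delta n)$. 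With $k=\Delta-1$ the partition requirement $\Delta\ge\beta k^3$ fails for every $\Delta\ge 3$, so this rescue is unavailable to you: the small $k$ is not an optimization but the load-bearing element, and without it the per-batch extension degenerates to the very symmetry-breaking problem you were trying to avoid.
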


\begin{proof}
Let $\beta$ be the constant such that Linial's algorithm~\cite{Linial92} finds a
$\beta\Delta^2$-edge coloring in $O(\log^* n - \log^* \Delta + 1)$ time.
We begin by decomposing $T$ with \compress\ and \rake\ steps, using parameter $k = \max\{2,\, \floor{(\Delta/\beta)^{1/3}}\}$.
Define $T_i = (V_i,E_i)$ to be the forest before the $i$th round of \compress\ and \rake, and let $V_i^c$ and $V_i^r$
be those vertices removed by the $i$th \compress\ and \rake, respectively.

We edge color the trees $T_{1 + \log_k n},\ldots,T_1=T$ in this order.
Given a coloring of $T_{i+1}$, we need to color the remaining uncolored edges in $T_i$.
Let $u\in T_{i+1}$ be a vertex, and let $v_1,\ldots,v_x \in V_i^r$ be the vertices adjacent to $u$
removed by the $i$th \rake.  At this point $u$ is incident to at most $\Delta-x$ colored edges.  We assign
to $\{u,v_1\},\ldots,\{u,v_x\}$ any distinct available colors from their palettes.

We now turn to the vertices removed by the $i$th \compress. First, suppose that $\Delta$ is large enough such
that $k = \floor{(\Delta/\beta)^{1/3}}$.
Let $\phi$ be a $\beta k^2$-edge coloring of the (as yet uncolored) subgraph of $T_i$ (i.e., the edges that are incident to some vertices in $V_i^c$).
We argue that this subgraph has maximum degree at most $k$, and so we are able to apply Linial's algorithm~\cite{Linial92} to find a
$\beta k^2$-edge coloring. 
Suppose $e = \{u,v\}$ is in this subgraph, but either $\deg_{T_i}(u) > k$ or $\deg_{T_i}(v) > k$.  
If this were true, neither $u$ nor $v$ could have 
been removed by the $i$th \compress, contradicting the fact that $e$ is incident to some vertices in $V_i^c$.

Partition the palette $\{1,\ldots,\Delta\}$ into $\beta k^2$ parts $P_1,\ldots,P_{\beta k^2}$.
Each part has size $\Delta/(\beta k^2) \ge k$.
Each $v\in V_i^c$ colors each edge $\{v,u\}$ any available color in $P_{\phi(\{v,u\})}$.  Since $\deg_{T_i}(u) \le k$,
at most $k-1$ of its incident edges may already be colored, 
and so there must be at least one available color in $P_{\phi(\{v,u\})}$ for $\{v,u\}$ to use.   All calls to Linial's $\beta k^2$-edge coloring algorithm
can be executed in parallel, so the overall time is $O(\log_k n + \log^* n - \log^* k) = O(\log_\Delta n)$.

When $k=2$, the subgraph induced by $V_{1}^c \cup \cdots \cup V_{1+\log_k n}^c$ consists of a set of paths.
In $O(\log^* n)$ time, we find an \emph{initial} 3-edge coloring of these paths.  We now color $T_{1 + \log_k n},\ldots, T_1$
in this order.  Coloring the edges removed during a \rake\ is done as before.  The set $V_i^c$ removed in
one \compress\ induces some paths, each end-edge of which may be adjacent to one (previously colored) edge in $T_{i+1}$.
If the initial color of an end-edge conflicts with the coloring of $T_{i+1}$, we recolor it any available color.  When $k=2$
this procedure takes $O(\log^* n + \log_k n) = O(\log_\Delta n)$ time.
\end{proof}

An  \emph{oriented} tree is a rooted tree where each vertex that is not the root knows its parent. 
We show that a $(\Delta+1)$-edge coloring of an oriented tree can be found in  $O(\log^* n)$ time, 
but $\Delta$-edge coloring  takes $\Omega(\log_\Delta n)$ time.

\begin{theorem}
Any {oriented} tree $T$ can be $(\Delta+1)$-edge colored in $O(\log^* n)$ time.
\end{theorem}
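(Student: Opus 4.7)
The plan is to bootstrap from a fast vertex coloring of the oriented tree and then convert it into an edge coloring using a local decision rule. First, I would apply Cole--Vishkin to $T$, whose orientation is given by parent pointers, to compute a $3$-vertex coloring $c\colon V\to\{1,2,3\}$ in $O(\log^* n)$ rounds. In $O(1)$ additional rounds every non-root vertex $v$ receives from $\parent(v)$ a sibling index $\alpha(v)\in\{1,\dots,\deg(\parent(v))\}$ (chosen so that distinct siblings have distinct indices by sorting child IDs at the parent), and in one more round every vertex also learns the sibling indices assigned to its own children together with the vertex colors throughout its constant-radius neighborhood.

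Each non-root $v$ then selects the color $\phi(e_v)\in\{1,\dots,\Delta+1\}$ of its parent edge $e_v$. The baseline choice $\phi(e_v)=\alpha(v)$ uses only the colors $\{1,\dots,\Delta\}$ and immediately resolves all conflicts at $\parent(v)$, because sibling indices are distinct. The only residual conflicts are at $v$ itself, namely $\phi(e_v)=\alpha(v)=\alpha(w)=\phi(e_w)$ for some child $w$ of $v$, which happens exactly when $\alpha(v)\le\deg(v)-1$. A conflicted $v$ swaps $\phi(e_v)$ to a color in $\{\deg(v),\dots,\Delta+1\}$, and simultaneous swaps among siblings are arbitrated by combining the $3$-vertex coloring of $T$ with sibling ranks so that no two conflicted siblings pick the same color at their common parent.

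The main obstacle is designing the swap rule so that it always finds an available color and so that swaps performed by different conflicted siblings never collide at their common parent. Since $T$ is bipartite, K\"{o}nig's theorem guarantees a $\Delta$-edge coloring of $T$, so the palette $\{1,\dots,\Delta+1\}$ certainly suffices in principle. At each parent $u$, the swap reduces to a constrained bipartite matching of constant size between $u$'s conflicted children and the available palette colors, which $u$ can solve in $O(1)$ rounds from its radius-$O(1)$ view and broadcast to its children. Summing the $O(\log^* n)$-round Cole--Vishkin stage with the $O(1)$-round swap stage yields the claimed $O(\log^* n)$ running time.
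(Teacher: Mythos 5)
Your baseline coloring $\phi_0(e_v)=\alpha(v)$ is exactly the paper's starting point, but the swap rule you propose is not feasible, and this is a genuine gap rather than a technicality. You require every conflicted $v$ to recolor its \emph{parent} edge to some color in $\{\deg(v),\dots,\Delta+1\}$, a set of size $\Delta+2-\deg(v)$. When several high-degree siblings are simultaneously conflicted, these sets can be too small to absorb them all. Concretely, let $\Delta=4$ and take a root $u$ with three children $v_1,v_2,v_3$ (so $\alpha(v_j)=j$), each $v_j$ having three leaf children, so $\deg(v_j)=4$. Every $v_j$ is conflicted because $\alpha(v_j)=j\le 3=\deg(v_j)-1$, and each must swap $\phi(e_{v_j})$ into $\{4,5\}$. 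But the three parent edges $e_{v_1},e_{v_2},e_{v_3}$ all meet at $u$ and need three distinct colors from a two-element set, so no valid matching exists; König's theorem is irrelevant here because your recoloring freezes all the non-conflicted edges to their sibling-index colors, which is precisely what makes the instance infeasible. (There is a secondary problem too: the $3$-vertex coloring does not distinguish siblings, since siblings are non-adjacent and may share a color, and if $u$ is itself conflicted, the matching $u$ solves for its children depends on the color $\parent(u)$ assigns to $e_u$, creating a root-to-leaf cascade rather than an $O(1)$-round step.)

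The paper avoids this by moving conflicts \emph{down} the tree rather than \emph{up}. Starting from the same $\phi_0$, it observes that for each $i$ the class $\phi_0^{-1}(i)$ is a vertex-disjoint union of oriented paths, because each vertex has at most one parent edge colored $i$ and at most one child with sibling index $i$. It then edge-$3$-colors each such path with the palette $\{i,\Delta,\Delta+1\}$, subject to the constraint that the most ancestral edge of each path keeps color $i$; since the degree-$\Delta$ case $\phi_0^{-1}(\Delta)$ is at most one edge at the root, the classes $1,\dots,\Delta-1$ can be handled in parallel. The anchoring constraint means that at any vertex $v$ with parent edge colored $k$, every child edge with sibling index $i\ne k$ is the most ancestral edge of its path and hence keeps color $i$, so only the two consecutive path edges in class $k$ through $v$ can touch $\{\Delta,\Delta+1\}$, and those two are given different colors by the path $3$-coloring. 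In your counterexample this amounts to recoloring $\{w_{j,j},v_j\}$ (a child edge of $v_j$) to $\Delta$ rather than recoloring $e_{v_j}$, which resolves the conflict at $v_j$ without ever crowding $u$. You should run Cole--Vishkin on the linked lists $\phi_0^{-1}(i)$, not on the vertices of $T$; the $O(\log^* n)$ budget is spent there.
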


\begin{proof}
Initially pick color $\phi_0(\{u,\parent(u)\}) = i$ if $\ID(u)$ is the $i$th largest ID among its siblings.
Observe that for any $i$, $\phi_0^{-1}(i)$ is a subgraph consisting of oriented paths, and that
$\phi_0^{-1}(\Delta)$ is at most one edge, attached to the root.  For each $i\in\{1,\ldots,\Delta-1\}$,
in parallel, recolor $\phi_0^{-1}(i)$ using the color set $\{i,\Delta,\Delta+1\}$ in such a way that the
most ancestral edge in each path remains colored $i$.
This takes $O(\log^* n)$ time~\cite{ColeV86,Linial92}.

The result is a legal $(\Delta+1)$-edge coloring.
It is clear that for each $i\in\{1,\ldots,\Delta-1\}$, no two edges with color $i$ are adjacent. Now consider $i \in \{\Delta, \Delta + 1\}$. Suppose there exist two adjacent edges $e = \{u,v\}$ and $e' = \{v,w\}$ that are both colored $i$. Let $j \in \{1,\ldots,\Delta\}$ be the original color   of $e$   before recoloring, and let $P$ be the $j$-color (before recoloring) oriented path containing $e$.  Similarly, let $j' \in \{1,\ldots,\Delta\}$ be the original color   of $e'$   before recoloring, and let $P'$ be the $j'$-color (before recoloring) oriented path containing $e'$.
Then the two paths $P$ and $P'$ intersect only at $v$, and so at least one of $e$ and $e'$ is the most ancestral edge of the corresponding path. This contradicts the assumption that they are colored by $i \in \{\Delta, \Delta + 1\}$ (after recoloring). Thus, all edges colored $i \in \{\Delta, \Delta + 1\}$ are not adjacent to each other.
\end{proof}

\begin{theorem}
Any $\Delta$-edge coloring algorithm for oriented trees takes $\Omega(\log_\Delta n)$ time in $\RandLOCAL$.
\end{theorem}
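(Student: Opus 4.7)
The plan is to match the deterministic upper bound by an indistinguishability argument that works directly in $\RandLOCAL$, exploiting the rigidity of $\Delta$-edge coloring on oriented trees.

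I would start from the observation that any $\Delta$-edge coloring $\phi$ of an oriented tree of max degree $\Delta$ induces a labeling $\psi(v) = \phi(v,\parent(v))$ on non-root vertices with the property that at each internal vertex $v$ of degree $\Delta$, the multiset $\{\psi(u): u \text{ child of } v\}$ must equal $\{1,\ldots,\Delta\}\setminus\{\psi(v)\}$. Specializing to a complete $(\Delta-1)$-ary oriented tree $T_D$ of depth $D=\Theta(\log_\Delta n)$ (every non-root, non-leaf vertex has exactly $\Delta-1$ children and hence degree $\Delta$), this constraint is saturated at every internal vertex, so $\psi(v)$ is almost entirely determined by $\psi$ on the root-to-$v$ path. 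In particular, the color of the edge $(v,\parent(v))$ cannot be chosen locally without information about vertices all the way up to the root.

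Next, I would fix a ``deep'' vertex $v$ at depth $D-t-1$, and construct two oriented-tree instances $T^{(1)},T^{(2)}$ on the same underlying labeled tree such that (i) the $t$-neighborhood of $v$, together with the random ID assignment restricted to it, has the same distribution in both instances, but (ii) every valid $\Delta$-edge coloring of $T^{(1)}$ vs.~$T^{(2)}$ must assign different colors to the edge $(v,\parent(v))$. The natural way to arrange (ii) is to apply a single sibling-swap near the root that toggles a specific coordinate of the propagated ``$\psi$-permutation,'' so that the cascade of constraints forces a different value of $\psi(v)$ at the bottom. Since a $t$-round $\RandLOCAL$ algorithm's output on $(v,\parent(v))$ depends only on the $t$-neighborhood, its output distribution is identical across the two instances; yet correctness requires different outputs. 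Taking $t$ just below $D/O(1) = \Omega(\log_\Delta n)$ and amplifying over many disjoint such pairs yields constant error, contradicting the $1-1/\poly(n)$ success guarantee.

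The main obstacle will be step (ii): guaranteeing that the sibling-swap near the root \emph{propagates} all the way down to $v$ and cannot be absorbed by any local freedom in picking the children's colors, which is where the $(\Delta-1)$-ary regularity and the permutation-constraint structure must be exploited carefully. The supporting issue is verifying (i) in the presence of random identifiers: here one uses that the two instances differ only on the root's immediate neighborhood, so they can be coupled to agree on IDs outside a small high-level subtree that lies outside $N^t(v)$. If the direct indistinguishability argument proves awkward, a fall-back is to adapt the irregular-time-profile round-elimination method of Lemma~\ref{lem:roundelimination}, eliminating rounds color-by-color on $T_D$ until a $0$-round algorithm remains, and then showing that any $0$-round $\Delta$-edge coloring rule on the $(\Delta-1)$-ary oriented tree must err with probability bounded away from $0$ on some root configuration.
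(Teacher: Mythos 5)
Your plan is built on a claim that goes in the wrong direction, and it breaks the core of the argument. You assert that in a $\Delta$-regular oriented tree ``$\psi(v)$ is almost entirely determined by $\psi$ on the root-to-$v$ path,'' and then try to engineer a sibling swap near the root whose effect ``cascades'' down to a deep vertex $v$. But in a $\Delta$-regular tree the rigidity runs \emph{bottom-up}, not top-down: at an internal vertex $v$ with parent edge colored $\psi(v)$, the multiset of child-edge colors is constrained to be $\{1,\dots,\Delta\}\setminus\{\psi(v)\}$, but which child gets which color is completely free. Consequently a swap near the root has no forced effect on $\psi$ at depth $D-t-1$; the algorithm can absorb the swap one level down. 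Worse, you want the two instances $T^{(1)},T^{(2)}$ to be ``the same underlying labeled tree,'' in which case they have exactly the same set of valid $\Delta$-edge colorings, so requirement~(ii) (that every valid coloring of $T^{(1)}$ must disagree with every valid coloring of $T^{(2)}$ on $(v,\parent(v))$) is impossible as stated. So the indistinguishability argument collapses at step (ii).

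The paper's proof is both shorter and structurally different: it exploits the \emph{bottom-up} determinism. In a valid $\Delta$-edge coloring of a $\Delta$-regular oriented tree, the color of $(u,\parent(u))$ is uniquely determined by the colors of the edges at the leaf-descendants of $u$ (at each internal vertex of degree $\Delta$ the parent edge must receive the unique color not used by the $\Delta-1$ child edges, and one inducts upward from the leaves). Taking $u$ at depth about $h/3$ in a tree of height $h=\Theta(\log_\Delta n)$, the $h/3$-neighborhood of $u$ is disjoint from the $h/3$-neighborhoods of $u$'s leaf-descendants, so $u$'s output and the forced value are functions of disjoint sets of random bits. Hence $u$ effectively has to guess, succeeding with probability at most about $1/\Delta$; any mismatch forces a violation inside $u$'s subtree. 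If you want to salvage an indistinguishability-style argument, you should condition on the view of $u$ and vary the leaves' randomness, not vary the root and hope the effect propagates downward. Your fall-back via the irregular-time-profile round elimination (Lemma~\ref{lem:roundelimination}) is not what the paper does here, is considerably heavier machinery than needed, and it is not evident that the sinkless-orientation style elimination transfers to $\Delta$-edge coloring on oriented trees without substantial new work.
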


\begin{proof}
Let $T$ be an oriented $\Delta$-regular tree with height $h = \Theta(\log_\Delta n)$ and $\mathcal{A}$ be an edge coloring
algorithm running in $h/3$ time.  The color of $\{u,\parent(u)\}$ is uniquely determined by the colors
of the edges incident to leaf-descendants of $u$. Let $V'$ denote the set of leaf-descendants of $u$. In general, $N^{h/3}(u)$ and $\bigcup_{v \in V'} N^{h/3}(v)$ do not intersect. In this case, $u$ only has a $1/\Delta$ chance of guessing the correct edge color; if it guesses incorrectly,
there must be a violation somewhere in the subtree rooted at $u$.
\end{proof}

\section{Concluding Remarks}\label{sect:conclusion}

The focus of this paper has been on the complexity of distributed \emph{edge-coloring}, on general graphs and trees, with and without randomization.  Nonetheless, we took several extended detours into apparently unrelated 
topics such as the distributed \Lovasz{} local lemma
(Section~\ref{sect:tree-LLL}) 
and network decompositions (Section~\ref{sect:tree-decomp}).
A recent line of work on developing a \emph{complexity theory} for the $\LOCAL$ 
model~\cite{NaorS95,BrandtEtal16,ChangKP19,ChangP19,FischerG17,GhaffariHK17,BalliuHOS19,BrandtHKLJPRSU17,BalliuHKLOS18,BalliuBOS18,BalliuBCORS19,GhaffariKM17} 
explains why these particular detours are \emph{natural} and perhaps unavoidable in the pursuit of optimal $\LOCAL$ algorithms.

The appearance of the distributed \Lovasz{} local lemma (LLL)
is no surprise at all, given that it generalizes a problem
related to $(1+\epsilon)\Delta$-edge coloring, namely
sinkless orientation (Theorem~\ref{thm:so-ec-reduction}),
is complete for sublogarithmic time~\cite[Thm.~4.1]{ChangP19},
and is a generally useful tool for finding objects
that cannot be generated by a 
greedy algorithm~\cite{MolloyR01,PettieS15,ChungPS17,ElkinPS15,FischerG17}. 
The \emph{structure} of the LLL algorithm in 
Sections~\ref{sect:tree-LLL} and~\ref{sect:tree-decomp} also turns out to be quite natural.  
Chang, Kopelowitz, and Pettie's derandomization~\cite[Thm~3.1]{ChangKP19}
justifies why we \emph{must} apply the graph shattering method to solve the LLL in 
randomized $O(\log\log n)$ time, and that any such algorithm must contain within it
a deterministic $O(\log n)$-time algorithm.  Our $O(\log n)$-time deterministic
LLL algorithm for trees (Theorem~\ref{thm:detLLL-tree})
follows 
Fischer and Ghaffari~\cite{FischerG17},
who showed how to solve LLL instances using
network decompositions.  
Ghaffari, Kuhn, and Maus~\cite{GhaffariKM17}
show that this choice also turns out to be \emph{natural}, in the sense
that you cannot solve the LLL deterministically \emph{without} computing good network decompositions deterministically.\footnote{In particular, the distributed LLL is $\PSLOCAL$-hard
as it generalizes the $\PSLOCAL$-complete problem of \emph{Weak Local Splitting}~\cite[Thm.~1.4]{GhaffariKM17}.  As a consequence, 
any deterministic  $\poly(\log n)$ LLL algorithm can also be used to compute 
$(\poly(\log n),\poly(\log n))$-network decompositions deterministically.}

\bibliographystyle{abbrv}
\bibliography{references}

\newpage

\appendix

\section{Proof of Lemma~\ref{thm:concentration}}\label{sect:concentration}

In this section we prove the concentration bounds of Lemma~\ref{thm:concentration}.
For notational simplicity, we ignore all subscripts $i$, i.e.,
$p,d,t$ are the palette size, degree, and $c$-degree
before the $i$th round of coloring, all of which satisfy invariant $\mathcal{H}_i$.
Recall that we introduce imaginary edges, if necessary, to
ensure that the entire graph has uniform $c$-degree $t$ and uniform palette size $p$.
$S(v)$ is the set of real edges incident to $v$, $|S(v)|\le d$,
and $N_c(v)$ the set of real and imaginary edges incident to $v$ with $c$ in their palettes.
The arguments of this section do not differentiate between real and imaginary edges.
From Lemma~\ref{lem:estimate} we use the fact that $t = \Theta(p)$, i.e., $t$ and $p$
are interchangeable in those parts of the proof that are not sensitive to the leading constant.

We make extensive use of Theorem~\ref{lem:concentration} and Lemma~\ref{lem:aux}
to prove Lemma~\ref{thm:concentration}.  Theorem~\ref{lem:concentration} is
from Dubhashi and Panconesi's book~\cite{DubhashiPanconesi09} on the concentration of measure,
where it is called the \emph{method of bounded variances}.  Ignoring the leading constant in the exponent,
Theorem~\ref{lem:concentration} is strictly more powerful than Chernoff-Hoeffding
and Azuma-type inequalities, and is best suited in applications that have the following two features:
\begin{itemize}
\item We are interested in deviations of $f(\mathbf{X}_n)$ from its expectation (up to $\pm s$)
that are significantly smaller than the number of underlying random variables ($n$) times the
Lipschitz bound satisfied by the martingale ($M$).  This feature renders Azuma's inequality
too weak to be of any use.\footnote{A vector $(X_1, \ldots, X_i)$ of random variables is written $\mathbf{X}_i$.}

\item The Lipschitz bound is pessimistic: although $D_i = \Expect[f|\mathbf{X}_i] - \Expect[f|\mathbf{X}_{i-1}]$
can be as large as $M$, its variance ($\sigma_i^2$) conditioned on \underline{any} $\mathbf{X}_{i-1}$ is substantially smaller.
\end{itemize}

For example, in the first round of coloring, the $c$-degree of a vertex $v$
depends on $\Theta(\Delta^3)$ random variables (colors chosen by edges in the 3-neighborhood)
but we are interested in deviations from the expected $c$-degree that are $s=O(\Delta)$.
Any single edge could have a significant effect on $v$'s $c$-degree ($M=\Theta(1)$),
but the \emph{variances} of these effects are substantially smaller.
In particular, the sum of variances $\sum_i \sigma_i^2$ will be $O(\Delta)$.

\begin{theorem}[{\cite[Equation (8.5)]{DubhashiPanconesi09}}] \label{lem:concentration}
Let $X_1, \ldots, X_n$ be an arbitrary set of random variables.
Let $f(X_1, \ldots, X_n)$ be such that $\Expect[f]$ is finite.
We write
$D_i \bydef \Expect[f|\mathbf{X}_i] - \Expect[f|\mathbf{X}_{i-1}]$.
Suppose that there exist $M$ and values $\{\sigma_i^2\}_{1\le i\le n}$
meeting the following conditions.
\begin{itemize}
\item For any assignment to the random variables $\mathbf{X}_{i-1}$, $\Var[D_i | \mathbf{X}_{i-1}] \leq \sigma_i^2$.
\item For any assignment to the random variables $\mathbf{X}_i$, $|D_i| \leq M$.
\end{itemize}
Then $\Prob[f > \Expect[f] + s] \leq \exp\left(-\frac{s^2}{2\left(\sum_{i=1}^n \sigma_i^2 + Ms/3\right)}\right)$.
\end{theorem}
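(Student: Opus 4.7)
The result is a martingale Bernstein-type inequality, so my plan is the standard Chernoff-plus-moment-generating-function argument specialized to the Doob martingale of $f$. First I would set $Z_k = \Expect[f \mid \mathbf{X}_k]$, giving a martingale with $Z_0 = \Expect[f]$, $Z_n = f$, and increments exactly the $D_k$ of the statement, satisfying $|D_k| \le M$ pointwise and $\Expect[D_k^2 \mid \mathbf{X}_{k-1}] \le \sigma_k^2$. The goal becomes bounding $\Pr[Z_n - Z_0 > s]$, and applying Markov's inequality to $e^{\lambda(Z_n - Z_0)}$ for $\lambda > 0$ reduces this to controlling the joint MGF $\Expect\bigl[\exp(\lambda \sum_k D_k)\bigr]$.

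The workhorse is the one-step lemma: any mean-zero random variable $Y$ with $|Y| \le M$ and $\Expect[Y^2] \le \sigma^2$ satisfies $\Expect[e^{\lambda Y}] \le \exp(\sigma^2 \phi(\lambda M) / M^2)$, where $\phi(t) = e^t - 1 - t$. This follows from the pointwise inequality $e^{\lambda y} \le 1 + \lambda y + y^2 \phi(\lambda M)/M^2$ for $|y| \le M$ (since $y \mapsto (e^{\lambda y} - 1 - \lambda y)/y^2$ is monotone increasing), by taking expectations and invoking $1+x \le e^x$. I would apply this lemma to $Y = D_k$ conditional on $\mathbf{X}_{k-1}$ and then peel off one factor at a time via the tower property, obtaining
\[
\Expect\Bigl[\exp\bigl(\lambda \textstyle\sum_{k=1}^n D_k\bigr)\Bigr] \;\le\; \exp\!\left(\frac{V \phi(\lambda M)}{M^2}\right), \qquad V := \sum_{k=1}^n \sigma_k^2.
\]
Combined with Markov this gives $\Pr[Z_n - Z_0 > s] \le \exp\!\bigl(-\lambda s + V\phi(\lambda M)/M^2\bigr)$ for every $\lambda > 0$.

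The last step, and the only place the specific $1/3$ constant appears, is to optimize $\lambda$ using the convex-analytic bound $\phi(t) \le t^2/(2(1 - t/3))$ valid for $0 \le t < 3$, then choose $\lambda = s/(V + Ms/3)$; this yields $\Pr[Z_n - Z_0 > s] \le \exp\!\bigl(-s^2 / (2(V + Ms/3))\bigr)$ exactly as stated. I do not anticipate any real obstacle: the hypotheses are calibrated to feed directly into the Bernstein optimization, and all inequalities above are standard. The main conceptual point (and the reason this bound beats Azuma in the paper's applications) is that the conditional-variance parameters $\sigma_k^2$ may be far smaller than $M^2$, so it is $V$, not $n M^2$, that governs the exponent.
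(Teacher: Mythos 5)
The paper does not prove this statement; it imports it verbatim as~\cite[Eq.~(8.5)]{DubhashiPanconesi09} (the ``method of bounded variances'') and immediately applies it in the concentration arguments of Appendix~\ref{sect:concentration}. So there is no in-paper proof to compare against, and the only question is whether your argument is sound.

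It is. You correctly identify $Z_k = \Expect[f\mid\mathbf{X}_k]$ as a Doob martingale so that the $D_k$ are martingale differences (hence $\Expect[D_k\mid\mathbf{X}_{k-1}]=0$, making $\Var[D_k\mid\mathbf{X}_{k-1}]=\Expect[D_k^2\mid\mathbf{X}_{k-1}]$ and reconciling your hypothesis with the theorem's). The one-step MGF lemma via monotonicity of $y\mapsto(e^{\lambda y}-1-\lambda y)/y^2$, the peeling by the tower property, the bound $\phi(t)\le t^2/\bigl(2(1-t/3)\bigr)$ for $0\le t<3$ (equivalently $2\cdot 3^{k-2}\le k!$ termwise), and the choice $\lambda = s/(V+Ms/3)$, for which $\lambda M<3$ automatically, all check out and produce exactly the stated exponent. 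This is the standard Bernstein/Freedman route, which is also the one underlying the cited textbook derivation, so you have reproduced essentially the canonical proof of the imported result.
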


Lemma~\ref{lem:aux} follows from straightforward calculation.

\begin{lemma}\label{lem:aux}
Let $X$ be a random variable such that (i) $\Expect[X] = 0$, (ii)  $\Prob[X = a] = \alpha$ and $\Prob[X = b] = 1 - \alpha$, and
(iii) $|a-b| \leq k$.  Then we have the following.
\begin{itemize}
\item $\Var[X] \leq \alpha(1-\alpha) k^2 \leq \alpha k^2$.
\item $|b| \leq \alpha k$.
\item $|a| \leq (1 - \alpha)k \leq k$.
\end{itemize}
\end{lemma}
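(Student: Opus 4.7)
The plan is to use the mean-zero condition to express $b$ in terms of $a$ (and vice versa), then plug directly into the definition of variance. From $\Expect[X] = \alpha a + (1-\alpha) b = 0$, I immediately get $b = -\frac{\alpha}{1-\alpha}a$, which already forces $a$ and $b$ to have opposite signs (or both be zero), so $|a - b| = |a| + |b|$.

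Next I would derive the two pointwise bounds. Substituting $b = -\frac{\alpha}{1-\alpha}a$ into $|a-b| \le k$ gives $|a|\cdot\bigl(1 + \tfrac{\alpha}{1-\alpha}\bigr) = \tfrac{|a|}{1-\alpha} \le k$, so $|a| \le (1-\alpha)k \le k$. By the symmetric manipulation, using $a = -\tfrac{1-\alpha}{\alpha} b$, I obtain $|b| \le \alpha k$. These are exactly the second and third claims.

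Finally, for the variance, since $\Expect[X]=0$ we have $\Var[X] = \Expect[X^2] = \alpha a^2 + (1-\alpha) b^2$. Using the bounds just established, $\alpha a^2 \le \alpha (1-\alpha)^2 k^2$ and $(1-\alpha)b^2 \le (1-\alpha)\alpha^2 k^2$, so their sum is $\alpha(1-\alpha)k^2\bigl((1-\alpha)+\alpha\bigr) = \alpha(1-\alpha)k^2 \le \alpha k^2$.

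There is no real obstacle here; the only thing to be mindful of is the degenerate case $\alpha \in \{0,1\}$, which I would handle by noting that the zero-probability value can be chosen arbitrarily without affecting any of the stated inequalities (the mean-zero constraint forces the probability-one value to be $0$, so every bound is trivially satisfied). Everything else is one line of algebra.
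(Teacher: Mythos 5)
Your proof is correct, and it is exactly the ``straightforward calculation'' that the paper alludes to without writing out: the paper gives no explicit proof of this lemma, and your argument (solve $\Expect[X]=0$ for $b$ in terms of $a$, deduce $|a|\le(1-\alpha)k$ and $|b|\le\alpha k$ from $|a-b|\le k$, then bound $\Var[X]=\alpha a^2+(1-\alpha)b^2$, handling $\alpha\in\{0,1\}$ separately) fills it in correctly. One tiny remark: the variance bound also follows in one step from the identity $\Var[X]=\alpha(1-\alpha)(a-b)^2$ for a two-point distribution, but your route through the pointwise bounds is equally valid.
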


Throughout this section, we use the following notation. For each edge $e$ and each color $c$, define $z_{e,c}$ as the indicator random variable that $e$ {\em successfully} colors itself $c$, thus $z_{e,c} = 0$ if $c \notin \Psi(e)$.

\subsection{Concentration of Vertex Degree}\label{sect:concentration-deg}

Let $v^\bullet$ be a vertex.
We claim that
$\Expect[|S^\diamond(v^\bullet)|] \leq d^\diamond$.
An edge $e$ successfully colors itself with probability $(1 - 1/p)^{2(t - 1)}$, since there are $2(t-1)$ edges competing with $e$ for $\Color(e)$, and each of these $2(t-1)$ edges selects $\Color(e)$ with probability $1/p$.
Thus, by linearity of expectation,
$$\Expect[|S^\diamond(v^\bullet)|] =  (1 - (1 - 1/p)^{2(t - 1)})|S(v^\bullet)| \leq (1 - (1 - 1/p)^{2(t - 1)})d =  d^\diamond.$$
For brevity, we write $S \bydef S(v^\bullet)$, $S^\diamond \bydef S^\diamond(v^\bullet)$, and $z \bydef |S| - |S^\diamond|$.
The goal of this section is to show that $\Prob[z < \Expect[z] - s] = \exp\left(-\Omega(s^2 / |S|)\right)$, which implies the desired concentration bound $\Prob\left[|S^\diamond(v^\bullet)| > (1+\delta) d^\diamond \right]  =  \exp\left(-\Omega(\delta^2 d)\right)$, by setting $s = \delta d^\diamond$.

\paragraph{Notations.}
We write $z_e  \bydef \sum_{c \in \Psi(e)} z_{e,c}$  and $z_c  \bydef \sum_{e \in S} z_{e,c}$.
In other words,  $z_e$ is the indicator random variable that $e$ successfully colors itself;
$z_c$ is the indicator random variable that some edge in $S$ successfully colors itself by $c$.
We can express $z$ as $z = \sum_{e \in S} z_e$ or $z = \sum_{c} z_c$,
where the summation is over all colors $c \in \bigcup_{e \in S}\Psi(e)$.

Let $S'$ denote the set of edges such that $e' \in S'$
if there exists $e =\{v^\bullet, u\}\in S$ such that (i) $\Psi(e) \cap \Psi(e') \neq \emptyset$, and (ii) $e'$ is incident to $e$.
For each edge $e' \in S'$ and for each color $c \in \Psi(e')$, we define
$R(e',c)$ as the subset of $S$ such that $e \in R(e',c)$ if (i) $e$ is incident to $e'$, and (ii) $c \in \Psi(e)$.
We write $w(e',c) = |R(e',c)|$ and $w(e')=\sum_{c \in \Psi(e')} w(e',c)$.
Notice that the value $w(e',c)$ may exceed 2 when $e' \notin S$ is an imaginary edge incident to $v^\bullet$.
Intuitively, $w(e')$ measures the influence of $\Color(e')$ on $z$.
Notice that $\sum_{e' \in S'}{w(e')} \leq 2|S|pt$. 

We consider the sequence of random variables $(X_1, \ldots, X_{|S|+|S'|})$, where the initial $|S'|$  variables are the colors selected by the edges in $S'$, in arbitrary order, and the remaining $|S|$  variables are the colors selected by the edges in $S$, in arbitrary order.
We let $z = f(X_1, \ldots, X_{|S|+|S'|})$ in Theorem~\ref{lem:concentration}.
To prove the desired concentration bound,
it suffices to show that we can set $M = O(1)$ and $\sigma_i^2$ to achieve $\sum_{i=1}^{|S|+|S'|} \sigma_i^2  = O(|S|)$.
In what follows, we analyze the effect of exposing the value of the random variable $X_i$, given that all  variables in $\mathbf{X}_{i-1}$ have been fixed.

\paragraph{Exposing an Edge in $S'$.}
Consider the case where $X_i = \Color(e^\star)$ is the color selected by the edge $e^\star \in S'$.
Recall $D_i = \Expect[z|\mathbf{X}_i] - \Expect[z|\mathbf{X}_{i-1}]$.
Our goal is to show that $\Var[D_i | \mathbf{X}_{i-1}] = O(w(e)/(pt))$ and $|D_i| = O(1)$.
Hence we set $\sigma_i^2 = O(w(e)/(pt))$, which implies $\sum_{1\leq i \leq |S'|} \sigma_i^2 = O(|S|)$, as desired.

By linearity of expectation,  $D_i = \sum_{c} (\Expect[z_c | \mathbf{X}_i] - \Expect[z_c | \mathbf{X}_{i-1}])$,
where the summation ranges over all colors $c$ that appear in $\bigcup_{e \in S}\Psi(e)$.
We write $D_{i,c} = \Expect[z_c | \mathbf{X}_i] - \Expect[z_c | \mathbf{X}_{i-1}]$, and  make the following observations:
\begin{itemize}
\item $D_{i,c} \neq 0$ only if $c \in \Psi(e^\star)$. For each $c \in \Psi(e^\star)$, $D_{i,c}$ depends only on whether $e^\star$ selects the color $c$, which occurs with probability $1/p$. In particular, $D_{i,c} < 0$ only if $e^\star$ selects $c$, and $D_{i,c} > 0$ only if $e^\star$ does not select $c$. Thus, $\Cov[D_{i,c},D_{i,c'} | \mathbf{X}_{i-1}] \leq 0$ for all color pairs $\{c, c'\}$.
\item For each $e \in S$, both $\Expect[z_{e,c}|\mathbf{X}_i]$ and $\Expect[z_{e,c}|\mathbf{X}_{i-1}]$ are within $[0, 1/p]$, since $z_{e,c}=1$ only if $c \in \Psi(e)$ and $e$ selects $c$, which occurs with probability $1/p$. Thus, $\max_{X_i} D_{i,c} - \min_{X_i} D_{i,c} \leq w(e^\star, c)/p$.
\end{itemize}
By Lemma~\ref{lem:aux} (with $k \leq w(e^\star, c)/p$ and $\alpha = 1/p$), we have $\Var[D_{i,c} | \mathbf{X}_{i-1}] \leq (1/p)(w(e^\star,c)/p)^2$. We bound the variance $\Var[D_i | \mathbf{X}_{i-1}]$ as follows.
\begin{align*}
\Var[D_i | \mathbf{X}_{i-1}] &= \sum_c \Var[D_{i,c} | \mathbf{X}_{i-1}] + \sum_{c, c'} \Cov[D_{i,c},D_{i,c'} | \mathbf{X}_{i-1}] \\
&= \sum_c O((w(e^\star,c)/p)^2 / p) &\Cov[D_{i,c},D_{i,c'} | \mathbf{X}_{i-1}]\leq 0 \\
&= \sum_c O(w(e^\star,c)/p^2) &w(e^\star,c)<t=\Theta(p) \\
&= O(w(e^\star)/p^2) \\
&= O(w(e^\star)/(pt)).
\end{align*}
We  bound  $|D_i|$ as follows.
Consider $c \in \Psi(e^\star)$.
Recall that we already have the bound $|D_{i,c}| \leq w(e^\star,c)/p \leq (t-1)/p$.
If $c$ is not selected by $e^\star$, which occurs with probability $1 - 1/p$,  we have a tighter bound $|D_{i,c}|\leq w(e^\star,c)/p^2 \leq (t-1)/p^2$ by Lemma~\ref{lem:aux} with $k \leq w(e^\star, c)/p$ and $\alpha = 1/p$.
Therefore,
$$|D_i| \leq \sum_c|D_{i,c}| \leq 1 \cdot \frac{t-1}{p} + (p-1)\cdot \frac{t-1}{p^2}  = O(1).$$

\paragraph{Exposing an Edge in $S$.}
Consider the case where $X_i = \Color(e^\star)$ is the color selected by the edge $e^\star \in S$.
Suppose that $X_i = c^\star$.
Recall $D_i = \sum_{c} D_{i,c}$.
It is straightforward to see that (i) $|D_{i,c}| \leq 1$ if $c = c^\star$, (ii) $|D_{i,c}| \leq 1/p$ if $c \in \Psi(e^\star) - \{c^\star\}$, and (iii) $|D_{i,c}| = 0$ otherwise.
Thus, $|D_i| = O(1)$, and  $\Var[D_i | \mathbf{X}_{i-1}] = O(1)$.
We set $\sigma_i^2 = O(1)$, and so $\sum_{|S'| < i \leq |S| + |S'|} \sigma_i^2 = O(|S|)$.

\subsection{Concentration of Palette Size}
Let $e^\bullet=\{u,v\}$ be an edge, and let $c^\bullet =\Color(e^\bullet)$ be the color selected by $e^\bullet$.
We do not consider $c^\bullet$ as a random variable in the analysis (i.e., we expose the color selected by $e^\bullet$ first).
Let $\mathcal{E}$ be the event that $e^\bullet$ does not successfully color itself.
Since $e^\bullet$ remains uncolored with at least a constant probability,
we are allowed to ignore the condition ``$e^\bullet$ remains uncolored'' in Lemma~\ref{thm:concentration} in the subsequent calculation.
To prove the desired concentration bound regarding palette size $\Prob\left[|\Psi^\diamond(e)| < (1-\delta) p^\diamond \ | \ e \text{ remains uncolored } \right]   =  \exp\left(-\Omega(\delta^2 p)\right)$, it suffices to show that
(i) $|\Expect[|\Psi^\diamond(e^\bullet)|] - p^\diamond| = O(1)$, and (ii) $\Prob[|\Psi^\diamond(e^\bullet)| < (1-\delta)\Expect[|\Psi^\diamond(e^\bullet)|]] = \exp(-\Omega(\delta^2 \Expect[|\Psi^\diamond(e^\bullet)|]))$.

\paragraph{Notations.}
We write $S_u$ (resp., $S_v$) to denote the set of edges $e$ incident to $e^\bullet$ on $u$ (resp., $v$) such that $\Psi(e) \cap \Psi(e^\bullet) - \{c^\bullet\} \neq \emptyset$.
We write $S'$ to denote the set of edges such that $e' \in S'$ if there exists $e \in S_u \cup S_v$ meeting the following conditions: (i) $e'$ is incident to $e$, (ii) $e' \notin S_u \cup S_v \cup \{e^\bullet\}$, and (iii) $\Psi(e) \cap \Psi(e') \cap \Psi(e^\bullet) - \{c^\bullet\} \neq \emptyset$.
Notice that $\Psi^\diamond(e^\bullet)$ is determined by the colors selected by the edges in $S_u \cup S_v \cup S'$.
We have $|S_u| \leq (p-1)(t-1) < pt$, $|S_v| \leq (p-1)(t-1) < pt$, and $|S'| \leq 2(p-1)(t-1)^2 < 2pt^2$.

\paragraph{Expected Value.}
In what follows, consider a color $c \in \Psi(e^\bullet) -\{c^\bullet\}$.
\begin{itemize}
\item Let $e \in S_u \cup S_v$ such that $c \in \Psi(e)$.  We have $\Expect[z_{e,c}] = \frac{1}{p}(1-\frac{1}{p})^{2t-3}$. Notice that $e^\bullet$ selects $c^\bullet \neq c$, so there are $2t-3$ (rather than $2t-2$) edges competing with $e$ for the color $c$.
\item Let $e'=\{u,x\} \in S_u$ and $e''=\{v,y\} \in S_v$ such that $c \in \Psi(e') \cap \Psi(e'')$.
 We define $z_{e',e'',c} \bydef z_{e',c} \cdot z_{e'', c}$.
 If $x=y$, then $z_{e', e'',c} = 0$.
 Otherwise, $x \neq y$ and $\Expect[z_{e', e'',c}] = \frac{1}{p^2}(1-\frac{1}{p})^{4t-6-b(e',e'')}$, where
 $b(e',e'') \leq 3$ is the number of edges $e$ such that (i) $e \neq e^\bullet$, and (ii) $e$ is incident to both $e'$ and $e''$.
\end{itemize}
Let $z_c$  be the indicator random variable that some edge incident to $e^\bullet$ successfully colors itself by $c$, that is,
$$z_c \bydef \sum_{e \ : \ e \in S_u \cup S_v, \ c \in \Psi(e)} z_{e, c} - \sum_{e', e'' \ : \ e' \in S_u, \ e'' \in S_v, \ c \in \Psi(e') \cap \Psi(e'')} z_{e',e'',c}.$$
The number of edges $e \in S_u \cup S_v$ such that $c \in \Psi(e)$ is exactly $2t-2$.
The number of pairs $(e'=\{u,x\} \in S_u, e''=\{v,y\} \in S_v)$ such that $c \in \Psi(e') \cap \Psi(e'')$ and $x \neq y$ is at least $(t-1)^2 - (t-1)$ and at most $(t-1)^2$. By linearity of expectation (recall $t = \Theta(p)$),
$$\Expect[z_{c}] = \frac{2t}{p}(1 - 1/p)^{2t} - \frac{t^2}{p^2}(1 - 1/p)^{4t} \pm O(1/p).$$
Define $z \bydef \sum_{c \in \Psi(e^\bullet) -\{c^\bullet\}} z_c$. Then, we have:
\begin{align*}
\Expect[|\Psi^\diamond(e^\bullet)|]
&= |\Psi(e^\bullet)| - \Expect[z]  &|\Psi^\diamond(e^\bullet)| = |\Psi(e^\bullet)| - z\\
&= p \cdot \left(1 - \frac{2t}{p}(1 - 1/p)^{2t} + \frac{t^2}{p^2}(1 - 1/p)^{4t} \pm O(1/p) \right) \\
&= p \cdot \left(1 - \frac{2t}{p}(1 - 1/p)^{2t} + \frac{t^2}{p^2}(1 - 1/p)^{4t}\right)  \pm O(1)  \\
&= p^\diamond \pm O(1). &\text{Definition of $p^\diamond$}
\end{align*}
Hence $|\Expect[|\Psi^\diamond(e^\bullet)|] - p^\diamond| = O(1)$.

\paragraph{Concentration Bound.}
Consider the sequence of random variables $(X_1, \ldots, X_{|S_u|+|S_v|+|S'|})$, where the initial $|S'|$  variables are the colors selected by the edges in $S'$, in arbitrary order, and the remaining $|S_u|+|S_v|$  variables are the colors selected by the edges in $S_u \cup S_v$, in arbitrary order.
Let $z = f(X_1, \ldots, X_{|S_u|+|S_v|+|S'|})$ in Theorem~\ref{lem:concentration}.
To prove the desired concentration bound
$\Prob[|\Psi^\diamond(e^\bullet)| < (1-\delta)\Expect[|\Psi^\diamond(e^\bullet)|]] = \exp(-\Omega(\delta^2 \Expect[|\Psi^\diamond(e^\bullet)|]))$,
it suffices to show that $\Prob[z > \Expect[z] + s] = \exp\left(-\Omega(s^2 / p)\right)$, by setting $s = \delta \Expect[|\Psi^\diamond(e^\bullet)|]$, and recall that $\Expect[|\Psi^\diamond(e^\bullet)|] =  p^\diamond \pm O(1) = \Theta(p)$.
In view of Theorem~\ref{lem:concentration}, we only need to show that we can set $M = O(1)$ and $\sigma_i^2$ such that $\sum_{i=1}^{|S_u|+|S_v|+|S'|} \sigma_i^2 = O(p)$.

\paragraph{Exposing an Edge in $S'$.}
Consider the case where $X_i = \Color(e^\star)$ is the color selected by the edge $e^\star \in S'$.
Our goal is to show that $|D_i| = O(1/t)$. This implies $\Var[D_i|\mathbf{X}_{i-1}] = O(1/t^2)$, and so we may set $\sigma_i^2 = O(1/t^2)$.
Since $|S'| = O(pt^2)$, we have $\sum_{i=1}^{|S'|} \sigma_i^2 = O(p)$.

Let $R$ denote the set of edges in $S_u \cup S_v$ that are incident to $e^\star$. Notice that $1 \leq |R| \leq 2$.
We define:
$$z_c^{(i)} \bydef \sum_{e'\ : \ e' \in  R, \ c \in \Psi(e')} z_{e', c} - \sum_{e', e''  \ : \ e' \in S_u, \ e'' \in S_v, \ c \in \Psi(e') \cap \Psi(e''), \  \{e, e''\} \cap R \neq \emptyset} z_{e',e'',c}.$$
Intuitively, $z_c^{(i)}$ is the result of subtracting all terms from the definition of $z_c$ not involving edges in $R$.
We now argue that
$\Expect[z_c|\mathbf{X}_i] - \Expect[z_c|\mathbf{X}_{i-1}] = \Expect[z_c^{(i)}|\mathbf{X}_i] - \Expect[z_c^{(i)}|\mathbf{X}_{i-1}]$.
This is due to the two observations:
(i) If $e \notin R$, then $\Expect[z_{e,c}|\mathbf{X}_i] = \Expect[z_{e,c}|\mathbf{X}_{i-1}]$.
(ii) If $\{e', e''\} \cap R = \emptyset$, then $\Expect[z_{e', e'', c}|\mathbf{X}_i] = \Expect[z_{e', e'', c}|\mathbf{X}_{i-1}]$.

Consider a color $c \in \Psi(e^\star) \cap \Psi(e^\bullet) - \{c^\bullet\}$.
The probability that some edge in $R$ selects $c$ is at most $|R|/p \leq 2/p$. Thus, the conditional expectations $\Expect[z_c^{(i)}|\mathbf{X}_i]$ and $\Expect[z_c^{(i)}|\mathbf{X}_{i-1}]$ must be within $[0, 2/p]$, and so $|\Expect[z_c^{(i)}|\mathbf{X}_i] - \Expect[z_c^{(i)}|\mathbf{X}_{i-1}]| \leq 2/p$.
For the case of $c \neq X_i$, which occurs with probability $1 - 1/p$, we have a tighter bound
$|\Expect[z_c^{(i)}|\mathbf{X}_i] - \Expect[z_c^{(i)}|\mathbf{X}_{i-1}]| \leq 2/p^2$
by Lemma~\ref{lem:aux} with $k \leq 2/p$ and $\alpha = 1/p$. We bound $|D_i|$ as follows.
\begin{align*}
|D_i| &\leq \sum_{c \in \Psi(e^\bullet) - \{c^\bullet\}} |\Expect[z_c|\mathbf{X}_i] - \Expect[z_c|\mathbf{X}_{i-1}]| \\
&= \sum_{c \in \Psi(e^\star) \cap \Psi(e^\bullet) - \{c^\bullet\}} |\Expect[z_c^{(i)}|\mathbf{X}_i] - \Expect[z_c^{(i)}|\mathbf{X}_{i-1}]| \\
&\leq (2/p) + (2/p^2)(|\Psi(e^\star) \cap \Psi(e^\bullet) - \{c^\bullet\}|-1)\\
&= O(1/p) = O(1/t).
\end{align*}

\paragraph{Exposing an Edge in $S_u \cup S_v$.}
Consider the case where $X_i = \Color(e^\star)$ is the color selected by the edge $e^\star \in S_u \cup S_v$.
We define $w(e^\star) \bydef |\Psi(e^\star) \cap \Psi(e^\bullet) - \{c^\bullet\}|$.
The goal is to show that (i) $|D_i| = O(1)$ and (ii) $\Var[D_i|\mathbf{X}_{i-1}] = O(w(e^\star) /p)$.
By setting $\sigma_i^2 = O(w(e^\star) /p)$, we achieve
$$\sum_{i=|S'|+1}^{|S'|+|S_u|+|S_v|} \sigma_i^2 = \sum_{e \in S_u \cup S_v} O(w(e) /p) = O(pt / p) = O(t) = O(p).$$

By the linearity of expectation,  $D_i = \sum_{c \in \Psi(e^\star) \cap \Psi(e^\bullet) - \{c^\bullet\}} D_{i,c}$, where $D_{i,c} = \Expect[z_c | \mathbf{X}_i] - \Expect[z_c | \mathbf{X}_{i-1}]$.
Since both $\Expect[z_c | \mathbf{X}_i]$ and $\Expect[z_c | \mathbf{X}_{i-1}]$ are within $[0,1]$, we have  $|D_{i,c}| \leq 1$. We have a tighter bound $|D_{i,c}| \leq 1/p$ in the event that $\Color(e^\star) \neq c$ (by Lemma~\ref{lem:aux} with $k \leq 1$ and $\alpha = 1/p$). Thus, $|D_i| \leq 1 + (w(e^\star)-1)/p = O(1)$.

In order to prove that $\Var[D_i|\mathbf{X}_{i-1}]  = O(w(e^\star) /p)$, we need the following two observations.
\begin{itemize}
\item Consider a color $c \in  \Psi(e^\star) \cap \Psi(e^\bullet) - \{c^\bullet\}$.
Recall that $|D_{i,c}| \leq 1/p$ for the case $c$ is not selected by $e^\star$, which occurs with probability $1-1/p$.
Thus, $\Expect[D_{i,c} \cdot D_{i,c} | \mathbf{X}_{i-1}] \leq (1/p) \cdot 1 + (1-1/p) \cdot 1/p^2 = O(1/p)$.
\item Consider two distinct colors $c$ and $c'$ in $\Psi(e^\star) \cap \Psi(e^\bullet) - \{c^\bullet\}$.
If $e^\star$ selects $c$ or $c'$ (which occurs with probability $2/p$), $D_{i,c} \cdot D_{i,c'}\leq 1 \cdot (1/p)$.
Otherwise $D_{i,c} \cdot D_{i,c'}\leq (1/p) \cdot (1/p)$.
Therefore, $\Expect[D_{i,c} \cdot D_{i,c'} | \mathbf{X}_{i-1}] \leq (2/p) \cdot 1/p + (1-2/p) \cdot 1/p^2 = O(1/p^2)$.
\end{itemize}
We now bound $\Var[D_i|\mathbf{X}_{i-1}]$ as follows.
\begin{align*}
\Var[D_i|\mathbf{X}_{i-1}] &\leq \sum_{c \in \Psi(e^\star) \cap \Psi(e^\bullet) - \{c^\bullet\}} \; \sum_{c' \in \Psi(e^\star) \cap \Psi(e^\bullet) - \{c^\bullet\}} \Expect[D_{i,c} \cdot D_{i,c'} | \mathbf{X}_{i-1}]\\
&\leq w(e^\star)\cdot O(1/p) + w(e^\star)(w(e^\star)-1)\cdot O(1/p^2)\\
&=O(w(e^\star) /p).
\end{align*}

\subsection{Concentration of Color Degree}

For the remainder of this section, fix a vertex
$v^\bullet$ and a color $c^\bullet$ in the palette $\Psi(e)$ for some $e$ incident to $v^\bullet$.
For convenience, we write $R \bydef N_{c^\bullet}(v^\bullet)$.
Define $R^\diamond$ as the subset of $R$ such that $e=\{v^\bullet,u\} \in R^\diamond$ if (i) $e$ is not successfully colored by a color in $\Psi(e) - \{c^\bullet\}$, and
(ii) no edge incident to $e$ on $u$ successfully colors itself  $c^\bullet$. We write $z \bydef |R  \setminus R^\diamond|$.
Let $\mathcal{E}'$ be the event that $N_{c^\bullet}^\diamond(v^\bullet) \neq \emptyset$.
Observe that if $\mathcal{E}'$ occurs, then no edge incident to $v^\bullet$ successfully colors itself  $c^\bullet$. Thus, conditioning on $\mathcal{E}'$ happening,  $R \setminus R^\diamond$ equals $N_{c^\bullet}^\diamond(v^\bullet)$.

Our goal is to show that (i) $\Prob[ z < \Expect[z] - s ] = \exp(-\Omega(s^2/t))$, and (ii) $\Expect[|R^\diamond|] = |R| - \Expect[z] = t^\diamond \pm O(1)$.
Since $\mathcal{E}'$ occurs with constant probability, the above (i) and (ii) together imply the desired concentration bound
$\Prob[|N_{c^\bullet}^\diamond(v^\bullet)| > (1+\delta)t^\diamond \ | \ \mathcal{E}'] = \exp(-\Omega(\delta^2 t))$, by setting $s = \delta t^\diamond \pm O(1)$.  Recall that $t^\diamond = \Theta(t)$.

\paragraph{Expected Value.}
With respect to an edge $e=\{v^\bullet,u\} \in R$, we define the following notations based on parts (i) and (ii) of
the definition of $R^\diamond$.
\begin{itemize}
\item Define $z_{e}^a$ as the indicator random variable that some edge incident to $e$ on $u$ successfully colors itself $c^\bullet$.
We have $\Expect[z_{e}^a] = (t-1)\cdot \frac{1}{p}(1-\frac{1}{p})^{2t-2} = \frac{t}{p}(1-\frac{1}{p})^{2t} \pm O(1/p)$.
\item Define $z_{e}^b$ as the indicator random variable that $e$ is successfully colored by a color in $\Psi(e) - \{c^\bullet\}$.
We have $\Expect[z_{e}^b] = (p-1)\cdot \frac{1}{p}(1-\frac{1}{p})^{2t-2} = (1-\frac{1}{p})^{2t} \pm O(1/p)$.
\end{itemize}
Let $z_{e}^{a,b} \bydef z_e^a \cdot z_e^b$.
Notice that $z_e^a$ and $z_e^b$ are nearly independent but not independent.
Let $z_e \bydef z_e^a + z_e^b - z_e^{a,b}$, and so we have $z = |R \setminus R^\diamond| = \sum_{e \in R} z_e$.
We calculate $\Expect[z_{e}^{a,b}]$ as follows.
Let $e'$ be any edge incident to $e$ such that $c^\bullet \in \Psi(e')$,
and let $c$ be any color in $\Psi(e) - \{c^\bullet\}$.
With respect to $(e, e', c)$, we define the following two sets:
\begin{itemize}
\item $S_a$ is the set of all edges $e''$ such that (i) $e'' \neq e, e'$, (ii) $e''$ is incident to $e'$, and (iii) $c^\bullet \in \Psi(e'')$. Intuitively, $S_a$ is the set of all edges other than $e$ that contend with $e'$ for the color $c^\bullet$.
Notice that $|S_a| = 2t-3$, since $\Psi(e)$ must contain $c^\bullet$.
\item $S_b$ is the set of all edges $e''$ such that $e'' \in S_b$ if (i) $e'' \neq e, e'$, (ii) $e''$ is incident to $e$, and (iii) $c \in \Psi(e'')$. Intuitively, $S_b$ is the set of all edges other than $e'$ that contend with $e$ for the color $c$.
Notice that $2t-3 \leq |S_b| \leq 2t-2$, since $\Psi(e')$ may or may not contain $c$.
The extent to which $S_a$ and $S_b$ intersect is unknown.
\end{itemize}
Fixing the edge $e$ incident to $v^\bullet$, let $x(c,e')$ denote the probability that
(i) $e'$ successfully colors itself  $c^\bullet$ and
(ii) $e$ successfully colors itself  $c$.
In view of the definition of $S_a$ and $S_b$, we have:
\begin{align*}
x(c,e') & = \frac{1}{p^2}
\prod_{e'' \in S_a \setminus S_b}(1-1/p)
\prod_{e'' \in S_b \setminus S_a}(1-1/p)
\prod_{e'' \in S_a \cap S_b}(1-2/p)\\
& = \frac{1}{p^2} (1-1/p)^{|S_a \setminus S_b|}
(1-1/p)^{|S_b \setminus S_a|}
(1-2/p)^{|S_a \cap S_b|}\\
& = \frac{1}{p^2} (1-1/p)^{|S_a \setminus S_b|}
(1-1/p)^{|S_b \setminus S_a|}
(1-1/p)^{2|S_a \cap S_b|} \left(1 - O\left(\frac{|S_a \cap S_b|}{p^2}\right)\right)\\
&= \frac{1}{p^2} (1-1/p)^{|S_a| + |S_b|} (1 - O(1/p))  \hcm \mbox{(Notice that $|S_a \cap S_b| < t = \Theta(p)$.)}\\
&= \frac{1}{p^2} (1-1/p)^{4t - O(1)} (1 - O(1/p))\\
& = \frac{1}{p^2}  (1-{1/p})^{4t} \pm O(1/p^3).
\end{align*}
We now calculate $\Expect[z_{e}^{a,b}]$ and show that $\Expect[|R^\diamond|]  = |R| - \Expect[z] = t^\diamond \pm O(1)$.
\begin{align*}
\Expect[z_{e}^{a,b}]
&= \sum_{\substack{(c,e') \ : \ \text{$e'$ incident to $e$,}\\ \text{$c^\bullet \in \Psi(e')$, $c \in \Psi(e) - \{c^\bullet\}$}}} x(c,e') & \mbox{(union of disj.~events)}\\
&=(t-1)(p-1)\cdot \left( \frac{1}{p^2}  (1-1/p)^{4t} \pm O(1/p^3) \right)\\
&=\frac{t}{p}(1-1/p)^{4t} \pm O(1/p). \\
\displaybreak[0]\\
\Expect[|R^\diamond|] &= |R| - \Expect[z] \\
& = t - \sum_{e \in R} \left( \Expect[z_e^a] + \Expect[z_e^b] - \Expect[z_e^{a,b}] \right)\\
& = t \cdot \left(1 - \frac{t}{p}(1-1/p)^{2t} - (1-1/p)^{2t} + \frac{t}{p}(1-1/p)^{4t} \pm O(1/p)\right)\\
& = t \cdot \left(1 - \frac{t}{p}(1-1/p)^{2t} - (1-1/p)^{2t} + \frac{t}{p}(1-1/p)^{4t}\right) \pm O(1)\\
& = t^\diamond \pm O(1). &\text{Definition of $t^\diamond$}
\end{align*}

\paragraph{Concentration Bound.}
We have established that $|R^\diamond|$ has the correct expectation and now need to prove that it
has sufficiently good concentration around that expectation.  The analysis here
becomes more complicated because we have to consider the colors selected in some 3-neighborhood.
The \emph{palette size} and \emph{degree} analyses focussed only on 2-neighborhoods.

Based on the definition of $z_e^a$ and $z_e^b$, we define the following sets.
\begin{itemize}
\item Recall that $R=N_{c^\bullet}(v^\bullet)$.  Let $R_1$ be the set of  all edges $e$
such that (i) $e \notin R$, (ii) $c^\bullet \in \Psi(e)$, and (iii) $e$ is incident to some edge in $R$.
Similarly, let $R_2$ be the set of  all edges $e$
such that (i) $e \notin R \cup R_1$, (ii) $c^\bullet \in \Psi(e)$, and (iii) $e$ is incident to some edge in $R_1$.
Notice that the value $z_e^a$, for any $e \in R$, is determined by the information about which edges in $R \cup R_1 \cup R_2$ select $c^\bullet$.
We write $\alpha = |R \cup R_1 \cup R_2|$.
\item
Let $R'$ be the set of  all edges $e'$
such that (i) $e' \notin R$  and (ii) there exists $e \in R$ such that $\Psi(e) \cap \Psi(e') - \{c^\bullet\} \neq \emptyset$.
Notice that the the value $z_e^b$, for any $e \in R$, is determined by the colors selected by the edges in $R \cup R'$.
We write $\beta = |R \cup R'|$.
\end{itemize}

For each $e \in R$,  $z_e^a$ is simply the summation of $z_{e',c^\bullet}$ over all edges $e' \in R_1$ incident to $e$.
For each $e'' \in R_2$, we write $w(e'')$ to denote the number of edges in $R_1$ incident to $e''$.
Intuitively, $w(e'')$ measures the influence of $\Color(e'')$ on $\sum_{e \in R} z_{e}^a$.

We consider the sequence of random variables $(X_1, \ldots, X_{\alpha+\beta})$, where the initial $\alpha$ random variables reveal which edges in $R \cup R_1 \cup R_2$ select the color $c^\bullet$ according to the ordering $R_2, R_1, R$, and the remaining $\beta$ random variables reveal the colors selected by the edges in $R \cup R'$ according to the ordering $R', R$.
We let $z = f(X_1, \ldots, X_{\alpha+\beta})$ in Theorem~\ref{lem:concentration}.
To prove the desired concentration bound $\Prob[ z < \Expect[z] - s ] = \exp(-\Omega(s^2/t))$,
it suffices to show that we can set $M = O(1)$ and $\sigma_i^2$ such that $\sum_{i=1}^{\alpha+\beta} \sigma_i^2 = O(t)$.
In what follows, we analyze the effect of exposing the value of $X_i$, given that all variables in $\mathbf{X}_{i-1}$ have been fixed.

\paragraph{Revealing whether $c^\bullet$ is Selected by an Edge in $R \cup R_1 \cup R_2$.}
Consider the case where $X_i$ reveals whether $c^\bullet$ is selected by the edge $e^\star \in R \cup R_1 \cup R_2$.
Notice that $X_i$ is binary, and recall that $D_i = \Expect[z|\mathbf{X}_i] - \Expect[z|\mathbf{X}_{i-1}]$.
There are at most two distinct outcomes of $D_i|\mathbf{X}_{i-1}$, in which one occurs with probability $1/p$.
Thus, by Lemma~\ref{lem:aux} we have:
\[
\Var[D_i|\mathbf{X}_{i-1}] \leq \left(\max_{X_i}D_i|\mathbf{X}_{i-1} - \min_{X_i}D_i|\mathbf{X}_{i-1}\right)^2 / p =  O( \max_{\mathbf{X}_i} |D_i|^2 / p).
\]
Thus, to achieve $\sum_{i=1}^{\alpha} \sigma_i^2 = O(t)$ and $M=O(1)$ it suffices to show the following.
\begin{itemize}
\item For the case $e^\star \in R_2$, we must prove $|D_i| = O(w(e^\star)/p)$.\footnote{Intuitively, if $e^\star$ chooses
color $c^\bullet$, it prevents $w(e^\star)$ edges in $R_1$ from successfully coloring themselves $c^\bullet$,
but the prior probability of these edges coloring themselves $c^\bullet$ was only $O(1/p)$, hence the total influence
on the expectation of $z$ should be $O(w(e^\star)/p)$.}
Since $w(e^\star) < t = \Theta(p)$,
$\Var[D_i|\mathbf{X}_{i-1}] = O((w(e^\star)/p)^2 / p) = O(w(e^\star)/p^2)$,
so we can set $\sigma_i^2 = O(w(e^\star)/p^2)$.
\item For  the case $e^\star \in R \cup R_1$, we must prove $|D_i| = O(1)$.
Hence we may set $\sigma_i^2 = \Var[D_i|\mathbf{X}_{i-1}] = O(1/p)$.

\end{itemize}
Notice that $\sum_{e^\star \in R_2} w(e^\star) < t^3$, $|R_1|<t^2$, and $|R|=t$. Thus, $\sum_{i=1}^{\alpha} \sigma_i^2 = O(t)$.
With respect to the edge $e^\star \in R \cup R_1 \cup R_2$, we make the following definitions.
\begin{align*}
Y^a &\bydef \{e' \in R_1 \ : \ e' = e^\star \text{ or $e'$ is incident to $e^\star$}\}
& D_i^{a} &\bydef \sum_{e' \in Y^a} \left(\Expect[z_{e',c^\bullet}|\mathbf{X}_i] + \Expect[z_{e',c^\bullet}|\mathbf{X}_{i-1}]\right) \\
Y^b &\bydef \{e \in R \ : \ e = e^\star \text{ or $e$ is incident to $e^\star$}\}
& D_i^{b} &\bydef \sum_{e \in Y^b} |\Expect[z_{e}^b|\mathbf{X}_i] - \Expect[z_{e}^b|\mathbf{X}_{i-1}]|
\end{align*}
Intuitively, $Y^a$ and $Y^b$ are the subsets of $R_1$ and $R$ that are ``relevant'' to $D_i$ in the following sense:
\begin{align*}
\Expect[z_{e'',c^\bullet}|\mathbf{X}_i] 	&= \Expect[z_{e'',c^\bullet}|\mathbf{X}_{i-1}]	& \mbox{for all $e'' \in R_1 \setminus Y^a$,}\\
\Expect[z_{e'}^b|\mathbf{X}_i] 			&= \Expect[z_{e'}^b|\mathbf{X}_{i-1}]			& \mbox{for all $e' \in R \setminus Y^b$.}
\end{align*}
Our plan of bounding $|D_i|$ is as follows. First we show that $|D_i|  \leq 4 D_i^{a} + D_i^b$ in Claim~\ref{clm-1}, and then we bound $D_i^{a}$ and $D_i^b$ separately in Claims~\ref{clm-2} and~\ref{clm-3}. The three claims together establish a desired bound on $|D_i|$.

\begin{claim}\label{clm-1}
$|D_i|  \leq 4 D_i^{a} + D_i^b$.
\end{claim}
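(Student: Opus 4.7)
The plan is to decompose $z_e = z_e^a + z_e^b - z_e^{a,b}$ by inclusion-exclusion and correspondingly write $D_i = \hat{A}_i + \hat{B}_i - \hat{C}_i$, where $\hat{A}_i$, $\hat{B}_i$, $\hat{C}_i$ denote the sums $\sum_{e \in R}(\Expect[\cdot|\mathbf{X}_i] - \Expect[\cdot|\mathbf{X}_{i-1}])$ applied to $z_e^a$, $z_e^b$, and $z_e^{a,b}$ respectively. The triangle inequality then reduces the task to proving the three bounds $|\hat{A}_i| \le D_i^a$, $|\hat{B}_i| \le D_i^b$, and $|\hat{C}_i| \le 3 D_i^a$; summed, these give $|D_i| \le 4 D_i^a + D_i^b$ as required.

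For $|\hat{A}_i| \le D_i^a$, I would exploit the exact identity $z_e^a = \sum_{e' \in R_1 \cap N_u(e)} z_{e',c^\bullet}$, which holds because edges incident to $u$ cannot jointly succeed with any single color, together with the observation that every $e' \in R_1$ is incident to a \emph{unique} $e \in R$ on the $u$-side (the $v^\bullet$-endpoint would force $e' \in R$, contradicting $e' \in R_1$). Swapping sums gives $\hat{A}_i = \sum_{e' \in R_1}(\Expect[z_{e',c^\bullet}|\mathbf{X}_i] - \Expect[z_{e',c^\bullet}|\mathbf{X}_{i-1}])$; terms with $e' \notin Y^a$ vanish because $z_{e',c^\bullet}$ is conditionally independent of $X_i$ given $\mathbf{X}_{i-1}$, and applying the elementary inequality $|x-y| \le x+y$ for $x,y \ge 0$ to each remaining $Y^a$-indexed term recovers $D_i^a$ exactly. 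The bound $|\hat{B}_i| \le D_i^b$ follows analogously: $z_e^b$ depends only on the colors of $e$ and its incident edges, so it is conditionally independent of $X_i$ whenever $e \notin Y^b$, and the triangle inequality on the surviving $Y^b$-indexed terms produces the sum of absolute differences that defines $D_i^b$.

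The crux is $|\hat{C}_i| \le 3 D_i^a$. Starting from the pointwise inequality $z_e^{a,b} \le z_e^a$ and invoking the identity above, each non-vanishing summand can be bounded by $\sum_{e' \in R_1 \cap N_u(e)}(\Expect[z_{e',c^\bullet}|\mathbf{X}_i] + \Expect[z_{e',c^\bullet}|\mathbf{X}_{i-1}])$. After exchanging the two sums, each $e' \in R_1$ is counted at most once (via its unique associated $e \in R$), so $|\hat{C}_i|$ is bounded by $\sum_{e' \in Z} m(e')$, where $m(e') = \Expect[z_{e',c^\bullet}|\mathbf{X}_i] + \Expect[z_{e',c^\bullet}|\mathbf{X}_{i-1}]$ and $Z \subseteq R_1$ consists of those $e'$ whose associated $e$ has $z_e^{a,b}$ genuinely affected by $X_i$. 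The remaining task is to charge $Z$ to $Y^a$ with multiplicity at most $3$, through a case analysis on whether $e^\star$ lies in $R$, $R_1$, or $R_2$ and on which endpoint it shares with $e$.

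The main obstacle is the subcase $e^\star \in R$, $e^\star \ne e$, and $e^\star$ incident to $e$ at $v^\bullet$: here $z_e^b$ is influenced by $X_i$ while $z_e^a$ need not be, so the crude domination $z_e^{a,b} \le z_e^a$ fails to localize the contribution inside $Y^a$ and the ``obvious'' charging breaks down. I would resolve this by pairing each offending $e' \in (R_1 \cap N_u(e)) \setminus Y^a$ with the $R_1$-edge incident to $e^\star$ at its non-$v^\bullet$ endpoint, which lies in $Y^a$ by construction, thereby absorbing the extra mass into $D_i^a$ at the cost of the small multiplicative slack that produces the final factor of $3$.
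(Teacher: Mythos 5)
Your decomposition $z_e = z_e^a + z_e^b - z_e^{a,b}$ is the same algebraic starting point the paper uses, but the paper does \emph{not} then apply the triangle inequality to the three pieces $\hat{A}_i,\hat{B}_i,\hat{C}_i$ separately. Instead it regroups $D_i = F_a + F_b + Q_1 + Q_2 + Q_3$ (here $F_a = \hat A_i$, $F_b = \hat B_i$, $Q_1+Q_2+Q_3 = -\hat C_i$, with $Q_j$ the restriction of $-\hat C_i$ to index set $P_j$) and bounds $|F_a+Q_1| \le 2D_i^a$, $|F_b+Q_2|\le D_i^b$, $|Q_3|\le 2D_i^a$. This regrouping is essential, and your three separate bounds do not hold. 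Two concrete gaps.

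\emph{First}, the uniqueness claim is false. An edge $e' = \{u_1,u_2\}\in R_1$ can be incident to two edges of $R$, namely $e_1 = \{v^\bullet,u_1\}$ and $e_2 = \{v^\bullet,u_2\}$, and in both cases the shared endpoint is the non-$v^\bullet$ one, so $z_{e',c^\bullet}$ appears in both $z_{e_1}^a$ and $z_{e_2}^a$. Your parenthetical only rules out sharing $v^\bullet$; it does not establish uniqueness. After swapping sums you therefore get multiplicity at most $2$, not $1$, so only $|\hat A_i|\le 2D_i^a$ (which is exactly why the paper carries the factor of $2$).

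\emph{Second, and more seriously}, $|\hat C_i|$ cannot be bounded by $O(D_i^a)$ alone, and the "charging" you sketch cannot repair this. Take $e^\star\in R$. Then $Y^b = R$, and for every $e\in R\setminus\{e^\star\}$ the conditional distribution of $z_e^b$ changes with $X_i$ (learning whether $e^\star$ chose $c^\bullet$ shifts the probability that $e^\star$ blocks $e$ from any $c\ne c^\bullet$). For each such $e$, \emph{every} $e'\in R_1\cap N_u(e)$ — up to $t-1$ of them, essentially all of which lie outside $Y^a$ — contributes a possibly nonzero term $\Expect[z_{e',c^\bullet} z_e^b\mid\mathbf{X}_i]-\Expect[z_{e',c^\bullet} z_e^b\mid\mathbf{X}_{i-1}]$: the marginal of $z_{e',c^\bullet}$ is unchanged, but $z_{e',c^\bullet}$ and $z_e^b$ are correlated and the joint distribution shifts. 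So the set $Z$ in your bound has size $\Theta(t^2)$, whereas $|Y^a|\le t-1$ and each $D_i^a$-summand is $O(1/p)$, hence $D_i^a = O(1)$. Any map from $Z$ to $Y^a$ has multiplicity $\Theta(t)$, not $3$; the pointwise domination $z_e^{a,b}\le z_e^a$ simply does not localize $\hat C_i$ inside $Y^a$.

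The paper avoids this by combining the piece of $\hat C_i$ indexed by $e'\notin Y^a$ (the term $Q_2$) with $\hat B_i = F_b$ \emph{before} taking absolute values: the combined quantity is $\sum_{e\in Y^b}\bigl(\Expect\bigl[z_e^b\bigl(1-\sum_{e'\in B(e)} z_{e',c^\bullet}\bigr)\bigm|\mathbf{X}_i\bigr] - \Expect\bigl[z_e^b\bigl(1-\sum_{e'\in B(e)} z_{e',c^\bullet}\bigr)\bigm|\mathbf{X}_{i-1}\bigr]\bigr)$, and because $\Expect[z_{e',c^\bullet}\mid\mathbf{X}_i]=\Expect[z_{e',c^\bullet}\mid\mathbf{X}_{i-1}]$ for $e'\in B(e)\subseteq R_1\setminus Y^a$, the nonnegative factor $1-\Expect[\sum_{e'\in B(e)} z_{e',c^\bullet}\mid\mathbf{X}_{i-1}]\le 1$ pulls out, yielding exactly $D_i^b$. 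That cancellation is destroyed the moment you apply the triangle inequality to $\hat B_i$ and $\hat C_i$ separately, which is the step your proof takes at the very beginning.
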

\begin{proof}
We define the following notations.
\[
\begin{matrix*}[l]
&P_1 \bydef \{(e, e') \ : e \in R \setminus Y^b, e' \in Y^a, \text{ $e$ is incident to $e'$}\} \\
&P_2 \bydef \{(e, e') \ : e \in Y^b, e' \in R_1 \setminus  Y^a, \text{ $e$ is incident to $e'$}\} \\
&P_3 \bydef \{(e, e') \ : e \in Y^b, e' \in Y^a, \text{ $e$ is incident to $e'$}\} \\
\\
& Q_j \bydef -\sum_{(e, e') \in P_j}
\left(\Expect[z_{e',c^\bullet} \cdot z_{e}^b|\mathbf{X}_i] - \Expect[z_{e',c^\bullet} \cdot z_{e}^b|\mathbf{X}_{i-1}]\right)
& \text{(for each $j = 1,2,3$)} \\
& F_j \bydef \sum_{e \in R} \left(\Expect[z_e^j|\mathbf{X}_i] - \Expect[z_e^j|\mathbf{X}_{i-1}]\right)
& \text{(for each $j = a,b$)}
\end{matrix*}
\]
The definitions of $P_1$, $P_2$, and $P_3$ depend on $Y^a$ and $Y^b$, which depend on the edge $e^\star$.
For instance, if $e^\star \in R$, then $Y^b = R$, which implies that $P_1 = \emptyset$.
Recall that the edge $e^\star$ can be any edge in $R \cup R_1 \cup R_2$, and the proof of this claim applies to all choices of $e^\star \in R \cup R_1 \cup R_2$.

Notice that for any pair $(e\in R, e'\in R_1)$ such that $e$ is incident to $e'$ but $(e, e') \notin P_1 \cup P_2 \cup P_3$, we must have
$\Expect[z_{e',c^\bullet} \cdot z_{e}^b|\mathbf{X}_i] = \Expect[z_{e',c^\bullet} \cdot z_{e}^b|\mathbf{X}_{i-1}]$ due to the definition of $Y^a$ and $Y^b$.
We rewrite the term $D_i$ as follows.
\begin{align*}
D_i
&=  \Expect[z|\mathbf{X}_i] - \Expect[z|\mathbf{X}_{i-1}] \\
&= \sum_{e \in R}\left( \Expect[z_e|\mathbf{X}_i] - \Expect[z_e|\mathbf{X}_{i-1}] \right)\\
&= \sum_{e \in R}\left( \Big(\Expect[z_e^a|\mathbf{X}_i] - \Expect[z_e^a|\mathbf{X}_{i-1}]\Big)
+  \left(\Expect[z_e^b|\mathbf{X}_i] - \Expect[z_e^b|\mathbf{X}_{i-1}]\right)  - \left(\Expect[z_e^a \cdot z_{e}^b|\mathbf{X}_i] - \Expect[z_e^a \cdot z_{e}^b|\mathbf{X}_{i-1}]\right)  \right)\\
\intertext{(Recall that $z_e^a$ is the summation of $z_{e',c^\bullet}$ over all edges $e' \in R_1$ incident to $e$.)}
&= F_a + F_b -\sum_{(e, e') \ : \ e \in R, \ e' \in R_1, \ e' \text{ incident to }e}
\left(\Expect[z_{e',c^\bullet} \cdot z_{e}^b|\mathbf{X}_i] - \Expect[z_{e',c^\bullet} \cdot z_{e}^b|\mathbf{X}_{i-1}]\right)\\
\intertext{(Any pair $(e, e') \notin P_1 \cup P_2 \cup P_3$ contributes zero to this summation.)}
&= F_a + F_b + Q_1 + Q_2 + Q_3.
\end{align*}
To prove this claim it suffices to show that
(i) $|F_a + Q_1| \leq 2 D_i^a$, (ii) $|F_b + Q_2| \leq D_i^b$, and (iii) $|Q_3| \leq 2 D_i^a$.
We expand $F_a$ using the fact that $z_e^a$ is the summation of $z_{e',c^\bullet}$ over all edges $e' \in R_1$ incident to $e$.
\begin{align*}
|F_a + Q_1|
&\leq \left|Q_1   + \sum_{(e, e') \ : \ e \in R, \ e' \in R_1, \ e' \text{ incident to }e}
\left(\Expect[z_{e',c^\bullet}|\mathbf{X}_i] - \Expect[z_{e',c^\bullet}|\mathbf{X}_{i-1}]\right)  \right| \\
\intertext{Since any pair $(e, e') \notin P_1 \cup P_3$ contributes 0 in the summation,}
&\leq \left|Q_1  + \sum_{(e, e') \in P_1 \cup P_3}
\left(\Expect[z_{e',c^\bullet}|\mathbf{X}_i] - \Expect[z_{e',c^\bullet}|\mathbf{X}_{i-1}]\right)\right| \\
\intertext{and by definition of $Q_1$,}
&\leq  \sum_{(e,e') \in P_1} \left|\Expect[z_{e',c^\bullet}(1-z_{e}^b)|\mathbf{X}_i] - \Expect[z_{e',c^\bullet}(1-z_{e}^b)|\mathbf{X}_{i-1}]\right|\\
& \hcm +\sum_{(e,e') \in P_3} \left|\Expect[z_{e',c^\bullet}|\mathbf{X}_i] - \Expect[z_{e',c^\bullet}|\mathbf{X}_{i-1}]\right|
\\
\intertext{When $e \notin R \setminus Y^b$, $\Expect[z_{e}^b|\mathbf{X}_{i-1}] = \Expect[z_{e}^b|\mathbf{X}_{i}]$, so}
&\leq
\sum_{(e,e') \in P_1} (1-\Expect[z_{e}^b|\mathbf{X}_{i-1}]) \left|\Expect[z_{e',c^\bullet}|\mathbf{X}_i] - \Expect[z_{e',c^\bullet}|\mathbf{X}_{i-1}]\right|\\
& \hcm +\sum_{(e,e') \in P_3} \left|\Expect[z_{e',c^\bullet}|\mathbf{X}_i] - \Expect[z_{e',c^\bullet}|\mathbf{X}_{i-1}]\right|
\\
\intertext{and since $0 \leq 1-\Expect[z_{e}^b|\mathbf{X}_{i-1}] \leq 1$,}
&\leq \sum_{(e,e') \in P_1 \cup P_3} \left|\Expect[z_{e',c^\bullet}|\mathbf{X}_i] - \Expect[z_{e',c^\bullet}|\mathbf{X}_{i-1}]\right|\\
\intertext{Finally, any edge $e' \in R_1$ is incident to at most 2 edges in $R$, so}
&\leq 2\sum_{e' \in Y^a} \left|\Expect[z_{e',c^\bullet}|\mathbf{X}_i] - \Expect[z_{e',c^\bullet}|\mathbf{X}_{i-1}]\right| \\
&\leq 2 D_i^a.\\
\intertext{For each $e \in Y^b$, we write $B(e)$ to denote the set of all edges $e' \in R_1 \setminus Y^a$ that are incident to $e$,
i.e., $\{e\}\times B(e) \subseteq P_2$.
    Notice that $0 \leq \Expect[\sum_{e' \in B(e)} z_{e',c^\bullet}|\mathbf{X}_{i-1}] = \Expect[\sum_{e' \in B(e')} z_{e',c^\bullet}|\mathbf{X}_{i}] \leq 1$, since $e = \{v^\bullet, u\}$ and all edges in $B(e)$ share the vertex $u$, and so at most one could
    be successfully colored $c^\bullet$.  By definition, none are incident to $e^\star$.  We can now bound $|F_b+Q_2|$ as follows.}
|F_b + Q_2|
&\leq \left| Q_2 + \sum_{e \in Y^b}\Expect[z_{e}^b|\mathbf{X}_i] - \Expect[z_{e}^b|\mathbf{X}_{i-1}] \right|\\
\intertext{According to the definition of $B(e)$ and $Q_2$,}
&\leq \sum_{e \in Y^b} \left|\Expect\left[z_{e}^b\left(1-\sum_{e' \in B(e)}z_{e',c^\bullet}\right)\middle|\mathbf{X}_i\right] - \Expect\left[z_{e}^b\left(1-\sum_{e' \in B(e)}z_{e',c^\bullet}\right) \middle| \mathbf{X}_{i-1}\right]\right| \\
\intertext{For every $e' \in R_1 \setminus Y^a$, we have $\Expect\left[z_{e',c^\bullet} | \mathbf{X}_{i}\right] = \Expect\left[z_{e',c^\bullet} | \mathbf{X}_{i-1}\right]$, which implies}
&\leq \sum_{e \in Y^b} \left(1-\Expect\left[\sum_{e' \in B(e)}z_{e',c^\bullet}\middle|\mathbf{X}_{i-1}\right]\right)
\cdot \left|\Expect[z_{e}^b|\mathbf{X}_i] - \Expect[z_{e}^b|\mathbf{X}_{i-1}]\right|\\
&\leq \sum_{e \in Y^b} \left|\Expect[z_{e}^b|\mathbf{X}_i] - \Expect[z_{e}^b|\mathbf{X}_{i-1}]\right| \\
& = D_i^b.\\
\intertext{Our last task is to bound the absolute value of $Q_3$.}
|Q_3|
&\leq \sum_{(e,e') \in P_3} \left( \Expect[z_{e',c^\bullet} \cdot z_{e}^b|\mathbf{X}_i] + \Expect[z_{e',c^\bullet} \cdot z_{e}^b|\mathbf{X}_{i-1}] \right) \\
&\leq \sum_{(e,e') \in P_3}\left( \Expect[z_{e',c^\bullet}|\mathbf{X}_i] + \Expect[z_{e',c^\bullet}|\mathbf{X}_{i-1}] \right)\\
\intertext{Since any edge $e' \in R_1$ is incident to at most 2 edges in $R$,}
&\leq 2\sum_{e' \in Y^a} \left(\Expect[z_{e',c^\bullet}|\mathbf{X}_i] + \Expect[z_{e',c^\bullet}|\mathbf{X}_{i-1}] \right)\\
&\leq 2 D_i^a. \qedhere
\end{align*}
\end{proof}

\begin{claim}\label{clm-2}
If $e^\star \in R_2$, then $D_i^{a} = O(w(e^\star)/p)$.
If $e^\star \in R \cup R_1$, then $D_i^{a} = O(1)$.
\end{claim}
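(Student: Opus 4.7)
}
The plan is to bound each term $\Expect[z_{e',c^\bullet}|\mathbf{X}_j]$ ($j\in\{i-1,i\}$) appearing in $D_i^a$ by whatever is tightest given the case: either by the prior $1/p$ that $e'$ ever selects $c^\bullet$, or by the combinatorial fact that two edges sharing a vertex cannot both successfully take color $c^\bullet$. The split between the $R_2$ case and the $R\cup R_1$ case reflects exactly this distinction --- in the first case the sum has many terms but each one is small, while in the second case the sum is bounded deterministically by a constant.

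\emph{Case $e^\star \in R_2$.} Here $Y^a$ is exactly the set of edges in $R_1$ incident to $e^\star$, and by definition $|Y^a|=w(e^\star)$. Because the processing order of the $\alpha$ initial variables is $R_2,\, R_1,\, R$, the information exposed in $\mathbf{X}_{i-1}$ and $\mathbf{X}_i$ concerns only whether edges of $R_2$ selected $c^\bullet$; in particular, for each $e'\in Y^a\subseteq R_1$ the event that $e'$ selects $c^\bullet$ still has prior probability $1/p$. Hence $\Expect[z_{e',c^\bullet}|\mathbf{X}_j]\le \Pr[e'\text{ selects }c^\bullet\mid\mathbf{X}_j]= 1/p$ for $j\in\{i-1,i\}$, since $z_{e',c^\bullet}=1$ requires $e'$ to select $c^\bullet$. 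Summing the two expectations over the $w(e^\star)$ elements of $Y^a$ yields $D_i^a\le 2w(e^\star)/p = O(w(e^\star)/p)$.

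\emph{Case $e^\star \in R \cup R_1$.} Here I use the hard combinatorial constraint that $\sum_{e' \in Y^a} z_{e',c^\bullet}$ is bounded by a small constant pointwise. If $e^\star=\{v^\bullet,u\}\in R$, then every $e'\in Y^a\subseteq R_1$ is incident to $e^\star$ but not in $R$, so $e'$ must share the endpoint $u$ (not $v^\bullet$) with $e^\star$; since at most one edge at $u$ can be successfully colored $c^\bullet$, we get $\sum_{e'\in Y^a} z_{e',c^\bullet}\le 1$ deterministically. If $e^\star=\{x,y\}\in R_1$, then every element of $Y^a$ is either $e^\star$ itself or shares $x$ or $y$ with it; partitioning $Y^a$ according to which endpoint is shared gives two groups, each contributing at most one successfully coloured edge, so $\sum_{e'\in Y^a} z_{e',c^\bullet}\le 2$ deterministically. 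In either case, $\Expect[\sum_{e'\in Y^a}z_{e',c^\bullet}\mid \mathbf{X}_j]\le 2$ for $j\in\{i-1,i\}$, and therefore $D_i^a\le 4 = O(1)$.

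The only place where care is required is identifying exactly which endpoints of $e^\star$ the edges of $Y^a$ can share, and this only requires the definitions of $R$, $R_1$, $R_2$ rather than any probabilistic estimate. I do not anticipate any genuine difficulty; the main subtlety is simply remembering that the two expectations in each summand of $D_i^a$ are added (not differenced), which forces the use of a deterministic upper bound on $\sum z_{e',c^\bullet}$ in the $R\cup R_1$ case rather than a cancellation argument.
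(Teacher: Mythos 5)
Your proof is correct and follows essentially the same strategy as the paper's. For the $R_2$ case, both you and the paper observe that $Y^a\subseteq R_1$ has size $w(e^\star)$, and because the exposure order is $R_2,R_1,R$, the edges of $Y^a$ have not yet been revealed, so each $\Expect[z_{e',c^\bullet}\mid\mathbf{X}_j]\le 1/p$. For the $R\cup R_1$ case, the paper simply notes that every $e'\in Y^a$ shares a vertex with $e^\star$, hence at most two can successfully color themselves $c^\bullet$, so the sum is pointwise at most $2$ and $D_i^a\le 4$. Your version refines this into the two sub-cases $e^\star\in R$ (where the argument that edges in $Y^a$ must share the non-$v^\bullet$ endpoint of $e^\star$ is a nice extra observation, giving the tighter pointwise bound of $1$) and $e^\star\in R_1$ (pointwise bound $2$), but the final estimate is the same. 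No gaps; the extra detail is harmless and the underlying idea is identical to the paper's.
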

\begin{proof}
We first consider the case that $e^\star \in R_2$.
In this case $|Y^a| = w(e^\star)$.
Recall that $Y^a \subseteq R_1$, and so all $e \in Y^a$ have not yet decided whether to select $c^\bullet$ when
$X_i$ is revealed.
Therefore, both $\Expect[z_{e,c^\bullet}|\mathbf{X}_i]$ and $\Expect[z_{e,c^\bullet}|\mathbf{X}_{i-1}]$ are within the range $[0, 1/p]$, and so $D_i^{a} = O(w(e^\star)/p)$.
Next, consider the case that $e^\star \in R \cup R_1$.
All edges in $Y^a$ must share a vertex with $e^\star$, and so at most two edges in $Y^a$ can successfully color themselves by
$c^\bullet$.
Hence
\[
D_i^{a} \leq \sum_{e \in Y^a} \left(\Expect[z_{e,c^\bullet}|\mathbf{X}_i]  + \Expect[z_{e,c^\bullet}|\mathbf{X}_{i-1}]\right) \leq 2+2 = 4 =O(1).
\]
\end{proof}

\begin{claim}\label{clm-3}
If $e^\star \in R_1 \cup R_2$, then $D_i^{b} = O(1/p)$.
If $e^\star \in R$, then $D_i^{b} = O(1)$.
\end{claim}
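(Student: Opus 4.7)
\noindent\textbf{Proof proposal for Claim~\ref{clm-3}.}
The plan is to exploit the fact that, in the first $\alpha$ phase, $X_i$ reveals only the single bit $Y_i \bydef \mathbbm{1}[\Color(e^\star) = c^\bullet]$. Writing $\Expect[z_e^b \mid \mathbf{X}_{i-1}] = (1/p)\Expect[z_e^b \mid Y_i=1] + (1-1/p)\Expect[z_e^b \mid Y_i=0]$ (conditioned on the history $\mathbf{X}_{i-1}$ throughout), each summand in $D_i^b$ becomes a rescaled gap $\lambda \cdot |\Expect[z_e^b \mid Y_i=1] - \Expect[z_e^b \mid Y_i=0]|$ with $\lambda \leq 1$. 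Thus the real task is to bound this gap for every $e \in Y^b$, and then multiply by $|Y^b|$.

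For every $e \in Y^b$ with $e \neq e^\star$, I will decompose $z_e^b = W \cdot \mathbbm{1}[\Color(e^\star) \neq \Color(e)]$, where $W$ is the indicator that $\Color(e) \in \Psi(e) \setminus \{c^\bullet\}$ and no edge adjacent to $e$ other than $e^\star$ selects $\Color(e)$. Because $W$ depends only on colors chosen by $e$ and by its neighbors distinct from $e^\star$, it is independent of $\Color(e^\star)$ (given $\mathbf{X}_{i-1}$). Conditioned on $Y_i=1$, $\Color(e^\star)=c^\bullet \neq \Color(e)$ is forced on $\{W=1\}$, so $\Expect[z_e^b \mid Y_i=1] = \Expect[W]$. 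Conditioned on $Y_i=0$, $\Color(e^\star)$ is uniform on $\Psi(e^\star)\setminus\{c^\bullet\}$, hence coincides with $\Color(e)$ with probability at most $1/(p-1)$, yielding $\Expect[z_e^b \mid Y_i=0] \geq \Expect[W](1 - 1/(p-1))$. The gap is thus $O(1/p)$, so each such summand contributes $O(1/p)$ to $D_i^b$. For the exceptional summand $e = e^\star$ (which only arises when $e^\star \in R$), the conditioning on $Y_i=1$ forces $\Color(e^\star) = c^\bullet \notin \Psi(e^\star)\setminus\{c^\bullet\}$, so $\Expect[z_{e^\star}^b \mid Y_i=1] = 0$, whereas $\Expect[z_{e^\star}^b \mid Y_i=0] \leq 1$; the contribution of this term is therefore only $O(1)$.

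It remains to bound $|Y^b|$ in each case. If $e^\star \in R_2$, then the construction of $R_1$ precludes $e^\star$ from being adjacent to any edge of $R$, so $Y^b = \emptyset$ and $D_i^b = 0$. If $e^\star \in R_1$, then $c^\bullet \in \Psi(e^\star)$ together with $e^\star \notin R$ forces $e^\star$ to avoid $v^\bullet$; so at most two edges of $R$ (one through each endpoint of $e^\star$) can be incident to $e^\star$, giving $|Y^b| \leq 2$ and $D_i^b = O(1/p)$. If $e^\star \in R$, then $Y^b$ is the whole of $R$, with $|Y^b| = t$; one summand (for $e=e^\star$) contributes $O(1)$ and each of the remaining $t-1$ contributes $O(1/p)$, for a total of $O(1) + O(t/p) = O(1)$ because $t = \Theta(p)$. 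The only delicate point is justifying the independence of $W$ from $\Color(e^\star)$ under the given exposure order; this is immediate once one notes that the ordering within $\mathbf{X}_{i-1}$ does not alter the joint distribution of $\Color$-values, only the filtration.
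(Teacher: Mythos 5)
Your argument is correct, and it differs from the paper's in a pleasant way. The paper expands $z_e^b = \sum_{c \in \Psi(e)-\{c^\bullet\}} z_{e,c}$ and then explicitly computes $\Expect[z_{e,c}\mid\mathbf{X}_{i-1}]$ as a case analysis involving the counters $k_1, k_2$ (edges that have committed to choose / not choose $c^\bullet$), showing each per-color deviation is $O(1/p^2)$ and summing over the $p-1$ colors. You bypass the color-by-color bookkeeping by factoring $z_e^b = W \cdot \mathbbm{1}[\Color(e^\star) \neq \Color(e)]$, using independence of $W$ from $\Color(e^\star)$ conditional on $\mathbf{X}_{i-1}$, and comparing the two conditional expectations under $Y_i=1$ versus $Y_i=0$ directly. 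This trades the paper's explicit closed-form computations for a short probabilistic coupling-style argument, and it yields the same per-edge bound $O(1/p)$ (for $e\neq e^\star$) and $O(1)$ (for $e=e^\star$). You also observe the slightly finer fact that $Y^b=\emptyset$ when $e^\star\in R_2$ (since such an $e^\star$ cannot be adjacent to $R$ or it would be in $R_1$); the paper only uses the weaker $|Y^b|\leq 2$, which suffices. Both proofs rest on the same structural facts: $|\Psi(e^\star)\setminus\{c^\bullet\}|=p-1$; $e^\star\in R_1$ avoids $v^\bullet$ so $|Y^b|\leq 2$; and $t=\Theta(p)$ to absorb the $t\cdot O(1/p)$ term in the $e^\star\in R$ case. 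One small notational caveat: you reuse $\lambda$ as a scaling factor, which collides with the paper's LLL-criterion parameter; and strictly you want $|\lambda|\leq 1$ since $\lambda$ can be $-1/p$. Neither affects correctness.
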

\begin{proof}
Recall that $z_{e}^b = \sum_{c \in \Psi(e)-\{c^\bullet\}} z_{e,c}$ for any edge $e \in Y^b$, and so
$$D_i^b \leq \sum_{e \in Y^b} \  \sum_{c \in \Psi(e)-\{c^\bullet\}} |\Expect[z_{e,c}|\mathbf{X}_i] - \Expect[z_{e,c}|\mathbf{X}_{i-1}]|.$$
We first show that $|\Expect[z_{e,c}|\mathbf{X}_i] - \Expect[z_{e,c}|\mathbf{X}_{i-1}]| = O(1/p^2)$ if $e^\star \neq e$.
We write $k_1$ (resp., $k_2$) to denote the number of  edges incident to $e$ that have decided to select $c^\bullet$ (resp., have decided to not select $c^\bullet$) by the time $X_i$ is revealed.
\[
\Expect[z_{e,c}|\mathbf{X}_{i-1}] =
\begin{cases}
0 &\text{($e$ has decided to select $c^\bullet$)}\\
\frac{1}{p-1} \cdot (1-1/p)^{2t - 1 - k_1 - k_2} \cdot (1-1/(p-1))^{k_2} &\text{($e$ has decided to not select $c^\bullet$)}\\
\frac{1}{p} \cdot (1-1/p)^{2t - 1 - k_1 - k_2} \cdot (1-1/(p-1))^{k_2} &\text{($e$ has not made any decision)}
\end{cases}
\]
In any case, $\Expect[z_{e,c}|\mathbf{X}_{i-1}] = O(1/p)$.
There are two possibilities of $\Expect[z_{e,c}|\mathbf{X}_{i}]$ based on $X_i$, i.e., whether $e^\star$ selects $c^\bullet$.
\[
\Expect[z_{e,c}|\mathbf{X}_{i}] =
\begin{cases}
\Expect[z_{e,c}|\mathbf{X}_{i-1}] / (1-1/p) &\text{($e^\star$ selects $c^\bullet$)}\\
\Expect[z_{e,c}|\mathbf{X}_{i-1}] \cdot (1-1/(p-1)) / (1-1/p)  &\text{($e^\star$ does not select $c^\bullet$)}
\end{cases}
\]
In any case, $|\Expect[z_{e,c}|\mathbf{X}_i] - \Expect[z_{e,c}|\mathbf{X}_{i-1}]| = O(1/p^2)$.
We are now in a position to bound $D_i^{b}$.
For the case that $e^\star \in R_1 \cup  R_2$, we have $|Y^b| \leq 2$ and $e^\star \notin Y^b$, and so  $D_i^{b} \leq 2 \cdot (p-1) \cdot O(1/p^2) = O(1/p)$.
For the case that $e^\star \in R$, we have $|Y^b| = |R| = t$  and $e^\star \in Y^b$, and so  $D_i^{b} \leq 1 + (t-1) \cdot (p-1) \cdot O(1/p^2) = O(1)$.
\end{proof}

\paragraph{Revealing the Color Selected by an Edge in $R \cup R'$.}
Next, we analyze the effect of exposing the value of $X_i$, where $\alpha < i \leq \alpha+\beta$, given that all variables in $\mathbf{X}_{i-1}$ have been fixed.

Observe that $z_{e}^a$, for all $e \in R$, are already determined by $\{ X_j \ : \ j \in [\alpha] \}$.
If $z_{e}^a = 1$, then $z_e = 1$ regardless of the value of $z_{e}^b$; if $z_{e}^a = 0$, then $z_e = z_e^b$.
For those edges $e \in R$ such that $z_e$ is not determined by $\{ X_j \ : \ j \in [\alpha] \}$, the random variable $z_e = z_{e}^b$ behaves the same as $z_e$ in the analysis of concentration of vertex degree,
 so the  analysis in Appendix~\ref{sect:concentration-deg} can be applied here (think of $S = R$ and $S' = R'$).

In more detail, for each edge $e' \in R'$, we define
$w'(e')$ as $\sum_{e \in R, \ \text{$e'$ incident to $e$} } \left| \Psi(e') \cap \Psi(e) - \{c^\bullet\} \right|$.
We have $\sum_{e' \in R'}{w'(e')} \leq |R|(p-1)(t-1) < pt^2$.
Now consider the color $X_i = \Color(e^\star)$ selected by the edge $e^\star \in R \cup R'$.
From the analysis  in Appendix~\ref{sect:concentration-deg}, we infer the following.
\begin{itemize}
\item If $e^\star \in R'$, then $|D_i| = O(1)$ and $\Var[D_i | \mathbf{X}_{i-1}] = O(w'(e^\star)/(pt))$. Hence we can set $\sigma_i^2 = O(w'(e^\star)/(pt))$.
\item If $e^\star \in R$,  then $|D_i| = O(1)$ and $\Var[D_i | \mathbf{X}_{i-1}] = O(1)$. Hence we can set $\sigma_i^2 = O(1)$.
\end{itemize}
Thus, $\sum_{j=\alpha+1}^{\alpha+\beta} \sigma_i^2 = O(t)$, as desired.

\end{document}